\newtheorem{theorem}{Theorem}[section]
\newtheorem{definition}[theorem]{Definition}
\newtheorem{example}[theorem]{Example}
\newtheorem{remark}[theorem]{Remark}
\newtheorem{notation}[theorem]{Notation}
\newtheorem{fact}[theorem]{Fact}
\newtheorem{model}[theorem]{CWC Modelling}
\newcommand{\CalculusShortName}{\textsc{CWC}}
\newcommand{\blank}{\sharp}
\newcommand{\qqop}[1]{\mathrel{\makebox[2em]{$#1$}}}
\newcommand{\agr}{\quad\big|\quad}
\newcommand{\mydots}{\cdot\cdot\cdot}
\newcommand{\AT}{\mathcal{A}}
\newcommand{\LT}{\mathcal{L}}
\newcommand{\AS}{\overline{\mathcal{A}}}
\newcommand{\TT}{\mathcal{T}}
\newcommand{\TS}{\overline{\mathcal{T}}}
\newcommand{\TSG}{\overline{\mathcal{T}_G}}
\newcommand{\Pat}{p}
\newcommand{\LeftPP}{\mathcal{P}}
\newcommand{\LeftPat}{p}
\newcommand{\OT}{\mathcal{O}}
\newcommand{\RightPat}{o}
\newcommand{\TOP}{\top}
\newcommand{\BR}{\textsc{BR}}
\newcommand{\VarOf}{\textit{Var}}
\newcommand{\LengthOf}[1]{|#1|}
\newcommand{\CC}{\mathcal{C}}
\newcommand{\QQ}{\mathcal{R}}
\newcommand{\RR}{\mathcal{S}}
\newcommand{\DD}{\mathcal{D}}
\newcommand{\BB}{\mathcal{B}}
\newcommand{\Ode}{\mathcal{E}}
\newcommand{\NN}{\mathcal{N}}
\newcommand{\OO}{\textsc{Occ}}
\newcommand{\ltrans}[1]{\stackrel{#1}{\longrightarrow}}
\def\mapstofill@{%
\arrowfill@{\mapstochar\relbar}\relbar\rightarrow}
\newcommand*\xmapsto[2][]{%
\ext@arrow 0000\mapstofill@{#1}{#2}}
\newcommand{\srewrites}[1]{\stackrel{#1}{\longmapsto}}
\newcommand{\xsrewrites}[1]{\xmapsto{#1}}
\newcommand{\into}{\ensuremath{\,\rfloor}\,}
\newcommand{\phole}{\square}
\newcommand{\conc}{\;\,}
\newcommand{\emptyseq}{\bullet}
\newcommand{\TSV}{\VV_{\TS}}
\newcommand{\ASV}{\VV_{\AS}}
\newcommand{\VV}{\mathcal{V}}
\newcommand{\mapstoDesug}{\mapsto}
\newcommand{\red}{\mapstoDesug}
\newcommand{\st}{simple term}
\newcommand{\sts}{simple terms}
\newcommand{\St}{Simple term}
\newcommand{\Sts}{Simple terms}
\newcommand{\short}{\CalculusShortName}
\newcommand{\ov}[1]{\overline{#1}}
\newcommand{\set}[1]{\{#1\}} 
\newcommand{\s}{\conc}
\renewcommand{\wr}[3]{( #1 \into #2)^{#3}}
\newcommand{\wsrew}[1]{\stackrel{#1}{\longmapsto}_{\sf w}}
\newcommand{\csrew}[1]{\stackrel{#1}{\longmapsto}_{\sf c}}
\newcommand{\srew}[1]{\stackrel{#1}{\longmapsto}}
\newcommand{\srewRule}[3]{#1\srew{#3}#2}
\newcommand{\OTS}{\overline{\mathcal{O}}}
\newcommand{\timesSilent}{}
\begin{document}

\mainmatter  

\title{Modelling Biological and Ecological Systems with the Calculus of Wrapped Compartments}

\author{\and Marco Aldinucci \and Livio Bioglio \and Cristina Calcagno \and Mario Coppo \and Ferruccio Damiani \and Maurizio Drocco \and Elena Grassi \and Pablo Ram\'{o}n \and Eva Sciacca \and Salvatore Spinella \and Angelo Troina}

\pagestyle{smallheadings}

\maketitle

\begin{acknowledgments}
This document collects and slightly revises and extends original research presented in~\cite{CDDGGT_TCSB11,spatial_COMPMOD11,BCCDSST11,CWCTCS,CMC13}.

A substantial part of the research has been founded by the BioBITs Project (\emph{Converging Technologies}, area: Biotechnology-ICT), Regione Piemonte (Italy), and has been carried on at the Dipartimento di Informatica, Universit\`a di Torino.

The material presented in Chapter~\ref{ecolog} has been originally developed while Angelo Troina was visiting Pablo Ram\'{o}n at the
Instituto de Ecolog\'ia at Universidad Tecnica Particular de Loja (Ecuador), in the context of the Prometeo program founded by SENESCYT.
\end{acknowledgments}

\begin{abstract}
The modelling and analysis of biological systems has deep roots in Mathematics, specifically in the field of Ordinary Differential Equations. Alternative approaches based on formal calculi, often derived from process algebras or term rewriting systems, provide a quite complementary way to analyse the behaviour of biological systems. These calculi allow to cope in a natural way with notions like compartments and membranes, which are not easy (sometimes impossible) to handle with purely numerical approaches, and are often based on stochastic simulation methods.

The Calculus of Wrapped Compartments is a framework based on stochastic multiset rewriting in a compartmentalised setting used for the description of biological and ecological systems.

We provide an extended presentation of the Calculus of Wrapped Compartments, sketch a few modelling guidelines to encode biological and ecological interactions, show how spatial properties can be handled within the framework and define a hybrid simulation algorithm. Several applications in Biology and Ecology are proposed as modelling case studies.\\
\newline
\textbf{Keywords}: \emph{Calculus of Wrapped Compartments, Stochastic Simulations, Biochemical Systems, Computational Ecology}
\end{abstract}


\tableofcontents


\chapter{Introduction}
\label{intro}
\section{Modelling Biological Systems}

The most common approaches used by biologists to describe biological systems have been mainly based on the use of deterministic mathematical means like, e.g., Ordinary Differential Equations (ODEs for short).
ODEs make it possible to abstractly reason on the behaviour of biological systems and to perform a quantitative \emph{in silico} investigation.
However, this kind of modelling becomes more and more difficult, both in the specification phase and in the analysis processes, when
the complexity of the biological systems taken into consideration increases. More recently, the observation that biological systems (for example in the case of chemical instability) are inherently stochastic \cite{science02}, has led a growing interest in the stochastic modelling of chemical kinetics.

Besides, the concurrently interacting structure of biological systems has inspired the possibility to describe them by means of formalisms developed in Computer Science for the description of computational entities~\cite{RS02}.
Different formalisms have either been applied to (or have been inspired from) biological systems. Automata-based models~\cite{ABI01,MDNM00} have the advantage of allowing the direct use of many verification tools such as model checkers.
Rewrite systems~\cite{DL04,P02,BMMT06} usually allow describing biological systems with a notation that can be easily understood by
biologists.
Process calculi, including those commonly used to describe biological systems~\cite{RS02,PRSS01,Car04}, have the advantage of being compositional, but their way of describing biological entities is often less intuitive.
Quantitative simulations of biological models represented with these kind of frameworks (see, e.g.~\cite{PRSS01,DPPQ06,KMT08,BMMTT08,DPR08,preQAPL2010}) are usually developed via a stochastic method derived from Gillespie's algorithm~\cite{G77}.

The ODE description of biological systems determines \emph{continuous} and {deterministic} models in which variables describe the concentrations of the species involved in the system as functions of the time. These models are based on average reaction rates, measured from real experiments which relate to the change of concentrations over time, taking into account the known properties of the involved chemicals, but possibly abstracting away some unknown mechanisms.

In contrast to the deterministic model, \emph{discrete} and {stochastic} simulations involve random variables. 
Since the basic steps of a molecular reaction are described in terms of their probability of occurrence, the behaviour
of a reaction is not determined a priori but characterised statistically. Thus, biological reactions fall in the category of stochastic systems and stochastic models for their kinetics are widely accepted as the best way to represent and simulate genetic and biochemical networks. In particular, when the system to be described is based on the interaction of few molecules, or we want to simulate the functioning of a little pool of cells, the system may expose several different behaviour, observed with different probabilities. As a consequence, the stochastic approach is always valid when the deterministic one is (i.e. when the system is stable and exposes only one possible behaviour), and it may be valid when the ordinary deterministic is not (i.e. in a nonlinear system in the neighbourhood of a chemical instability).

\section{Modelling Ecological Systems}

Answers to ecological questions could rarely be formulated as general laws: ecologists deal with \emph{in situ} methods and experiments which cannot be controlled in a precise way since the phenomena observed operate on much larger scales (in time and space) than man can effectively study. Actually, to carry on ecological analyses, there is the need of a ``macroscope''! 

Theoretical and Computational Ecology, the scientific disciplines devoted to the study of ecological systems using theoretical methodologies together with empirical data, could be considered as a fundamental component of such a macroscope. Within these disciplines, quantitative analysis, conceptual description techniques, mathematical models,  and computational simulations are used to understand the  fundamental biological conditions and processes that affect populations dynamics (given the underlying assumption that phenomena observable across species and ecological environments are generated by common, mechanistic processes)~\cite{Pie77}.

A model in the Calculus of Wrapped Compartments (CWC for short) consists of a term, representing a (biological or ecological) system and a set of rewrite rules which model the transformations determining the system's evolution~\cite{preQAPL2010,CDDGGT_TCSB11}. Terms are defined from a set of atomic elements via an operator of compartment construction. Each compartment is labelled with a nominal type which identifies the set of rewrite rules that may be applied into it.
The CWC framework is based on a stochastic semantics and models an exact scenario able to capture the stochastic fluctuations that can arise in the system.

We present some modelling guidelines to describe, within CWC, some of the main common features and models used to represent ecological interactions and population dynamics. A few generalising examples illustrate the abstract effectiveness of the application of CWC to ecological modelling.

\section{Motivation and Methodology}

At the beginning of the second half of the twentieth century, when ecology was still a young science and mathematical models for ecological systems were in their infancy, Elton~\cite{Elt58}, acknowledging the influence of Lotka \cite{Lot25} and Volterra~\cite{Vol26}, wrote: ``Being mathematicians, they did not attempt to contemplate a whole food--chain with all the complications of five stages. They took two: a predator and its prey''.
Nowadays, in the era of computational ecological modelling, deterministic systems based on ordinary differential equations for two variables, or even a whole food chain, appear like simple idealisations quite distant from the real complexity of nature. Predator--prey interactions are now considered  as ``consumer--resource'' interactions embedded within the large ecological networks that underlie biodiversity. Lotka--Volterra equations and their many descendants assume that individuals within a system are well mixed and interact at mean population abundances. They are mean-field equations that use the mass--action law to describe the dynamics of interacting populations, and ignore both the scale of individual interactions and their spatial distribution. However, because ecological systems are typically nonlinear, they often cannot be solved analytically, and, in order to obtain sensible results, nonlinear, stochastic computational techniques must be used.

The formal framework to be used as the modelling core of this project should thus be able to manage several features which are typical of ecological systems. Namely, complex ecological systems are multilevel, they follow non linear, stochastic dynamics and involve a distributed spatial organisation.

\subsection{Multilevel Modelling} The role of the computational methodology used to model and simulate ecological systems is to address questions on the relationship between systems dynamics at different temporal, spatial, and organisational (or structural) scales. In particular, it is important to address the variability at small, local scales and its effects on the dynamics of the aggregated quantities measured at large, global scales~\cite{Pas05}.

\subsection{Stochastic Modelling} Biological and ecological models can be deterministic or stochastic~\cite{Bol08}. Given an initial system, deterministic simulations always evolve in the same way, producing a unique output~\cite{SM90}. Deterministic methods give a picture of the average, expected behaviour of a system, but do not incorporate random fluctuations. On the other hand, stochastic models allow to describe the random perturbations that may affect natural living systems, in particular when considering small populations evolving at slow interactions. Actually, while deterministic models are approximations of the real systems they describe, stochastic models, at the price of an higher computational cost, can describe exact scenarios. Stochastic models, such as interacting particle systems, can also help us examine new approaches for scaling up individual--based dynamics.\footnote{Note that the impact of stochastic factors and the corresponding level of a system's uncertainty are much higher in Ecology than in other natural sciences. Common sources of uncertainty are, e.g., the poor accuracy of ecological data and their transient nature. Noise that is inevitably present in ecosystems can significantly change the properties of an ecological model and this fundamental uncertainty affects the accuracy of ecological data \cite{PP12}.}

\subsection{Spatial Modelling} The impact of space tends to make a system's dynamics significantly more complicated compared with its non--spatial counterpart and to bring new and bigger challenges to simulations. Formal models dealing explicitly with spatial coordinates are able to depict more precise localities in a biological system and/or \emph{ecological niches}, describing, for example, how organisms or populations respond to the distribution of resources and competitors~\cite{LB98}.

\section{The Calculus of Wrapped Compartments} While the Calculus of Wrapped Compartments has been originally developed to deal with biomolecular interactions and cellular communications, it appears to be particularly well suited also to model and analyse interactions in ecology.
The Calculus of Wrapped Compartments satisfies the main requirements addressed in the previous sections. Namely, CWC is able to model and simulate: (i) multilevel systems, (ii) stochastic dynamics, (iii) explicit spatial systems.

The compartment operator of the calculus can be used to describe the topological organisation of a systems. It also allows to deal with multilevel systems by defining different set of rules for different compartments, reflecting the interactions taking place at the different levels of the system.

The evolution of a system described in CWC follows a stochastic simulation model defined by incorporating a collision-based framework along the lines of the one presented by Gillespie in~\cite{G77}, which is, \emph{de facto}, the standard way to model quantitative aspects of biological systems. The basic idea of Gillespie's algorithm is that a rate is associated with each considered reaction. This rate is used as the parameter of an exponential probability distribution modelling the time needed for the reaction to take place. In the standard approach, the reaction \emph{propensity} is obtained by multiplying the rate of the reaction by the number of possible combinations of reactants in the compartment in which the reaction takes place, modelling the law of mass action.

The calculus has been extensively used to model real biological scenarios, in particular related to the AM-symbiosis~\cite{CDDGGT_TCSB11,spatial_COMPMOD11}.\footnote{Arbuscular  Mycorrhiza (AM) is a class of fungi constituting a vital mutualistic interaction for terrestrial ecosystems. More than 48\% of land plants actually rely on mycorrhizal relationships to get inorganic compounds, trace elements, and resistance to several kinds of pathogens.} An hybrid semantics for CWC, combining stochastic transitions with deterministic steps, modelled by Ordinary Differential Equations, has been proposed in~\cite{HCWC_mecbic10,CWCTCS}.

A spatial extension of CWC has been proposed in~\cite{BCCDSST11}, incorporating a two--dimensional spatial description of the elements in the system through axial coordinates and special rules for the movement of system components in space. The spatial extension of the calculus can be generalised to deal with spaces defined in more than two dimensions.

\section{Case Studies}

In Chapter~\ref{case} we present a CWC model describing a newly discovered ammonium
transporter. This transporter is believed to play a fundamental
role for plant mineral acquisition, which takes place in the
arbuscular mycorrhiza, the most wide-spread plant-fungus symbiosis
on earth.

In our experiments the passage of NH3 / NH4+ from the fungus to
the plant has been dissected in known and hypothetical mechanisms;
with the model so far we have been able to simulate the behaviour
of the system under different conditions. Our simulations
confirmed some of the latest experimental results about the
LjAMT2;2 transporter. The initial simulation results of the
modelling of the symbiosis process are promising and indicate new
directions for biological investigations.

Our second case study is taken from Ecology: in Section~\ref{Sec:Croton}, we model within CWC the distribution of height of \emph{Croton wagneri}, a shrub in the dry ecosystem of southern Ecuador, and investigate how it could adapt to global climate change.

In Chapter~\ref{spatial} we explore a few scenarios in which topological properties of the system are to be taken into account.

In Chapter~\ref{hybrid} we present an hybrid simulation algorithm for the CWC framework and apply it to a variant of Lotka-Volterra dynamics and an HIV-1 transactivation mechanism.

\chapter{The Calculus of Wrapped Compartments}
\label{cwc}
Like most modelling languages based on term rewriting (notably CLS), a CWC (biological) model consists of a term, representing the system and a set of
rewrite rules which model the transformations determining the system's evolution. Terms are defined from a set atomic elements via an operator of
compartment construction. Compartments are enriched with a nominal type, represented as a label, which identifies the set of rewrite rules that may be
applied to them.

\section{Terms and Structural Congruence}\label{CLS_syntax}

\emph{Terms} of the \short\ calculus are intended to represent a biological system.
A \emph{term} is a multiset of \emph{\st s}. \Sts, ranged over by $t$,
$u$, $v$, $w$, are built by means of the \emph{compartment} constructor, $(-\into -)^-$, from a set $\AT$ of \emph{atomic elements} (\emph{atoms} for short), ranged over by $a$, $b$, $c$, $d$, and from a set $\LT$ of \emph{compartment types} (represented as \emph{labels} attached to compartments),
ranged over by $\ell,\ell',\ell_1,\ldots$ and containing a distinguished element $\top$ which  characterizes the top level compartment. The syntax of {\st s} is given in Figure~\ref{fig:CWM-syntax}. We write $\overline{t}$ to
denote a (possibly empty) multiset of \sts\ $t_1\mydots t_n$. Similarly, with $\overline{a}$ we denote a (possibly empty) multiset of atoms. The set of simple terms will be denoted by $\TT$.

Then, a \st\ is either an atom or a compartment $(\overline{a}\into \overline{t})^\ell$ consisting of a \emph{wrap} (represented by the multiset of atoms
$\overline{a}$), a \emph{content} (represented by the term $\overline{t}$) and a \emph{type} (represented by the label $\ell$). Note that we do not allow
nested structures in wraps but only in compartments.
%
We write  $\emptyseq$ to represent the empty multiset 
 and denote the union of two multisets $\overline{u}$ and $\overline{v}$ as
$\overline{u}\conc\overline{v}$. The notion of inclusion between multisets, denoted as usual by $\subseteq$, is the natural extension of the analogous
notion between sets. The set of terms (multisets of simple terms) and the set of multisets of atoms will be denoted by $\TS$ and $\AS$, respectively. Note
that $\AS \subseteq \TS$.

Since a term $\overline{t}=t_1\mydots t_n$ is intended to represent a multiset we introduce a relation
of structural congruence between terms of CWC defined as the least equivalence relation on terms satisfying the rules given in Figure~\ref{fig:CWM-syntax}. From now on we will always consider terms modulo structural
congruence. To denote multisets of atomic elements we will sometime use the compact notation $n a$ where $ a $ is an atomic element and $ n $ its multiplicity so for instance $3a\,2b$ is a notation for the multiset $a\,a\,a\,b\,b$.

\begin{figure}
\hrule $\;$ \\
\textbf{\St s syntax}
\\
 $
 \begin{array}{lcl}
  t & \;\qqop{::=}\; & a \!\agr\! (\overline{a}\into\overline{t})^\ell
   \\
 \end{array}
 $
\\
$\;$
\\
\textbf{Structural congruence}
\\
$
\begin{array}{l}
\overline{t} \conc u \conc w \conc \overline{v} \equiv
\overline{t} \conc w \conc u \conc \overline{v} \\  
  \mbox{if } \; \ov{a}\equiv\ov{b}~~\text{and}~~\ov{t}\equiv\ov{u} ~~\mbox{ then }~~
    (\ov{a}\into\ov{t})^\ell\equiv(\ov{b}\into\ov{u})^\ell
%
%
\\
\end{array}
$ 
 \caption{ \CalculusShortName\ term syntax and structural congruence rules}
\label{fig:CWM-syntax}
\end{figure}

An example of term is $\ov{t} = 2a \conc 3b \conc (c \conc d \into e \conc f)^\ell$ representing a multiset consisting of two atoms $a$ and three $b$ (for instance five
molecules) and an $\ell$-type compartment $(c \conc d \into e \conc f)^\ell$ which, in turn, consists of a wrap (a membrane) with two atoms $c$ and $d$
(for instance, two proteins) on its surface, and containing the atoms $e$ (for instance, a molecule) and $f$ (for instance a DNA strand whose functionality can be modelled as an atomic element). See
Figure~\ref{fig:example CWM} for some graphical representations.



\begin{figure*}
\centering
\subfigure[] {
\includegraphics[height=28mm]{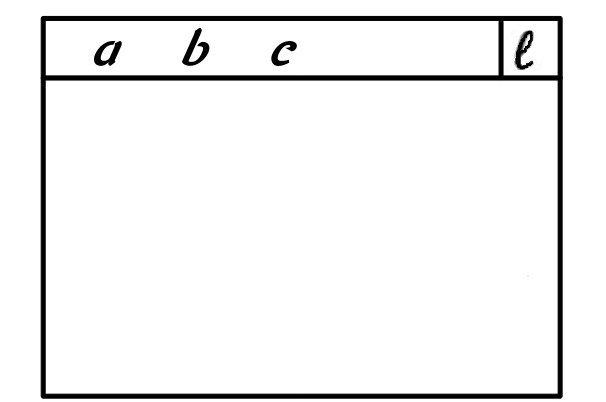}
}
\centering
\subfigure[] {
\includegraphics[height=28mm]{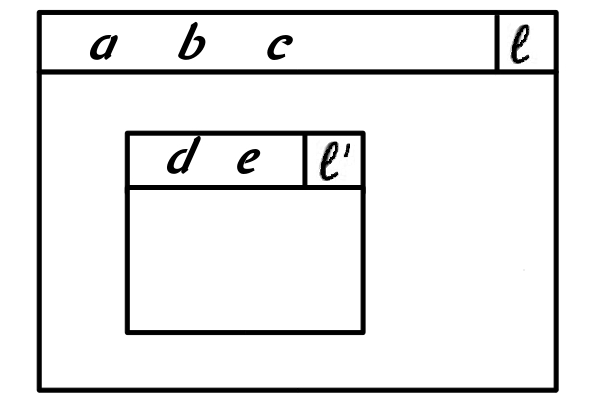}
}
\subfigure[] {
\includegraphics[height=28mm]{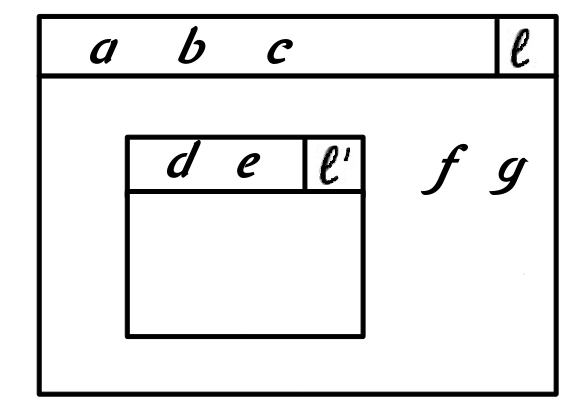}
}

\caption{\textbf{(a)} represents $(a \conc b \conc c \into \emptyseq)^\ell$; \textbf{(b)} represents $(a \conc b \conc c  \into (d \conc e \into
\emptyseq)^{\ell'})^\ell$; \textbf{(c)} represents $(a \conc b \conc c \into (d \conc e \into \emptyseq)^{\ell'} \conc f \conc g)^\ell$}
\label{fig:example CWM}
\end{figure*}
\medskip

\begin{notation}[Top-level compartment]\label{nota:WholeTerm} For sake of uniformity we assume that the term representing the whole
system is always a single compartment labelled $\top$ with an empty wrap, i.e., all systems are represented by a term of the shape
$\wr{\emptyseq}{\ov{t}}{\top}$, which we will also write as $\ov{t}$ for simplicity.
\end{notation}

\section{Contexts}
The notion of reduction for CWC systems is formalized via the notion of reduction context. To define them, the syntax of terms is enriched with a new
element $\phole$ representing a hole which can be filled only by a \emph{single} compartment. \emph{Reduction contexts} (ranged over by $C$) are defined
by:
$$
 C  \;\qqop{::=}\; \phole  \agr (\overline{a} \into C \conc \overline{t})^\ell
$$
where $\overline{a} \in \AS$, $\overline{t}\in \TS$ and $\ell \in \LT$.
Note that, by definition, every context contains a single hole $\phole$.
The infinite set of contexts is denoted by $\CC$.

Given a compartment $t$ and a context $C$, the compartment obtained by filling the hole in $C$ with $t$ is denoted  by $C[t]$. For instance, if
$t=\wr{\ov{a}}{\ov{u}}{\ell}$ and $C=\wr{\ov{b}}{\phole \conc \ov{c}}{\ell'}$, then $C[t]=\wr{\ov{b}}{\wr{\ov{a}}{\ov{u}}{\ell} \conc \ov{c}}{\ell'}$.

The composition of two contexts $C$ and $C'$, denoted by $C[C']$, is the context obtained by replacing $\phole$ with $C'$ in $C$. For example, given $C=(a
\into \phole\conc b )^\ell$, $C'= (c \into \phole \conc d\conc e)^{\ell'}$, we get $C[C']= (a \into  (c \into \phole \conc d \conc
e)^{\ell'} \conc b)^\ell$.

\section{Rewrite Rules and Qualitative Reduction Semantics}
A rewrite rule is defined by a pair of compartments (possibly containing variables), which represent the patterns along which the system transformations
are defined. The choice of defining rules at the level of compartments allows to simplify the formal treatment, allowing a uniform presentation of the
system semantics.




In order to formally define the rewriting semantics, we introduce the notion of open term (a term containing variables) and pattern (an open term that may
be used as left part of a rewrite rule). To respect the syntax of terms, we distinguish between ``wrap variables'' which may occur only in compartment
wraps (and can be replaced only by multisets of atoms) and ``content variables'' which may only occur in compartment contents or at top level (and can be
replaced by arbitrary terms)

 Let $\TSV$ be a set of \emph{content variables}, ranged over by $X,Y,Z$, and $\ASV$ a set of \emph{wrap variables}, ranged over by $x,y,z$ such that $\TSV \cap \ASV = \emptyset$.
 We denote by $\VV$ the set of all variables $\TSV \cup \ASV$, and with $\rho$ any variable in $\VV$. \emph{Open terms} are terms
 which may contain occurrences of wrap variables in compartment
 wraps and content variables in compartment contents or at top level.
  Similarly to terms, open terms are defined as multisets $\ov{o}$ of simple open terms defined in the following way:
\[\begin{array}{lcl}
o           & \; \qqop{::=} \; & a \agr X \agr   (\overline{q}  \into \overline{o} )^\ell
\\
q  & \; \qqop{::=} \; &  a \agr x
\end{array}  \]
(i.e. $\ov{q}$ denotes a multiset formed only of atomic elements and wrap variables).
 Let $\OT$ and $\OTS$ denote the set of simple open terms and the set of open terms (multisets of simple open terms), respectively.
 An open term is \emph{linear} if each variable occurs in it at most once.

An \emph{instantiation} (or substitution) is defined as a partial function $\sigma : \VV \rightarrow \TS$. An instantiation must preserve the type of
variables, thus for $X \in \TSV$ and $x \in \ASV$ we have $\sigma(X) \in \TS$ and $\sigma(x) \in \AS$, respectively.  Given $\ov{o}\, \in \OTS$, with
$\ov{o}\, \sigma$ we denote the term obtained by replacing each occurrence of each variable $\rho \in \VV$ appearing in $\ov{o}$ with the corresponding
term $\sigma(\rho)$.

Let $\Sigma$ denote the set of all the possible instantiations and $\VarOf(\ov{o})$ denote the set of variables appearing in $\ov{o}\,\in \OTS$.


 To define the rewrite rules, we first introduce the notion of patterns, which are particular simple open terms representing the left hand side of a rule.
\emph{Patterns}, ranged over by $\LeftPat$, are the linear simple open terms defined in the following way:
$$
\begin{array}{lcl}
   \LeftPat & \;\qqop{::=}\; & (\overline{a} \conc x \into \overline{b} \conc \overline{\LeftPat}\conc X)^\ell 
\end{array}
$$
where $\overline{a}$ and $\overline{b}$ are multisets of atoms, $\overline{\LeftPat}$ is a multiset of pattern, $x$ is  a wrap variable, $X$ is a content
variable and the label $\ell$ is called the \emph{type of the pattern}.
 The set of patterns is denoted by $\LeftPP$.  Patterns are intended to match with compartments. Note that we force \emph{exactly} one variable to occur in each compartment content and wrap.
 This prevents ambiguities in the instantiations needed to match a given compartment.\footnote{The presence of two (or more) variables  in the same compartment content or wrap, like in $\wr{x}{a \conc X \conc Y}{\ell}$, would introduce the possibility of matching the same path in different although equivalent ways. For instance we could match a term $\wr{x}{a \conc a \conc b \conc b}{\ell}$ with $X = a,~Y = b \conc b$ or $X = a \conc b,~Y =  b$, etc.}
 The linearity condition, in biological terms, corresponds to excluding that a transformation can depend on the presence of two (or more) identical (and generic) components in different compartments (see also~\cite{OP11}).

Some examples of patters are:
\begin{itemize}
 \item $\wr{x}{a \s b\s X}{\ell}$ which matches with all compartments of type $\ell$ containing at least one occurrence of $a$ and one of $b$.
 \item  $\wr{x}{\wr{a \s y}{Y}{\ell_1} \s X}{\ell_2}$ which matches with compartments of type $\ell_2$ containing a compartment of type $\ell_1$ with at least an
 $a$ on its wrap.
 \end{itemize}

A \emph{rewrite rule} is a pair $(\LeftPat,\RightPat)$, denoted by $\srewRule{\LeftPat}{\RightPat}{}$, where $\LeftPat= (\overline{a} \conc x \into
\overline{b} \conc \overline{\LeftPat}\conc X)^\ell$ $\in$ $\LeftPP$ and $\RightPat = \wr{\ov{q}}{ \ov{\RightPat}}{\ell} \in \OT$ are such that $\VarOf(\RightPat) \subseteq
\VarOf(\LeftPat)$. Note that rewrite rules must respect the type of the involved compartments.
 A rewrite rule $\srewRule{\LeftPat}{\RightPat}{}$ then states that a compartment $\LeftPat \sigma$,
obtained by instantiating variables in $\LeftPat$ by some instantiation function $\sigma$, can be transformed into the compartment $\RightPat\sigma$ with
the same type $\ell$ of $\LeftPat$. Linearity is not required in the r.h.s. of a rule, thus allowing duplication.

A \emph{\short\ system} over a set $\AT$ of atoms and a set $\LT$ of labels is represented by a set $\QQ_{\AT,\LT}$ ($\QQ$ for short when $\AT$ and $\LT$
are understood) of rewrite rules over $\AT$ and $\LT$.

A \emph{transition} between two terms $t$ and $u$ of a  \short\ system $\QQ$ (denoted $t \ltrans{} u$) is defined by the following rule:
$$
\frac{ \srewRule{\LeftPat}{\RightPat}{} \in \QQ \quad\sigma \in \Sigma
           \quad C\in \CC }   
    { C[\LeftPat\sigma] \ltrans{}  C[\RightPat\sigma]}
$$
where $C[\LeftPat\sigma] = t$ and $C[\RightPat\sigma] = u$ (modulo $\equiv$).

In a rule $\Pat \! \srewrites{} \! o$ the pattern $\Pat$ represents a compartment containing the reactants of the reaction that will be simulated.
 The crucial
point for determining an application of the rule to the whole biological
ambient $t$ is the choice of the occurrences of compartments
matching with $\Pat$ (determining the
compartment in which this reaction will take place).

By our definitions of contexts, patterns, and substitutions, in a match $t=C[p\sigma]$ the substitution $\sigma$ defines the closer environment around the
pattern, while the context $C$ defines the more external environment. Note that the applicability of a rewrite rule depends on the type of the involved
compartments but not on the context in which it occur.  This correspond to the assumption that only the compartment type can influence the kind of
reaction that takes place in them but not their position in the system.

As shown in the following example, a same pattern can have more then one match in a term.

\begin{example}\label{ex:diff_cont}
In Figure~\ref{fig:ex_cont} we depict the matching of the pattern $p$ $=$ $\wr{ x}{a \conc b\conc X}{\ell}$  to the term $t = \wr{\emptyseq}{5a \conc 5b
\conc 10c \conc \wr{12m}{23a \conc 6b \conc 10c}{\ell} \conc \wr{14m}{17a \conc 6b \conc 10c}{\ell}}{\top}$. In the figure, the two different contexts and
instantiations are showed for this matching. In particular, we depicted in red the matching done by the substitutions and in black the role played by the
context. Namely, we have:
\begin{itemize}
\item $\sigma(x)=12m$ and $\sigma(X)=22a \conc 5b \conc 10c$,
\item $\sigma'(x)=14m$ and $\sigma'(X)=16a \conc 5b \conc 10c$,
\item $C=\wr{\emptyseq}{5a \conc 5b \conc 10c
\conc \phole \conc \wr{14m}{17a \conc 6b \conc 10c}{\ell} } {\top}$,
\item $C'=\wr{\emptyseq}{5a \conc 5b \conc 10c
\conc \wr{12m}{23a \conc 6b \conc 10c}{\ell} \conc \phole}{\top}$.
\end{itemize}
In blue we depicted the actual elements triggering the match.
\end{example}

\begin{figure*}
\centering
\subfigure[] {
\includegraphics[height=30mm]{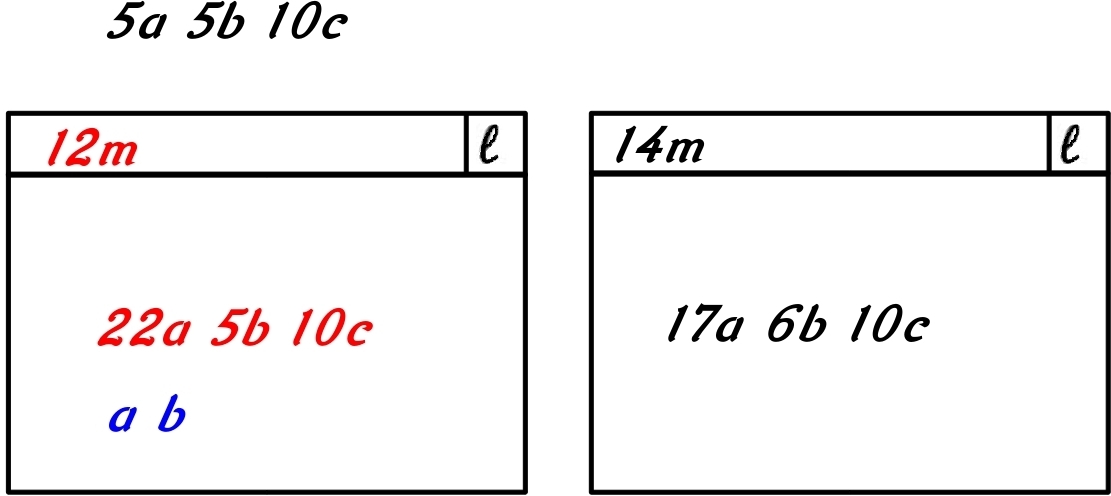}
}

\subfigure[] {
\includegraphics[height=30mm]{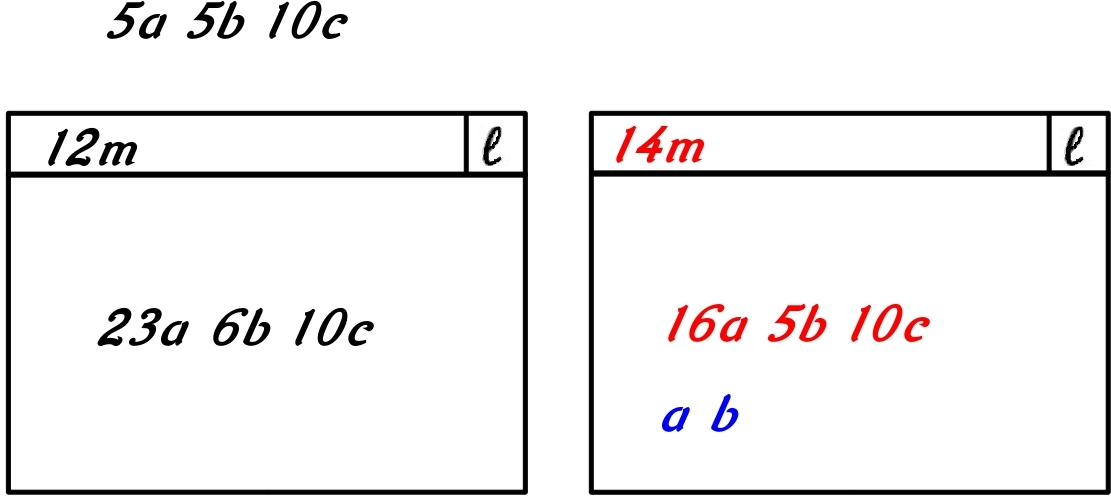}
}

\caption{\textbf{(a)} $\ov{t}=C[p\sigma]$ ; \textbf{(b)} $\ov{t}=C'[p\sigma']$ }
\label{fig:ex_cont}
\end{figure*}

\begin{remark} \label{DifferentOutcomes}   
For some rewrite rules $\ell : \LeftPat \srewrites{}\RightPat$ there may be, in general, different substitutions $\sigma$  such that
$\LeftPat\sigma \equiv t$ (for some term $t$)  but the results $\RightPat\sigma$ produced by them are different. Consider, for instance,
 the rewrite rule $(y \into a \conc (b \conc x \into X)^\ell\conc Y)^\top  \srewrites{} \wr{y}{(a\conc
b \conc x \into X)^\ell\conc Y)}\top$ modelling a catalyzed membrane joining at top level. In this case, a term $t= \wr{\emptyseq}{a \conc (b \conc b
\into c)^\ell \conc (b \into c)^\ell}{\top}$ can make a transition in two different terms, depending on which membrane will be joined by the element $a$.
Namely, $\wr{\emptyseq}{ (a \conc b \conc b \into c)^\ell \conc (b \into c)^\ell}{\top}$, given an instantiation $\sigma$ such that $\sigma(x)=b$ and
$\wr{\emptyseq}{(b \conc b \into c)^\ell \conc (a \conc b \into c)^\ell}{\top}$, given an instantiation $\sigma'$ such that $\sigma'(x)=\emptyseq$.
We remark that this can happen only when compartments are involved in the rewriting. We will need to take it into account in the stochastic approach.
\end{remark}

\begin{notation}[Rules that involve only content or wrap]\label{nota:ShortForRules}
Usually, rules involve only the content or the wrap of a compartment. Moreover, in a rule $\wr{\ov{a} \conc
x}{\overline{b} \conc \overline{\Pat}\conc X}{\ell} \! \srewrites{} \! \wr{\ov{q}}{ \ov{o}}{\ell} $ very often $X$ has single occurrence, at top level, in $\ov{o}$ and $x$ in
$\ov{q}$. We therefore introduce the following notations:
\begin{itemize}
\item
$\ell: \overline{a} \conc \overline{\LeftPat}   \csrew{}   \ov{o}$
(or simply $\ell: \overline{a} \conc \overline{\LeftPat}   \srew{}   \ov{o}$) as a short for
 $\wr{x}{\overline{a} \conc \overline{\LeftPat}\conc X}{\ell}   \srew{}  \wr{x}{ \ov{o} \conc X}{\ell} $, and
 \item
  $\ell: \overline{a}  \wsrew{}   \ov{b}$ as a short for
 $\wr{\overline{a} \conc x}{X}{\ell}   \srew{}   \wr{\ov{b} \conc x}{ X}{\ell} $
\end{itemize}
where $x$ and $X$ are canonically chosen variables not occurring in $\overline{a}$, $\overline{\LeftPat}$, $\overline{o}$ or $\overline{b}$.
Note that, according to Notation~\ref{nota:WholeTerm}, rules of the shape $\top: \overline{a}  \wsrew{}   \ov{b}$ are forbidden
(since the top level compartment must always have an empty wrap).
\end{notation}

\section{Modelling Guidelines}\label{sec:modelguide}
In this section we  give some explanations and general hints about how \short\ could be used to represent the behaviour of various biological systems.
Here, entities are
represented by terms of the rewrite system, and events by
rewrite rules.

First of all, we should select the biomolecular entities of
interest. Since we want to describe cells, we consider molecular
populations and membranes. Molecular populations are groups of
molecules that are in the same compartment of the cells and inside
them. As we have said before, molecules can be of many types: we
classify them as proteins, chemical moieties and other molecules.

Membranes are considered as elementary objects: we do not describe
them at the level of the phospholipids they are made of. The only
interesting properties of a membrane are that it may have a
content (hence, create a compartment) and that in its phospholipid
bilayer various proteins are embedded, which act for example as
transporters and receptors. Since membranes are represented as
multisets of the embedded structures, we are modeling a fluid
mosaic in which the membranes become similar to a two-dimensional
liquid where molecules can diffuse more or less
freely~\cite{SN72}.

Compartment labels are useful to identify the kind of a compartment. For example, we may use compartment labels to denote a nucleus within a cell, the different organelles, etc..

Table~\ref{tab:guidelines-events} lists the guidelines for the abstraction into CWC rules of some basic biomolecular
events, some of which will be used in our applications.\footnote{The prefixes specifying the labels associated to each rule are omitted
for simplicity, just notice that each rule shown in the table can be specified to apply only within a given type of compartment.} Entities
are associated with CWC terms: elementary objects (genes, domains, etc...) are modelled as atoms, molecular populations as CWC terms, and membranes as
atom multisets. Biomolecular events are associated with CWC rewrite rules.

\begin{table}[t]
\begin{center}
\begin{footnotesize}
\begin{tabular}{|l|l|}
\hline
{\bf Biomolecular Event} & {\bf \short\ Rewrite Rules} \\
\hline \hline State change (in content) &
    $a \csrew{} b$ \\
\hline State change (on membrane) &
    $ a \wsrew{} b $ \\
    \hline Complexation (in content) &
    $a \conc b \csrew{} c$ \\
    \hline Complexation (on membrane) &
    $a \conc b \wsrew{} c$ \\
      &
        $ a \conc (b \conc x \into X)^{\ell} \csrew{}  (c \conc x \into X)^{\ell} $ \\
    &  $ (b \conc x \into a \conc X)^{\ell} \csrew{}  (c \conc x \into X)^{\ell} $ \\
    \hline
    Decomplexation (in content) &
    $c \csrew{} a \conc b$
    \\
\hline Decomplexation (on membrane) &
       $c \wsrew{} a \conc b$
      \\
     & $ (c \conc x \into X)^{\ell} \csrew{}  a \conc (b \conc x \into X)^{\ell} $
    \\
    &  $ (c \conc x \into X)^{\ell} \csrew{}  (b \conc x \into a \conc X)^{\ell} $
    \\
\hline Membrane crossing &
    $ a \conc (x \into X)^{\ell} \csrew{}  (x \into a \conc X)^{\ell} $\\
    & $ (x \into a \conc X)^{\ell} \csrew{}  a \conc (x \into X)^{\ell} $\\
\hline
Catalyzed membrane crossing &  $ a \conc (b \conc x \into X)^{\ell} \csrew{}  (b \conc x \into a \conc X)^{\ell} $\\
 & $ (b \conc x \into a \conc X)^{\ell} \csrew{}  a \conc (b \conc x \into X)^{\ell} $\\
\hline Membrane joining &
    $ a \conc (x \into  X)^{\ell} \csrew{}  (a \conc x \into X)^{\ell} $\\
    & $ (x \into a \conc X)^{\ell} \csrew{} (a \conc x \into X)^{\ell}$\\
\hline Catalyzed membrane joining &
    $ a \conc (b\conc x \into X)^{\ell} \csrew{} (a \conc b \conc x \into X)^{\ell} $ \\
 &
    $ (b\conc x \into a \conc X)^{\ell} \csrew{} (a \conc b \conc x \into X)^{\ell} $ \\
    & $ (x \into a \conc b \conc X)^{\ell} \csrew{}(a \conc x \into b \conc X)^{\ell} $\\
\hline Compartment state change &
    $  ( x \into X)^{\ell} \csrew{} ( x \into X)^{\ell'} $ \\
\hline
\end{tabular}
\end{footnotesize}
\end{center}
\caption{Guidelines for modelling biomolecular events in \short}\label{tab:guidelines-events}
\end{table}

The simplest kind of event is the change of state of an elementary object. Then, there are interactions between molecules: in particular complexation,
decomplexation and catalysis. Interactions could take place between simple molecules, depicted as single symbols, or between membranes and molecules: for
example a molecule may cross or join a membrane. There are also interactions between membranes: in this case there may be many kinds of interactions
(fusion, vesicle dynamics, etc\ldots). Finally, we can model a state change of a compartment (for example a cell moving onto a new phase during the cell
cycle), by updating its label.\footnote{Note that in this case it is not the label of the pattern external compartment that is changing (which is omitted
here for simplicity).} Changing a label of a compartment implies changing the set of rules applied to it. This can be used, e.g., to model the different
activities of a cell during the different phases of its cycle.

\section{Turing Completeness}\label{sect:Turing}

The  CWC is Turing Complete. In the following we sketch how
Turing Machines can be simulated by  CWC models.

\begin{theorem}[Turing Completeness] The class of  CWC models is Turing complete.
\end{theorem}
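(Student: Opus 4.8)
The plan is to simulate an arbitrary deterministic Turing machine $M$ by a \short\ system, exhibiting a faithful step-by-step correspondence between the runs of $M$ and the reductions of the encoding. The first thing to observe is that the naive idea of recording the tape contents in the \emph{multiplicities} of atoms (as one would for a counter/register machine) cannot work on its own: without compartments, \short\ rewriting is a plain multiset rewriting system, hence no more expressive than Petri nets, for which reachability is decidable and which cannot perform a zero-test. I would therefore exploit exactly the feature that lifts \short\ above Petri nets, namely the \emph{unbounded nesting} of compartments, and encode the tape \emph{structurally}.

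Concretely, I would represent a configuration of $M$ with head in state $q$ scanning a symbol $s$ as a top-level term $q \conc s \conc L \conc R$, where $L$ and $R$ are two ``stacks'' holding, respectively, the portion of the tape to the left of the head (innermost-first) and to the right. Each stack is a chain of nested compartments carrying one tape symbol per level: a right stack with exposed top symbol $a$ has the shape $\wr{\emptyseq}{a \conc R'}{\ell_R}$, where $R'$ is the (single) compartment encoding the remainder, and dually for $L$ with label $\ell_L$. The bottoms of the two chains carry distinguished sentinel atoms $e_L, e_R$ marking the ends of the used tape, and the whole term sits, as prescribed by Notation~\ref{nota:WholeTerm}, inside the $\top$ compartment. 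Crucially, the state is recorded by a \emph{single} atom $q$, which occurs exactly once in the entire term; this both pins down the compartment where a rule may fire and guarantees that the simulation inherits the determinism of $M$.

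Each instruction $\delta(q,s)=(q',s',R)$ of $M$ is then realised by finitely many top-level rewrite rules, one per symbol that can be uncovered on moving the head. In the shorthand of Notation~\ref{nota:ShortForRules}, a right-moving step that uncovers $a$ is
\[
\begin{aligned}
\top :\; & q \conc s \conc \wr{y_L}{Z_L}{\ell_L} \conc \wr{y_R}{a \conc \wr{y_R'}{Z_R}{\ell_R}}{\ell_R} \\
\csrew{}\; & q' \conc a \conc \wr{\emptyseq}{s' \conc \wr{y_L}{Z_L}{\ell_L}}{\ell_L} \conc \wr{y_R'}{Z_R}{\ell_R}
\end{aligned}
\]
that is: write $s'$ by \emph{pushing} it onto the left stack (wrapping the old $L$ in a fresh $\ell_L$-layer), \emph{pop} the top symbol $a$ off the right stack, promote the remainder $R'$ to be the new right stack, and update the state to $q'$; left-moving steps are symmetric with $L$ and $R$ interchanged. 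These rules respect the pattern discipline of Section~\ref{CLS_syntax}: each compartment carries exactly one wrap and one content variable (the obligatory ones omitted above simply bind to $\emptyseq$ on well-formed configurations), patterns are linear, and the right-hand sides only rebuild compartments from their matched wrap/content variables. Moving the head is thus an instance of the membrane-crossing/joining patterns collected in Table~\ref{tab:guidelines-events}.

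Correctness would be established by proving the invariant that reachable terms are exactly the encodings of configurations of $M$, together with a one-step simulation lemma: $\langle C \rangle \ltrans{} \langle C' \rangle$ whenever $C \vdash C'$ in $M$, and no other reduction is enabled (by uniqueness of the state atom and of its match). Halting of $M$ corresponds to reaching a term in which the state atom is a final one and no rule applies, from which the tape contents can be read back off the two stacks; thus $M$ halts on an input iff the encoding of its initial configuration reaches the corresponding terminal normal form. The main obstacle I anticipate is \emph{not} the bookkeeping of head motion but the treatment of the two tape boundaries: since \short\ (like any multiset-rewriting formalism) cannot test for the \emph{absence} of a sub-term, the lazy allocation of fresh blank cells when the head runs off the used portion of the tape must be triggered by the \emph{presence} of the sentinels $e_L, e_R$. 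I would therefore add, for each state, dedicated rules that fire on a sentinel, read a blank $\blank$, and regenerate the sentinel one level deeper; verifying that exactly one of ``ordinary move'' and ``extend-tape'' rules is ever enabled at each end — so that determinism and the structural invariant are preserved — is the delicate part of the argument.
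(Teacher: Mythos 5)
Your proof is correct, and it rests on the same essential insight as the paper's — the tape must be encoded \emph{structurally} in the unbounded nesting of compartments, with a unique state atom driving a deterministic per-transition rule set and lazy, sentinel-triggered allocation of blanks at the tape ends — but the concrete encoding is recognizably different. The paper uses a \emph{single} chain of nested compartments, all with the same label $\ell$, one tape symbol per \emph{wrap}, ordered from the left end marker $l$ outermost to the right end marker $r$ innermost; the head position is encoded by where the state atom sits inside the nesting (as a sibling of the compartment whose wrap carries the scanned symbol), so a right move is a single local rule at label $\ell$ that rewrites one wrap and pushes the state atom one level deeper, e.g.\ $\ell: s~(b~y\into (x \into Z)~ Y) \red (c~y \into s' ~ (x \into Z) ~ Y)$, with a second sentinel rule for the case $(r \into Z)$. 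You instead keep the head at the top level and split the tape into two stacks, which buys you symmetry between left and right moves and keeps every rule at $\top$, at the price of re-wrapping a compartment on each step (pushing onto one stack, promoting the inner compartment of the other); in the paper's layout the tape structure stays fixed and only the state atom migrates, but a left move needs a pattern that reaches one nesting level \emph{above} the state, which the paper's sketch does not spell out. Your preliminary observation that compartment-free CWC is mere multiset rewriting (Petri-net-like, hence not Turing complete) is a useful motivation absent from the paper, and your identified delicate point — ensuring that exactly one of the ordinary-move and extend-tape rules is enabled at each end — is indeed the same issue the paper flags in its footnote about rule (1) being spuriously applicable at the right end.
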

\begin{proof} (Sketch)
A Turing machine $T$ over an alphabet $\Sigma\cup\set{\blank}$
(where $\blank$ represents the blank) with a set $S$ of states can be
simulated by a CWC system $\QQ_T$ in the following way. \\
Take $\AS = S \cup\Sigma \cup \set{\blank, l, r}$, where $l$ and $r$ are special symbols to represent the left and right ends of the tape. The tape of the
Turing machine can be represented by a sequence of nested compartments, with the same label $\ell$, whose wraps consist of a single atom (representing a symbol of the tape). The
content of each compartment defined in this way represents a right suffix of the written portion of the tape, while the atom on the wrap represents its
initial (w.r.t. the suffix) symbol. In each term representing a tape there is exactly one compartment which contains a state (the present state). For
example, the term $(l \into ( a \into s ~(b \into (r\into \emptyseq))))$ represents the configuration
in which the tape is $...\blank,a,b,\blank....$ (the remaining positions are blank), the machine is in state $s$ and the head is
positioned on $b$.\footnote{For simplicity we omit compartment labels.}
Rewrite rules are then used to model the machine evolution. We can define rules creating new blanks when needed (at the tape ends) to mimic a possibly
infinite tape. For instance a transition $(s, b) \rightarrow (s',c,right)$ (meaning that in state $s$ with a $b$ on the head the machine goes in state
$s'$ writing $c$ on the tape and moving the head right) is represented by the rules:\footnote{In these rules, in a Turing Machine simulation, the variable
$Y$ is always instantiated with the empty multiset.}
    $$\begin{array}{ll}
   (1) & \ell: s~(b~y\into (x \into Z)~ Y ) \red  (c~y \into s' ~ (x \into Z) ~ Y) \\
    (2)&   \ell: s~(b~y\into (r\into Z)~Y)  \red  (c~y \into
    s'~(\blank\into (r \into Z ))~Y)
    \end{array}$$
The second rule represents the case that b is the rightmost non blank symbol and so a new blank must be introduced in the simulation.\footnote{Note that
rule (1) could be applied also in this case but it would lead the system in a configuration (containing a subterm $s'~ (r\into t)$ for some t) from which
no further move could be possible. However situations of this kind can be easily avoided with a little complication of the encoding.} By construction, the
system $\QQ_T$, in which computations are defined by the transitive closure of the transition relation, correctly represents the behaviour of $T$.\end{proof}

\section{Quantitative Semantics for \short}\label{STO_SEM}

In order to make the formal framework suitable to model quantitative
aspects of biological systems we must associate to each rewriting step a numerical
parameter (the step \emph{rate}) which determines
the time spent between two successive system states and, in order to represent faithfully the
system evolution, the frequency with
which each interaction will take place.

In this section we introduce two quantitative simulation methods for CWC based respectively on a stochastic simulation method
and on the deterministic solution of ordinary differential equations (ODE).

In the stochastic framework the rate of the reduction steps are used as the parameters to determine, stocastically,  the next system configutation and the
time spent for reaching it. The system is then descibed as a Continuous Time Markov Chain (CTMC) \cite{Pra62}. This allows to simulate the system
evolution by means of standard simulation algorithms (see e.g.\cite{G77}). Stochastic simulation techniques can be applied to all CWC systems but, in
several cases, at a high computational cost. The deterministic method based on ODE, is computationally more efficient, but can be applied, in general,
only to systems in which compartments are absent or have a fixed, time-independent, structure. These two approaches, presented separately in this section,
will be integrated in the next section defining an hybrid simulation algorithm for CWC that keeps the generality of the stochastic approach but can reduce
its computational cost exploiting, when possible, the efficiency of the ODE simulation method.

In our calculus we will represent the speed of a reaction as a
\emph{rate function} having a parameter depending on the overall content of the compartment in which the reaction takes place. This allows to
tailor the reaction rates on the specific characteristics of the system, as for
instance when representing nonlinear reactions as it happens for Michaelis--Menten kinetics, or to describe more complex interactions involving compartments that may not follow the standard mass action
rate. This latter, more classical, collision based stochastic semantics (see~\cite{G77}) can be encoded as a particular choice of the rate
function (see Section~\ref{sect:LMA}). A similar approach is used in~\cite{DGT09b} to model reactions with inhibitors and catalysers in a single rule.

Obviously some care must be taken in the choice of the rate function: for
instance it must be complete (defined on the domain of the
application) and nonnegative. This properties are also enjoyed by the function
representing the law of mass action.

\begin{definition}
  A \emph{quantitative} rewrite rule is a triple $(p, o, f)$, denoted
$p\srewrites{f} o$, where $(p,o)$ is a rewrite rule and $f:\Sigma
\rightarrow \mathbb{R}^{\geq 0}$ is the \emph{rate function} associated to
the rule.\footnote{The value $0$ in the codomain of $f$ models the situations in which
the given rule cannot be applied, for example when the particular
environment conditions forbid the application of the rule.}
\end{definition}

The rate function takes an instantiation $\sigma$ as parameter. Such an instantiation models the actual compartment content determining the structure of the
environment in which the l.h.s. of a rule matches and that may actively influence the rule application. Notice that, by Remark \ref{DifferentOutcomes},
different instantiations that allow the l.h.s. $p$ of a rule to match a term $\ov{t}$ can produce different outcomes which could determine different rates in the associated transitions.

In the following we will use the function $\OO:\TS \times \TS
\rightarrow \mathbb{N}$ to count the occurrences of a term within another. Namely, $\OO(\ov{t},\ov{u})$ returns the number of occurrences of the term $\ov{t}$ within the term $\ov{u}$.

\begin{example}\label{ex:multiterm2}
Consider again the term given in Remark~\ref{DifferentOutcomes}. If the rate function of the rewrite rule is defined as $f(\sigma)=0.0002 \cdot (
\OO(b,\sigma(x)+1)  )$, the initial term $\ov{t}$ results in $ (a \conc b \conc b \into c)^\ell \conc (b \into c)^\ell$ with a rate $0.0004$ and in the
term $(b \conc b \into c)^\ell \conc (a \conc b \into c)^\ell$ with rate $0.0002$.
\end{example}



We already mentioned that equipping rewrite rules with a function leads the definition of a stochastic semantics that can abstract from the classical one based on collision analysis (practical for very low level simulations, for example chemical interactions), and allows defining more complex rules (for higher simulation levels, for example cellular or tissue interactions) which might follow different probability distributions. An intuitive example could be a simple membrane interaction: In the presence of compartments, a system could not be considered, in general, as well stirred. In such a case, the classical collision based analysis could not always produce faithful simulations and more factors (encapsulated within the context in which a rule is applied) should be taken into account.

A \emph{quantitative \short\ system} over a set $\AT$ of atoms and a set $\LT$ of labels is represented by a set $\QQ_{\AT,\LT}$ ($\QQ$ for short when $\AT$ and $\LT$
are understood) of quantitative rewrite rules over $\AT$ and $\LT$.

\subsection{Stochastic Evolution}

In the stochastic framework, the rate of a transition used as the parameter of an exponential distribution modeling the time spent to complete the
transition. A quantitative \short\ system $\QQ$  defines a Continuous Time Markov Chain (TCMC) in which the rate of a transition $C[\LeftPat\sigma]
\ltrans{}  C[\RightPat\sigma]$ is given by $f(\sigma)$ where $\srewRule{\LeftPat}{\RightPat}{f} \in \QQ$ is the quantitative rule which determines the
transition. The so defined CTMC determines the \emph{stochastic reduction semantics} of \short.


When applying a simulation algorithm to a \short\ system we must take into account, at a given time, all the system transitions (with their associate
rates) that are possible at that point. They are identified by:
\begin{itemize}
\item the rewrite rule applied;
\item context which selects the comparment in which the rule is applied;
\item the outcome of the transition (see Remark~\ref{DifferentOutcomes}).
\end{itemize}

\begin{remark}\label{rm:eq_comp}
We must take some care in identifying transitions involving compartments. For instance, in the term  $\ov{t}=25m \conc 8a \conc \wr{10c}{24a \conc
20b}{\ell} \conc \wr{10c}{24a \conc 20b}{\ell}$ shown in Figure~\ref{fig:eq_comp} (a)
there are two compartments that are exactly the same. If we apply to $\ov{t}$ the rule $\ell: a \conc b   \srew{}   c$ we obtain the term $\ov{t'}$ shown
in Figure~\ref{fig:eq_comp} (b). Actually, starting from $\ov{t}$ there are two compartments on which the rule can be applied, producing the same term
$\ov{t'}$ (up to structural congruence). Although the transition is considered as one (up to structural congruence), the quantitative evolution must take
this possibility into account by counting two transitions.
\end{remark}

\begin{figure*}
\centering \subfigure[] {
\includegraphics[height=30mm]{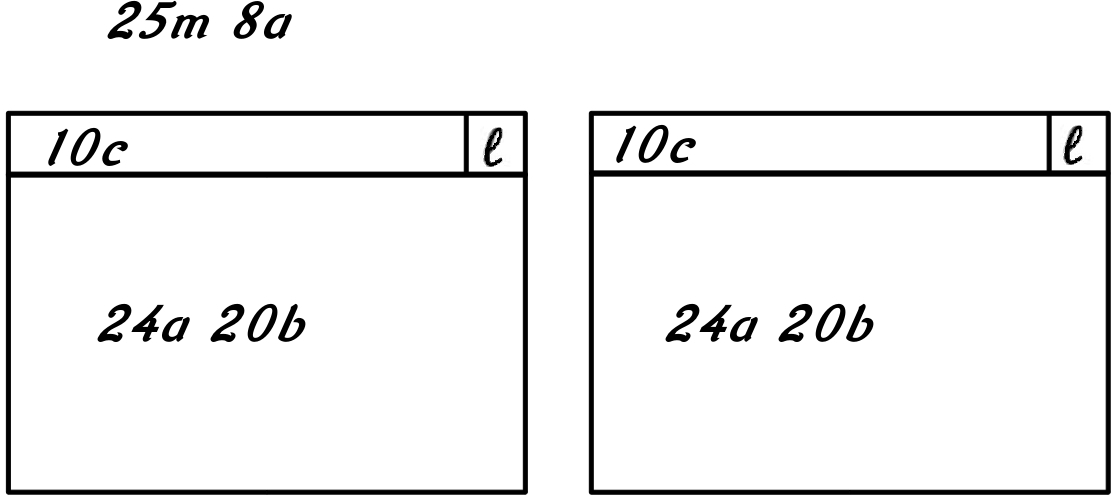}
}

\subfigure[] {
\includegraphics[height=30mm]{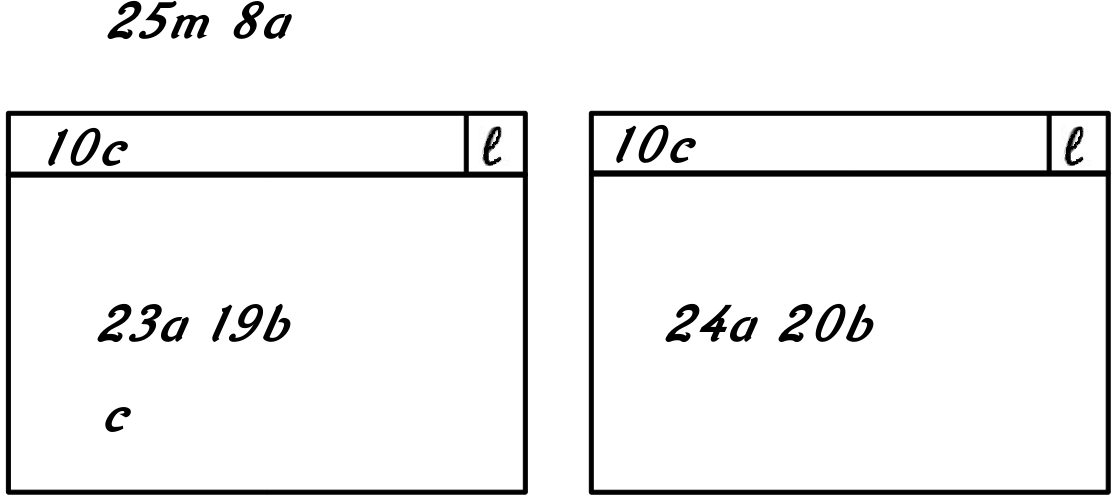}
}

\caption{\textbf{(a)} $\ov{t}$ ; \textbf{(b)} $\ov{t'}$ } \label{fig:eq_comp}
\end{figure*}

From the transition rates we can define, following a standard simulation procedure \cite{G77}, the exponential probability distribution of the moment in
which the next reaction will take place and the probability distribution of the next transition that will take place.

In particular, given a term $\ov{t}$, a global time $\delta$ and all the transitions $e_1,\ldots,e_M$ (counted as mentioned in Remark~\ref{rm:eq_comp})
that can be applied to $\ov{t}$, with rates $\pi_1,\ldots,\pi_M$ such that $\pi=\sum_{i=1}^M \pi_i$, the simulation procedure allows to determine, following
Gillespie's method:
\begin{enumerate}
\item The time $\delta+\tau$ at which the next stochastic transition will occur, randomly
chosen with $\tau$ exponentially distributed with parameter $\pi$;
\item The transition $e_i$ that will occur at time $\delta+\tau$, randomly
chosen with probability $\frac{\pi_i}{\pi}$.
\end{enumerate}

\subsection{Mass Action Law}\label{sect:LMA}

Gillespie's approach (see~\cite{G77}) simulates the time evolution of a
chemically reacting system by determining when the next reaction will occur and
what kind of reaction it will be. Kind and time of the next
reaction are computed on the basis of a stochastic reaction constant.

Gillespie's stochastic simulation algorithm is defined essentially for well stirred systems, confined to a constant volume and in thermal equilibrium at some constant temperature. In these conditions we can describe the system's state by specifying only the molecular populations, ignoring the positions and velocities of the individual molecules.
Different approaches such as Molecular Dynamics, Partial Differential Equations or Lattice-based methods are required in case of molecular crowding, anisotropy of the medium or canalization.

We might restrict CWC in order to match Gillespie's framework. Since we just need to deal with simple molecular populations, we restrict our calculus to multisets of atoms. We denote with $\TSG$ the infinite set of CWC terms representing Gillespie's molecular populations.

The usual notation for chemical reactions can be expressed by:\footnote{Here, as in~\cite{G77}, we define general chemical reactions with an arbitrary number of reagents. Notice that reactions involving more than two reagents are quite uncommon and could be expressed as a chain of reactions involving just two reagents.}
\begin{equation}\label{eq:cr}
n_1 a_1 + \ldots + n_\rho a_\rho
\mathop{\rightharpoonup}^{k}
n_1' b_1 + \ldots + n_{\gamma}' b_{\gamma}
\end{equation}
where $a_i$ and $b_i$ are the reagents and product molecules, respectively,
$n_i, n_i'$ are the stoichiometric coefficients
and $k$ is the kinetic constant.

We restrict to rewrite rules modelling chemical reactions as in reaction~\ref{eq:cr}.  A chemical reaction of the form~\ref{eq:cr} can be expressed by the
following CWC rewrite rule:

\begin{equation}\label{eq:crr}
\ell: n_1  a_1 \conc \ldots \conc n_\rho  a_\rho
\srewrites{f}
n_1'  b_1 \conc \ldots \conc n_{\gamma}' b_{\gamma}
\end{equation}
which is short for $\wr{x}{n_1  a_1 \conc \ldots \conc n_\rho  a_\rho \conc X}{\ell} \srewrites{f} \wr{x}{n_1'  b_1 \conc \ldots \conc n_{\gamma}'
b_{\gamma} \conc X}{\ell}$, where the rate function $f$ of rule~\ref{eq:crr} is suitably defined to model Gillespie's collision based stochastic
simulation algorithm. In particular, the collision based framework defined by Gillespie leads to binomial distributions of the reagents involved. Namely,
we define the rate function $f$ as:

\begin{equation}\label{eq:rf}
f(\sigma)=  \binom{\OO(a_1,\sigma(X))+n_1}{n_1} \cdot \ldots \cdot \binom{\OO(a_\rho,\sigma(X))+n_\rho}{n_\rho} \cdot k
\end{equation}
where $k$ is the kinetic constant of the modelled chemical reaction.

In many practical situations, this is approximated as:
\begin{equation}\label{eq:approx}
\frac{(\OO(a_1,\sigma(X))+n_1)\cdot \ldots \cdot (\OO(a_\rho,\sigma(X)+n_\rho))}{n_1\cdot \ldots \cdot n_\rho}\cdot k
\end{equation}

By construction, the following holds.

\begin{fact}
Molecular populations defined as $\TSG$ terms with a fixed set of rules of the form given by reaction~\ref{eq:cr}, represented by  rule~\ref{eq:crr},
interpret into the stochastic semantics of CWC the law of mass action within  Gillespie's framework for the evolution of chemically reacting systems.
\end{fact}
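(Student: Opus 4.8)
The plan is to exhibit an isomorphism between the CTMC generated by the CWC stochastic semantics on the restricted class and the CTMC underlying Gillespie's direct method, showing that both the state spaces and the per-transition rates coincide. First I would fix the correspondence on states: a term $\ov t \in \TSG$ is, by definition, a multiset of atoms, hence it is completely determined by the vector of multiplicities $(X_1,\ldots,X_n)$ recording how many copies of each species it contains; this is exactly Gillespie's notion of a molecular population, so the identity map is a bijection between CWC configurations in $\TSG$ and Gillespie's states. Under Notation~\ref{nota:WholeTerm} every such term sits inside the single top compartment $\wr{\emptyseq}{\ov t}{\top}$, and each reaction of the form~\ref{eq:cr} is represented by exactly one rule of the form~\ref{eq:crr}, giving a bijection between reaction channels and rules.

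The core of the argument is the rate computation for a single matching. Fix a rule of the form~\ref{eq:crr} and a term $\ov t$ with species counts $X_i$. Because the reactants on the left-hand side are purely atomic and the population lives in the single top compartment, the matching $\ov t = C[\LeftPat\sigma]$ is forced: the context is the bare hole $\phole$ and the content variable must absorb exactly the leftover atoms, so $\sigma(X)$ contains $X_i - n_i$ copies of each $a_i$. Hence $\OO(a_i,\sigma(X)) + n_i = X_i$ for every $i$, and substituting into the rate function~\ref{eq:rf} collapses it to
$$
f(\sigma) = \binom{X_1}{n_1} \cdot \ldots \cdot \binom{X_\rho}{n_\rho} \cdot k .
$$
I would then recall that in Gillespie's framework the propensity of this reaction channel is $h_\mu \cdot k$, where $h_\mu = \prod_{i}\binom{X_i}{n_i}$ counts the number of distinct combinations of reactant molecules available in the current population; the two expressions are literally the same, so each CWC transition carries precisely Gillespie's propensity.

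It remains to lift this pointwise agreement to the two simulation procedures. Since the CWC stochastic semantics takes $f(\sigma)$ as the rate of the transition $C[\LeftPat\sigma]\ltrans{}C[\RightPat\sigma]$ and draws the waiting time and the next transition exactly as recalled in Section~\ref{STO_SEM} --- an exponential delay with parameter $\pi=\sum_i \pi_i$ and a next-reaction choice with probability $\pi_i/\pi$ --- the equality of the individual rates $\pi_i$ with Gillespie's propensities established above makes this procedure coincide, channel by channel, with Gillespie's direct method. I expect the only delicate point to be ruling out the spurious multiplicities discussed in Remark~\ref{DifferentOutcomes} and Remark~\ref{rm:eq_comp}: those phenomena arise solely when the pattern contains nested compartments or when structurally identical compartments coexist, neither of which can happen here, since the left-hand side of~\ref{eq:crr} mentions only atoms inside the unique top compartment. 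Thus the combinatorial multiplicity that Gillespie obtains by counting distinct molecular tuples is already internalised, without double counting, in the single binomial product defining $f$, and the two CTMCs are identical.
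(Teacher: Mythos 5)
Your proposal is correct and matches the paper's intent exactly: the paper gives no explicit proof, asserting the Fact ``by construction'' of the rate function~\ref{eq:rf}, and your argument is precisely the unfolding of that construction --- the forced matching in the single top-level compartment, the identity $\OO(a_i,\sigma(X))+n_i = X_i$ collapsing $f(\sigma)$ to Gillespie's propensity $\prod_i\binom{X_i}{n_i}\cdot k$, and the observation that the multiplicity subtleties of Remarks~\ref{DifferentOutcomes} and~\ref{rm:eq_comp} cannot arise in the compartment-free fragment $\TSG$. Nothing further is needed.
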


\begin{notation}
We will denote \emph{biochemical rewrite rules} as defined in rule~\ref{eq:crr} with the simplified notation:
$$
\ell:\overline{a} \srewrites{k} \overline{b}
$$
where $\overline{a}$ and $\overline{b}$ are multisets of atomic elements, and the rate function is represented by just the kinetic constant of the
chemical reaction.
\end{notation}

When the counting is done with the law of mass action, we will extend the simplified notation for biochemical rewrite rules (using a constant rate instead
of a function) also for rules involving compartments:
$$
\ell: \overline{a} \conc \overline{\LeftPat}   \srewrites{k}   \ov{o}
$$

\begin{example}
Given a term $\ov{t}= 15 g \conc 7 m \conc 1928 n$ and the biochemical rewrite rule $\TOP: g \conc m \srewrites{0.002} n$, the
following transitions generates from the stochastic semantics interpreted under Gillespie's assumptions: $\ov{t} \ltrans{0.21} 14  g \conc 6  m \conc 1929 n$, where $ 0.21 = \frac{(14+1)\cdot (6+1)}{1\cdot 1}\cdot 0.002 $.
\end{example}

\chapter{Modelling Biological Systems}
\label{case}
In this chapter we present an application of CWC modelling Ammonium Transporters in the Arbuscular Mycorrhiza Symbiosis.

\section{Ammonium Transporters in AM Symbiosis}

Given the central role of agriculture in worldwide economy, several
ways to optimize the use of costly artificial fertilizers are now
being actively pursued. One approach is to find methods to nurture
plants in more ``natural'' manners, avoiding the complex chemical
production processes used today. In the last decade the Arbuscular
Mycorrhiza (AM), the most widespread symbiosis between plants and
fungi, got into the focus of research because of its potential as a
natural plant fertilizer. Briefly, fungi help plants to acquire
nutrients as phosphorus (P) and nitrogen (N) from the soil whereas
the plant supplies the fungus with energy in form of
carbohydrates~\cite{Par08}. The exchange of these nutrients is
supposed to occur mainly at the eponymous arbuscules, a specialized
fungal structure formed inside the cells of the plant root. The
arbuscules are characterized by a juxtaposition of a fungal and a
plant cell membrane where a very active interchange of nutrients is
facilitated by several membrane transporters. These transporters are
surface proteins that facilitate membrane crossing of molecules
which, because of their inherent chemical nature, are not freely
diffusible.

Since almost all cells in the majority of multicellular
organisms share the same genome, modern theories point out that
morphological and functional differences between them are mainly
driven by different genes expression~\cite{MBOC1}. Thanks to the
latest experimental novelties~\cite{TB99,NCB01} a precise analysis of
which genes are expressed in a single tissue is attainable; therefore it
is possible to identify genes that are pivotal in specific
compartments and then study their biological function. Following
this route a new membrane transporter has been discovered by
expression analysis and further characterized~\cite{GUE09}. This
transporter is situated on the plant cell membrane which is directly
opposite to the fungal membrane, located in the arbuscules. Various
experimental evidence points out that this transporter binds to an
$NH_{4}^+$ moiety outside the plant cell, deprotonates it, and
mediates inner transfer of $NH_{3}$, which is then used as a
nitrogen source, leaving an $H^+$ ion outside. The AM symbiosis is
far from being unraveled: the majority of fungal transporters and
many of the chemical gradients and energetic drives of the symbiotic
interchanges are unknown. Therefore, a valuable task would be to
model \emph{in silico} these conditions and run simulations against
the experimental evidence available so far about this transporter.
Conceivably, this approach will provide biologists with working
hypotheses and conceptual frameworks for future biological
validation.

\begin{figure}
\begin{center}
\begin{minipage}{0.98\textwidth}
\begin{center}
\includegraphics[height=130mm]{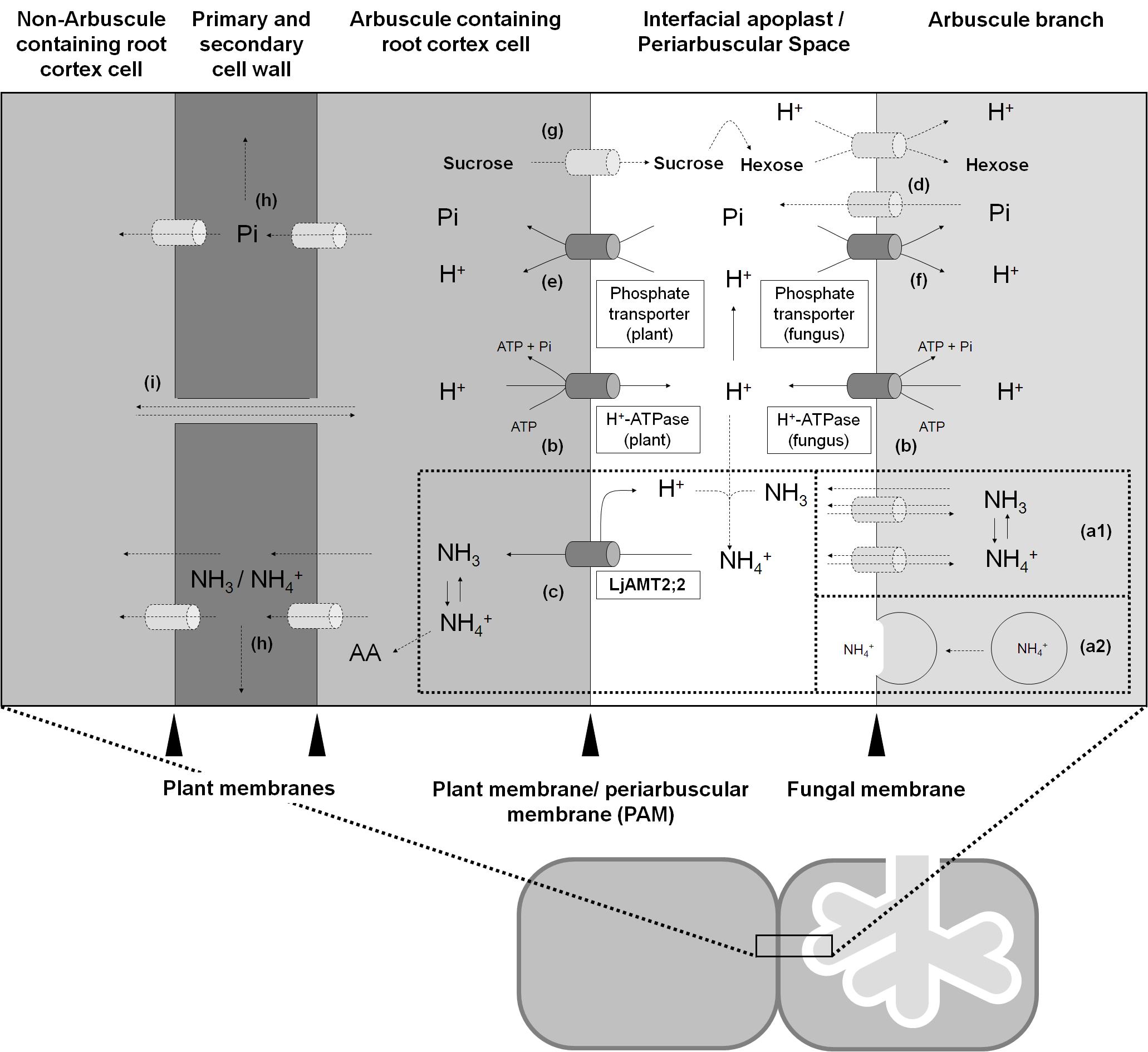}
\end{center}
\caption{Nitrogen, phosphorus and carbohydrate exchanges at the
mycorrhizal interface} \label{fig:symbschem}
\end{minipage}
\end{center}
\end{figure}

The scheme in Figure~\ref{fig:symbschem} (taken from~\cite{GUE09})
illustrates nitrogen, phosphorus and carbohydrate exchanges at the
mycorrhizal interface according to previous works and the results
of~\cite{GUE09}.
\textbf{(a1-2)} $NH_3/NH_4^+$ is released in the arbuscules from
arginine which is transported from the extra- to the intraradical
fungal structures~\cite{Gov05}. $NH_3$/$NH_4^+$ is released by so
far unknown mechanisms (transporter, diffusion \textbf{(a1)} or
vesicle-mediated \textbf{(a2)}) into the periarbuscular space
(PAS) where, due to the acidic environment, its ratio shifts
towards $NH_4^+$ ($>99.99\%$). \textbf{(b)} The acidity of the
interfacial apoplast is established by plant and fungal
$H^+$-ATPases~\cite{Hau05,Bal07} thus providing the energy for
$H^+$-dependent transport processes. \textbf{(c)} The $NH_4^+$ ion
is deprotonated prior to its transport across the plant membrane
via the LjAMT2;2 protein and released in its uncharged $NH_3$ form
into the plant cytoplasm. The $NH_3$/$NH_4^+$ acquired by the
plant is either transported into adjacent cells or immediately
incorporated into amino acids (AA). \textbf{(d)} Phosphate is
released by so far unknown transporters into the interfacial
apoplast. \textbf{(e)} The uptake of phosphate on the plant side
then is mediated by mycorrhiza-specific
Pi-transporters~\cite{JPH07,GUE09}.\footnote{Where Pi stands for inorganic phosphate.} \textbf{(f)} AM fungi might
control the net Pi-release by their own Pi-transporters which may
reacquire phosphate from the periarbuscular space~\cite{Bal07}.
\textbf{(g)} Plant derived carbon is released into the PAS
probably as sucrose and then cleaved into hexoses by sucrose
synthases~\cite{Hon03} or invertases~\cite{SRH06}. AM fungi then
acquire hexoses~\cite{Sha95,Sol97} and transport them over their
membrane by so far unknown hexose transporters. It is likely that
these transporters are proton co-transporter as the GpMST1
described for the glomeromycotan fungus Geosiphon
pyriformis~\cite{NT00}. Exchange of nutrients between arbusculated
cells and non-colonized cortical cells can occur by apoplastic
\textbf{(h)} or symplastic \textbf{(i)} ways.

\section{CWC Model}\label{SecModel}

We focus our investigation on the sectors labelled
with \textbf{(c)}, \textbf{(a1)} and \textbf{(a2)}. Namely, we
will present CWC models for the equilibrium between $NH_4^+$ and
$NH_3$ and the uptake by the LjAMT2;2 transporter \textbf{(c)},
and the exchange of $NH_4^+$ from the fungus to the interspatial
level \textbf{(a1-2)}. We will also analyze LjAMT2;2 role in
the AM symbiosis by comparing it with another known ammonium transporter,
LjAMT1;1.
The choice of CWC is motivated by the fact
that membranes, membrane elements (like LjAMT2;2) and the involved
reactions can be represented in it in a quite natural way.

The simulations illustrated in this section are done with the CWC
prototype simulator~\cite{HCWC_SIM}. In the following we will use
a more compact notation to represent multisets of the same atom,
namely, we will write $a\times n$ to denote the multiset of $n$ atomic elements $a$.

\subsection{$NH_3$/$NH_4^+$ Equilibrium}
We decided to
start modelling a simplified pH equilibrium, at the interspatial level (right part of section \textbf{(c)} in Figure~\ref{fig:symbschem}),
without
considering $H_2O$, $H^+$ and $OH^-$; therefore we tuned the reaction rates in
order to reach the correct percentages of $NH_3$
over total $NH_3/NH_4^+$ in the different compartments.
Like these all the rates and initial terms used in this work are obtained by manual adjustments made
looking at the simulations results and trying to keep simulations times acceptable - we plan to refine these rates
and numbers in future work to reflect biological data when they become available.
Following~\cite{GUE09}, we consider an extracellular pH of 4.5~\cite{Gut00}.
In such conditions, the percentage of molecules of $NH_3$ over the sum $NH_3 + NH_4^+$ should be around $0.002$.
The reaction we considered is the following:
$$
NH_3 \overset{k_1}{\underset{k_2}{\rightleftharpoons}}
NH_4^+
$$
with $k_1=0.018 \times 10^{-3}$ and $k_2=0.562 \times 10^{-9}$. One can
translate this reaction with the CWC rules:
\begin{equation*}
\tag*{(R1)} \top: NH_3 \srewrites{k_1} NH_4^+    
\end{equation*}
\begin{equation*}
\tag*{(R2)} \top: NH_4^+ \srewrites{k_2} NH_3
\end{equation*}
In Figure~\ref{fig:NH3eqNH4} we show the results of this first
simulation given the initial term $\ov{t}=NH_3\times
138238 \conc NH_4^+\times 138238$.

\begin{figure}[t]
\begin{center}
\begin{minipage}{0.98\textwidth}
\begin{center}
\includegraphics[width=80mm]{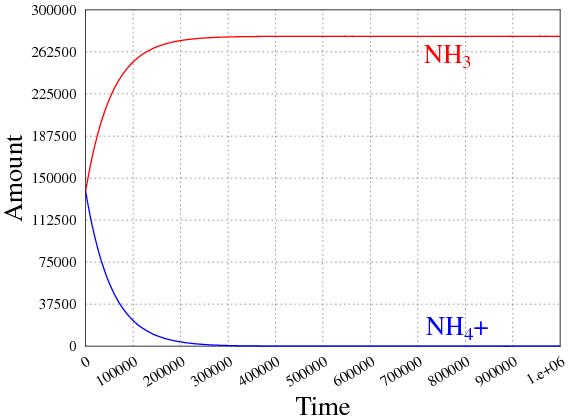}
\end{center}
\vspace{-0.5cm}
\caption{Extracellular equilibrium between $NH_3$ and $NH_4^+$.}
\label{fig:NH3eqNH4}
\end{minipage}
\end{center}
\end{figure}

This equilibrium is different at the intracellular level (pH
around 7 and 8)~\cite{GBD92}, so we use two new rules to model the
transformations of $NH_3$ and $NH_4^+$ inside the cell (labelled $\ell$), namely:
\begin{equation*}\tag*{(R3)} \ell: NH_3 \srewrites{k_1} NH_4^+
\end{equation*}
\begin{equation*}\tag*{(R4)} \ell: NH_4^+\srewrites{k_2'} NH_3
\end{equation*}
where $k_2'=0.562 \times 10^{-6}$.

\subsection{LjAMT2;2 Uptake}

We can now present the CWC model of the uptake of the LjAMT2;2
transporter (left part of section \textbf{(c)} in
Figure~\ref{fig:symbschem}). We add a compartment modelling
an arbusculated plant cell. Since we are only interested in the
work done by the LjAMT2;2 transporter, we consider a membrane
containing this single element. The work of the transporter is
modelled by the rule:
\begin{equation*}\tag*{(R5)} \top:
NH_4^+ \conc (LjAMT2 \conc x \into X)^\ell \srewrites{k_t} H^+ \conc (LjAMT2 \conc x \into X \conc NH_3)^\ell
\end{equation*}
where $k_t=0.1 \times 10^{-5}$.

We can investigate the uptake rate of the transporter at different
initial concentrations of $NH_3$ and $NH_4^+$.
Figure~\ref{fig:nh4_ljamt2_noNH3} and
Figure~\ref{fig:nh4_ljamt2_noNH4} show the results for the initial
terms:
$$
\begin{array}{ll}
\ov{t}= & NH_3\times 776 \conc NH_4^+\times 276400
\conc (LjAMT2 \into \emptyseq)^\ell\\
\ov{u}= & NH_3\times276400 \conc
NH_4^+\times776 \conc (LjAMT2 \into \emptyseq)^\ell
\end{array}
$$
where the graphs above represent the whole simulations, while the ones below
are a magnification of their initial segment.

\begin{figure}[t]
\begin{center}
\begin{minipage}{0.98\textwidth}
\begin{center}
\includegraphics[width=77mm]{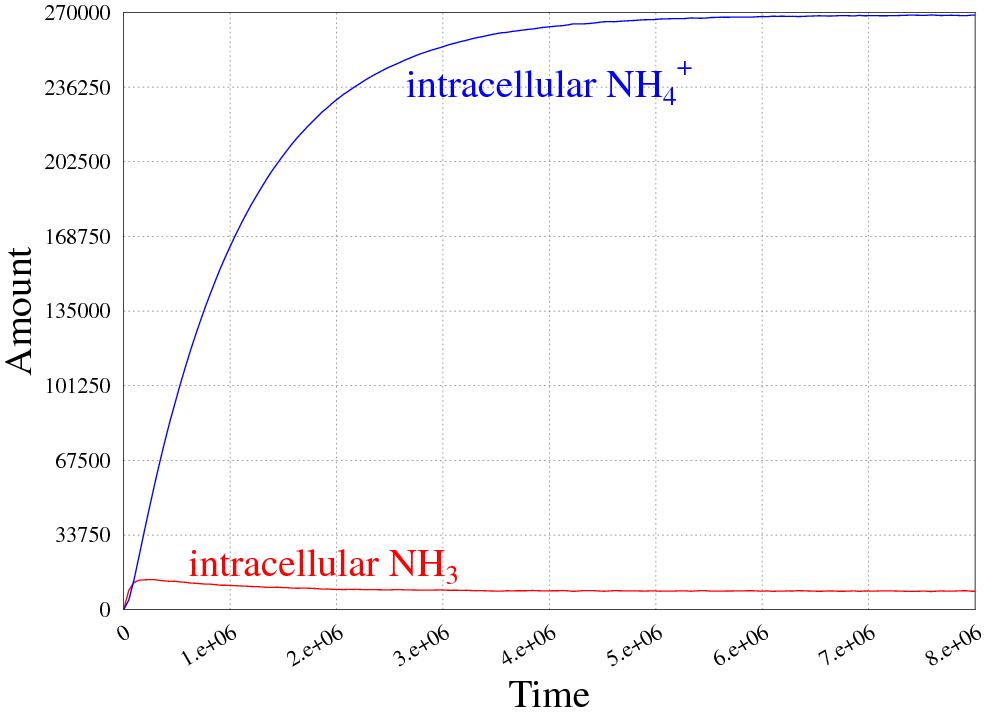}
\includegraphics[width=77mm]{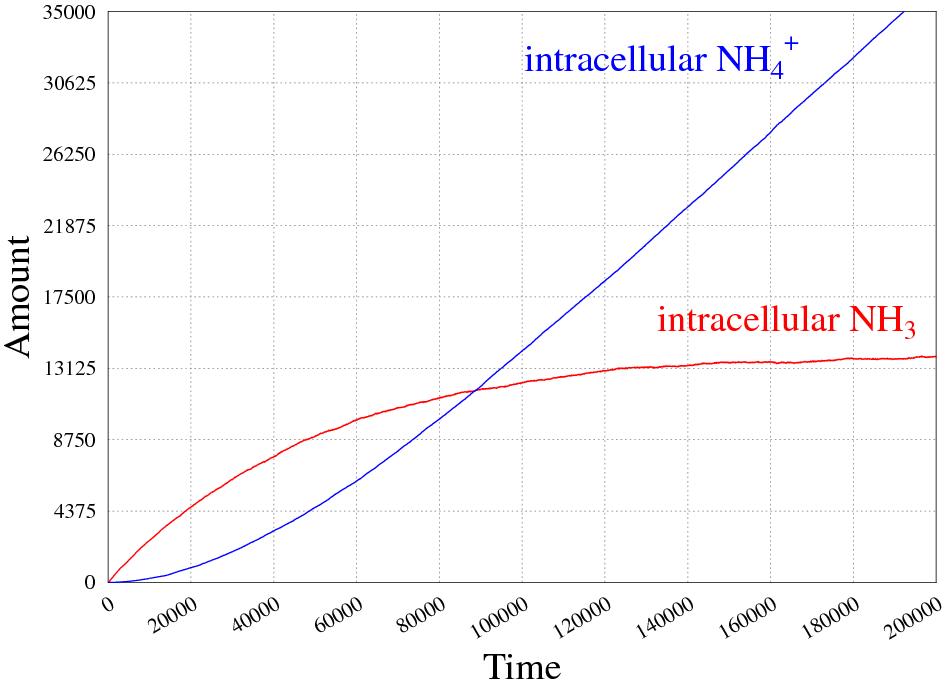}
\end{center}
\caption{At high $NH_4^+$ concentration.}
\label{fig:nh4_ljamt2_noNH3}
\end{minipage}
\end{center}
\end{figure}
\begin{figure}[t]
\begin{center}
\begin{minipage}{0.98\textwidth}
\begin{center}
\includegraphics[width=77mm]{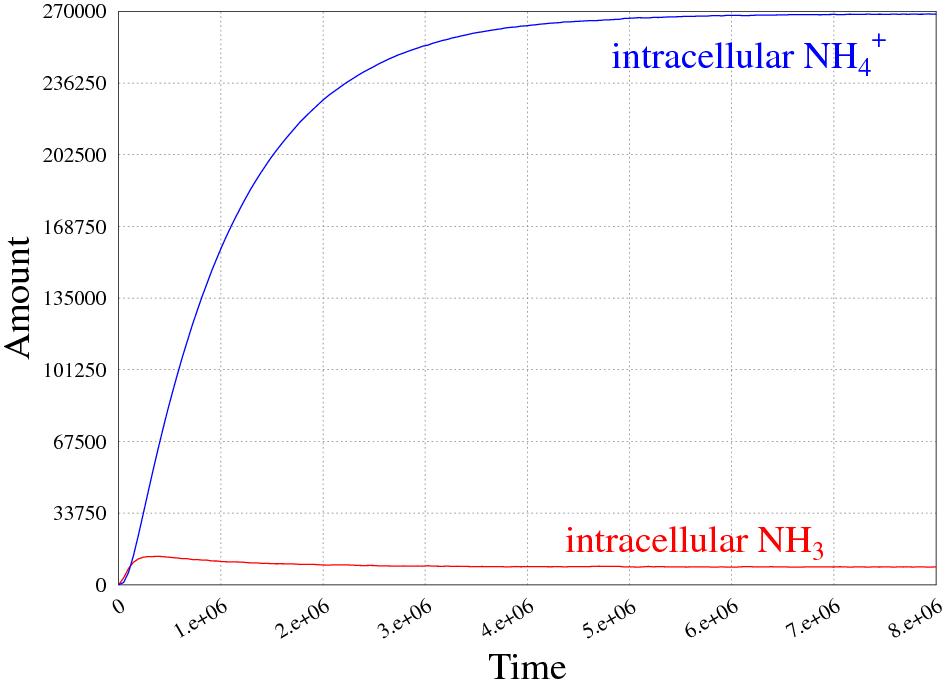}
\includegraphics[width=77mm]{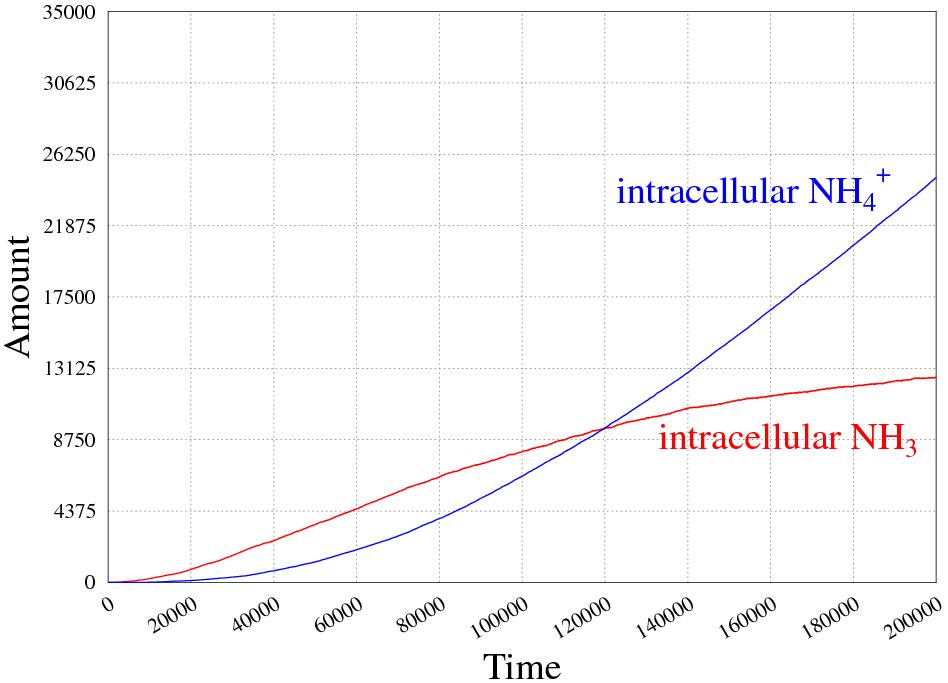}
\end{center}
\vspace{-0.5cm}
\caption{At low $NH_4^+$ concentration.}
\label{fig:nh4_ljamt2_noNH4}
\end{minipage}
\end{center}
\end{figure}

We can also investigate the uptake rate of the transporter at
different extracellular pH. Namely, we consider an extracellular
pH equal to the intracellular one (pH around 7 and 8), obtained by
imposing $R1$ and $R2$ equal to $R3$ and $R4$, respectively, i.e.
$k_2 = k_2'$. Figure~\ref{fig:nh4_ljamt2_pH7} shows the results
for the initial term $\ov{v}=NH_3\times 138238 \conc
NH_4^+\times 138238 \conc (LjAMT2 \into \emptyseq)^\ell$.

\begin{figure}[t]
\begin{center}
\begin{minipage}{0.98\textwidth}
\begin{center}
\includegraphics[width=77mm]{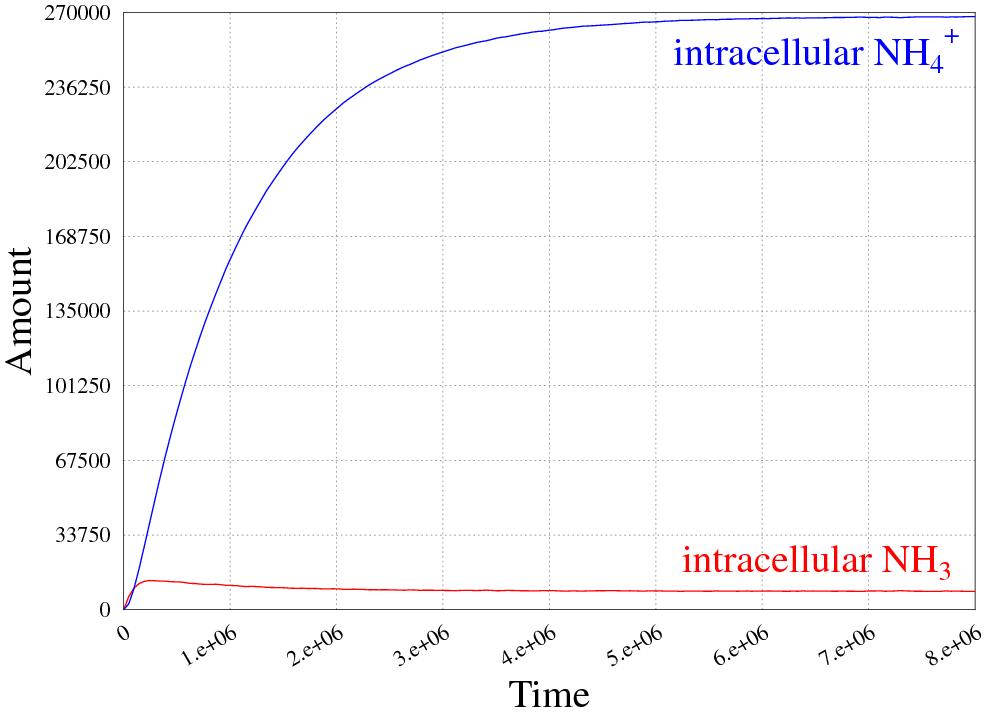}
\includegraphics[width=77mm]{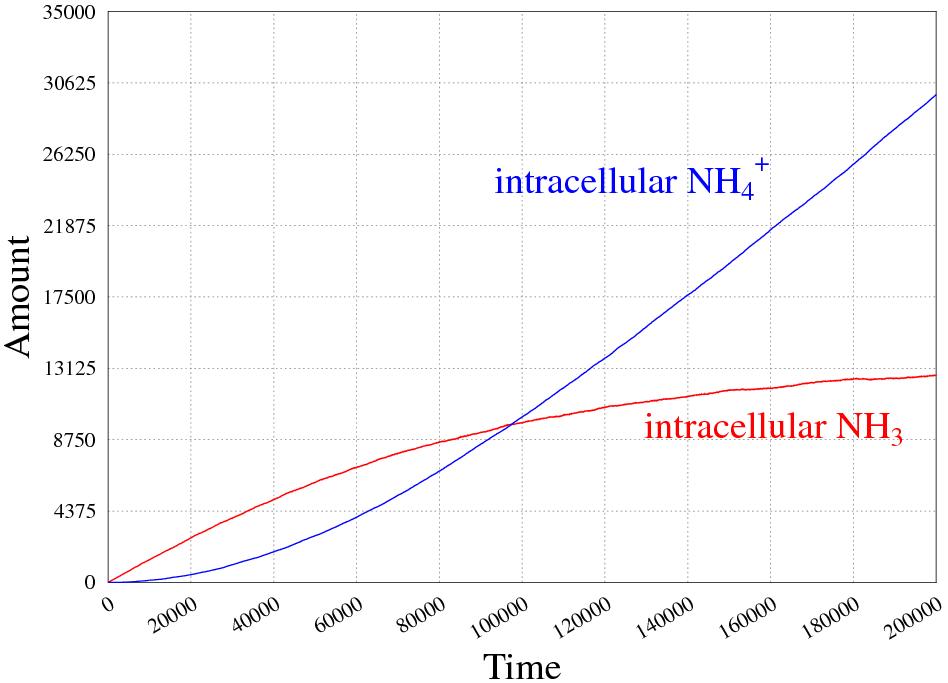}
\end{center}
\vspace{-0.5cm}
\caption{At extracellular pH=7.}
\label{fig:nh4_ljamt2_pH7}
\end{minipage}
\end{center}
\end{figure}

Since now we modeled the transporter supposing that no active form
of energy is required to do the actual work - which means that the
$NH_4^+$ gradient between the cell and the extracellular ambient
is sufficient to determine a net uptake. The predicted
tridimensional structure of LjAMT2;2 suggests that it does not use
ATP~\footnote{ATP is the ``molecular unit of currency'' of
intracellular energy transfer~\cite{KNO80} and is used by many
transporters that work against chemical gradients.} as an energy
source~\cite{GUE09}, nevertheless trying to model an ``energy
consumption'' scenario is interesting to make some comparisons.
Since this is only a proof of concept there is no need to specify
here in which form this energy is going to be provided.
Furthermore, as long as we are only interested in comparing the
initial rates of uptake, we can avoid defining rules that
regenerate energy in the cell. Therefore, rule R5 modelling the
transporter role can be modified as follows:
\begin{align*}\tag*{(R5')}
\top: NH_4^+ \conc (LjAMT2 \conc x \into & \mathit{ENERGY} \conc X)^\ell \srewrites{k_t'} \\
& H^+ \conc (LjAMT2 \conc x \into NH_3 \conc X)^\ell
\end{align*}
which consumes an element of energy within the cell. We also make
this reaction slower, since it is now catalysed by the
concentration of the $\mathit{ENERGY}$ element, actually, we set
$k'_{t}=0.1 \times 10^{-10}$. Given the initial term
$\ov{w}=NH_3\times 138238 \conc NH_4^+ \times 138238
\conc (LjAMT2 \into \mathit{ENERGY}\times 100000)^\ell$ we obtain the
simulation result in Figure~\ref{fig:nh4_ljamt2_energy}. Note that
the uptake work of the transporter terminates when the $\mathit{ENERGY}$
inside the cell is completely exhausted.

\begin{figure}[t]
\begin{center}
\begin{minipage}{0.98\textwidth}
\begin{center}
\includegraphics[width=77mm]{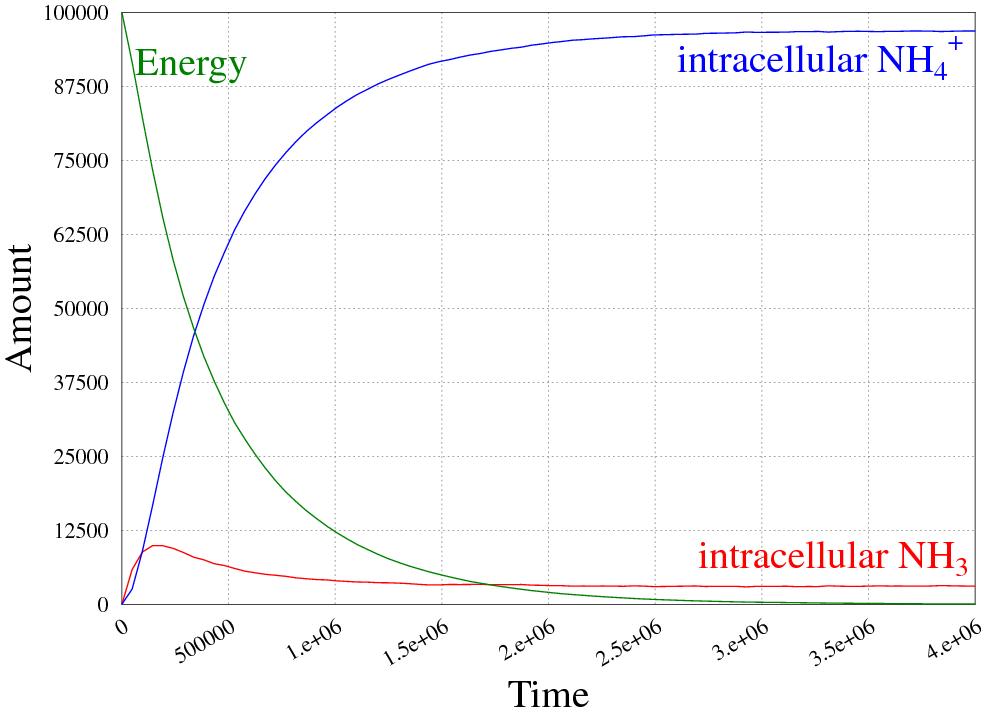}
\includegraphics[width=77mm]{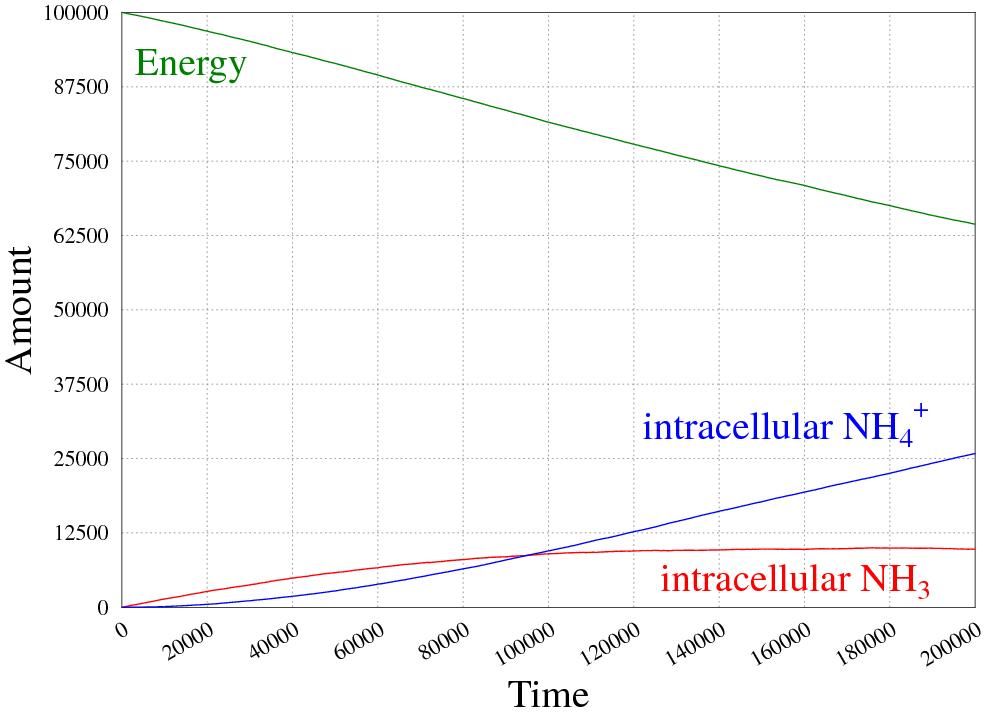}
\end{center}
\vspace{-0.5cm}
\caption{LjAMT2;2 with active energy.}
\label{fig:nh4_ljamt2_energy}
\end{minipage}
\end{center}
\end{figure}

\subsection{$NH_4^+$ Diffusing from the Fungus}

We now model the diffusion of $NH_4^+$ from the fungus to the
extracellular level (sections \textbf{(a1)}, and \textbf{(a2)} of
Figure~\ref{fig:symbschem}). In section \textbf{(a1)} of the
figure, the passage of $NH_4^+$ to the interfacial periarbuscular
space happens by diffusion. We can model this phenomenon by adding
a new compartment (labelled with ${f}$), representing the fungus, from which $NH_4^+$
flows towards the fungus-plant interface. This could be modelled
through the rule: \begin{equation*}\tag*{(R6)}\top: 
(FungMembr \conc x \into NH_4^+ \conc X)^{f}\srewrites{k_f} NH_4^+ \conc (FungMembr \conc x\into X)^{f}
\end{equation*}

By varying the value of the rate $k_f$ one might model different
externalization speeds and thus test different hypotheses about
the underlying mechanism. In Figure~\ref{fig:nh4_ljamt2_fungus} we
give the simulation result, with three different magnification
levels, going from the whole simulation to the initial parts,
obtained from the initial term
$\ov{t}_f = (FungMembr \;\into\; NH_4^+\times 2764677)^{f}
\conc (LjAMT2 \into \emptyseq)^\ell$ with $k_f=1$. In the initial
part, one can see how fast, in this case, $NH_4^+$ diffuses into
the extracellular space. In
Figure~\ref{fig:nh4_ljamt2_fungus_slow} we give the simulation
result obtained from the same initial term $\ov{t}_f$ with a
slower diffusion rate, namely $k_f=0.01 \times 10^{-3}$.

\begin{figure}[t]
\begin{center}
\begin{minipage}{0.98\textwidth}
\begin{center}
\includegraphics[width=51mm]{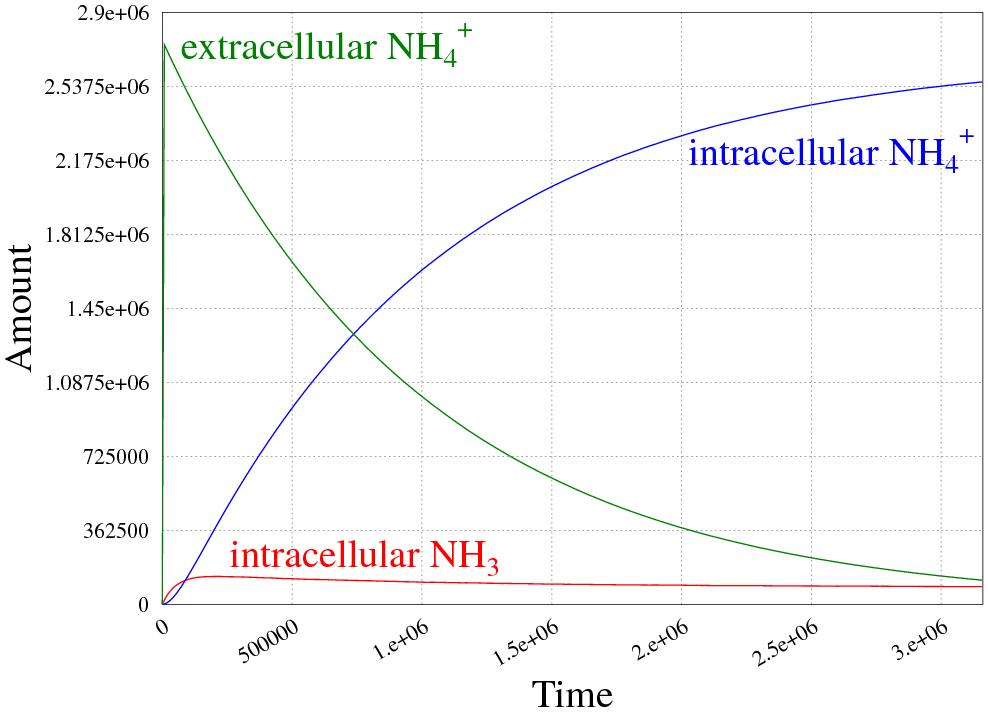}
\includegraphics[width=51mm]{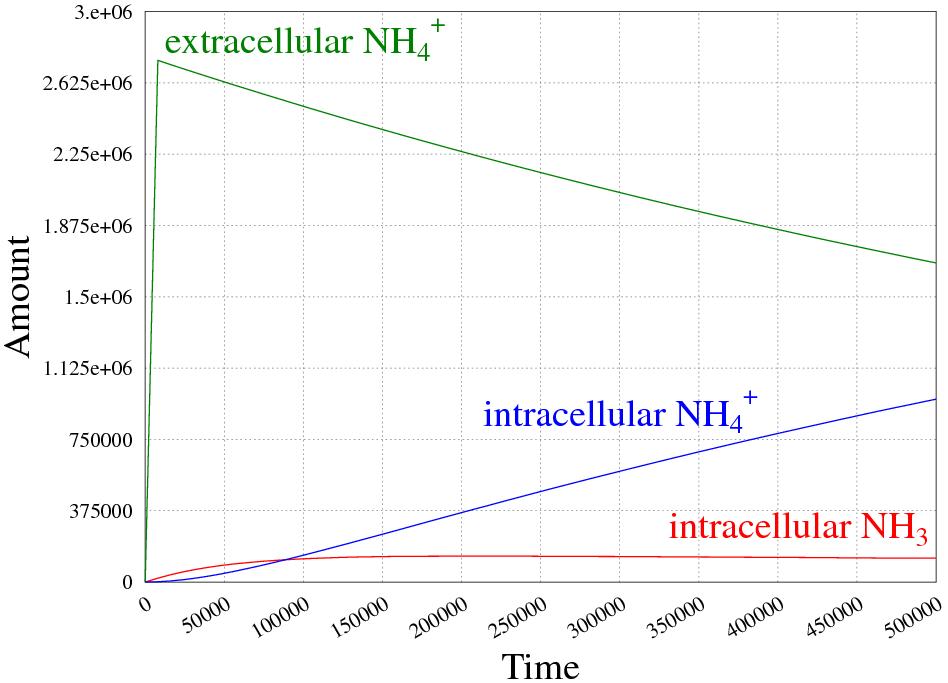}
\includegraphics[width=51mm]{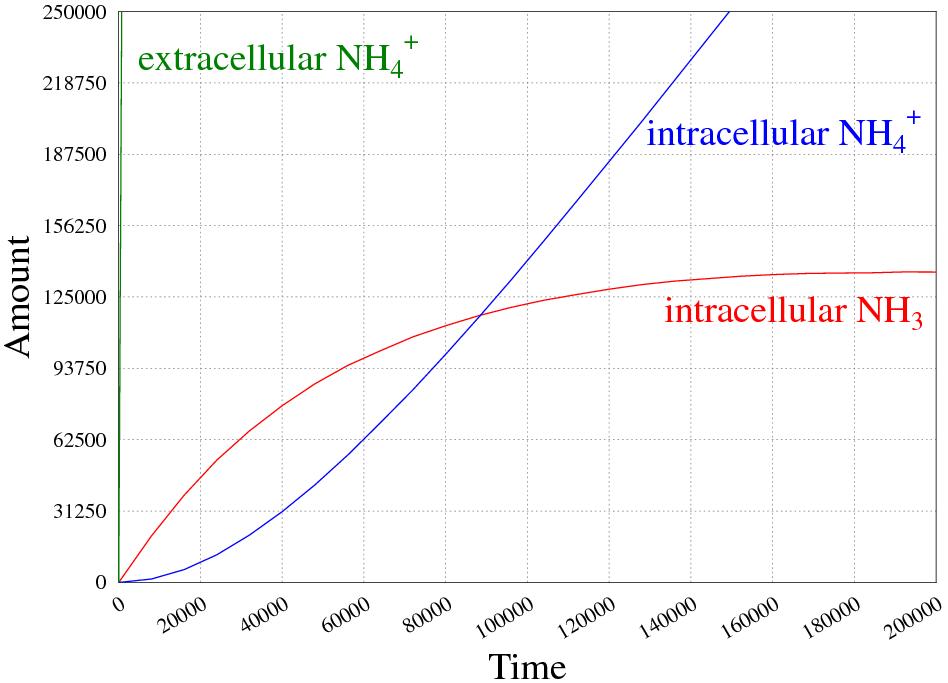}
\end{center}
\vspace{-0.5cm}
\caption{Diffusing $NH_4^+$ from the fungus, $k_f=1$.}
\label{fig:nh4_ljamt2_fungus}
\end{minipage}
\end{center}
\end{figure}

\begin{figure}[t]
\begin{center}
\begin{minipage}{0.98\textwidth}
\begin{center}
\includegraphics[width=51mm]{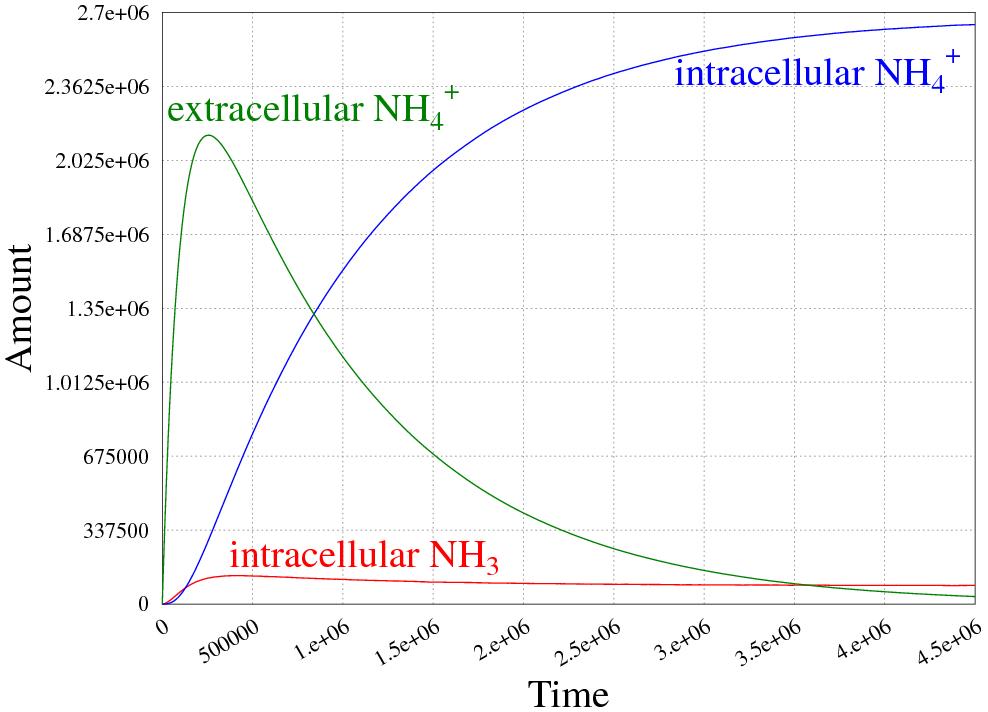}
\includegraphics[width=51mm]{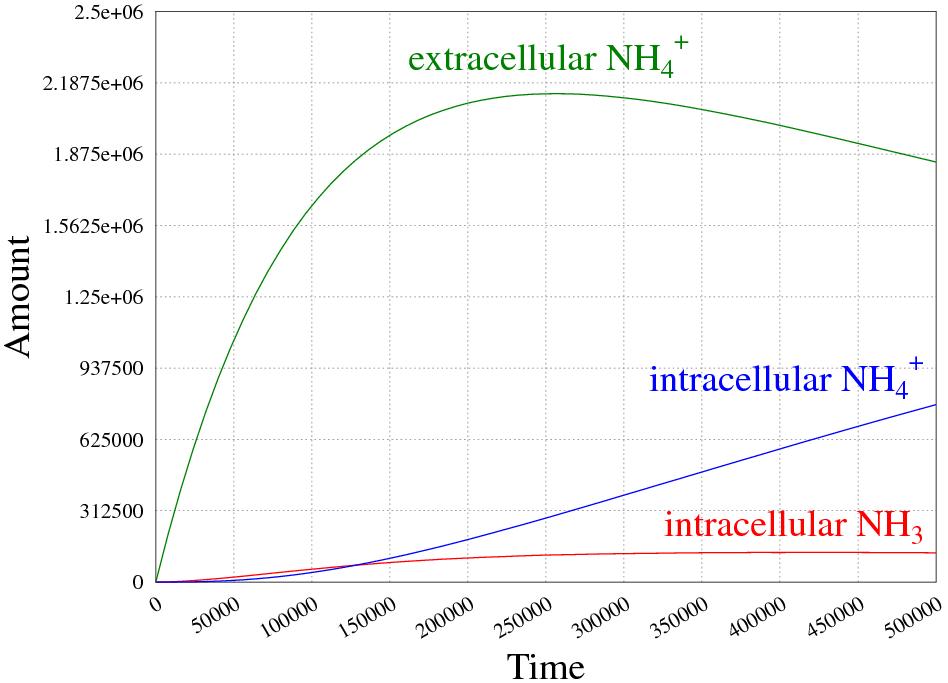}
\includegraphics[width=51mm]{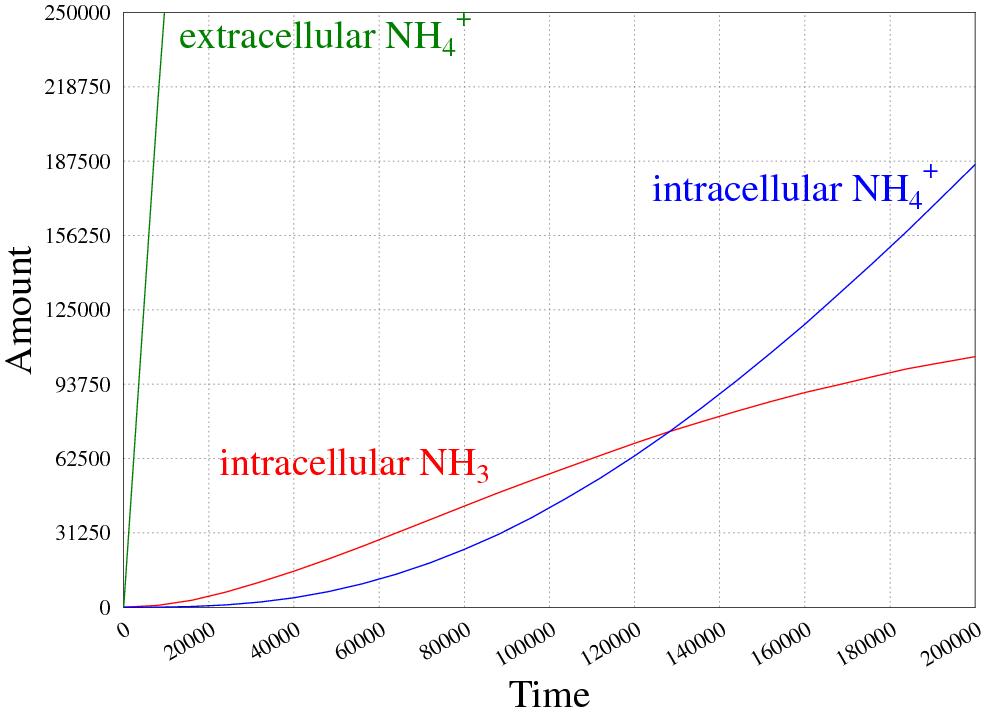}
\end{center}
\vspace{-0.5cm}
\caption{Diffusing $NH_4^+$ from the fungus, $k_f=0.01*10^{-3}$.}
\label{fig:nh4_ljamt2_fungus_slow}
\end{minipage}
\end{center}
\end{figure}

Additionally, we would like to remark, without going into the
simulation details, how we can model in a rather natural way the
portion \textbf{(a2)} of Figure~\ref{fig:symbschem} in CWC.
Namely, we need some rules to produce vesicles (labelled with ${v}$) containing $NH_4^+$
molecules within the fungal cell. Once the vesicle is formed,
another rule drives its exocytosis towards the interfacial space,
and thus the diffusion of the previously encapsulated $NH_4^+$
molecules. The necessary rules are given in the following:
\begin{equation*}\tag*{(R7)} \top: (FungMembr \conc x \;\into\; X)^{f} \;\srewrites{k_c}\; (FungMembr \conc x \;\into\; X \conc
(Vesicle \into \emptyseq)^{v})^{f}
\end{equation*}
\begin{align*}\tag*{(R8)} \top: 
(FungMembr \conc x \;\into\; & NH_4^+ \conc Y \conc (Vesicle \conc y \into X)^{v})^{f}
 \;\srewrites{k_a}\; \\ & ( FungMembr \conc x \;\into\; Y \conc (Vesicle \conc y \into
NH_4^+\conc X)^{v} )^{f}
\end{align*}
\begin{equation*}\tag*{(R9)} \top: 
( FungMembr \conc x \,\into\, Y \conc (Vesicle \conc y \into X)^{v} )^{f}
\;\srewrites{k_e}\; X \conc (FungMembr \conc x \conc y \into Y)^{f}
\end{equation*}
Where rule R7 models the creation of a vesicle, rule R8 models the
encapsulation of an $NH_4^+$ molecule within the vesicle and rule
R9 models the exocytosis of the vesicle content.

\subsection{Emplicit pH representation}

To be able to further investigate the delicate equilibria that are established in
the AM symbiosis we needed a model that deals with $H^+$ and $OH^-$
and therefore could offer more precise analysis opportunities on the system -
to correctly model chemical reactions
at different pHs it is important to find a set of rules that is capable
of reaching and keeping the right ratio between $H^+$ and $OH^-$, even
when used with other rules that comprises these ions, adding or removing them.

To avoid the need of huge quantities of water molecules only hydrogen ions and hydroxide are considered and not
the process of water dissociation (as long as water is 55.5 M while, for example, at pH 4.5 hydrogen is $3.2 \times 10^{-5}$ M and hydroxyl is $3.2 \times 10^{-10}$ M and we have to use natural numbers to represent the quantities of molecules in the simulations it would be cumbersome to consider water).

Thus the rules simply have to ``create'' and ``destroy'' the ions: $H^+$ has to be destroyed considering its quantity and
generated considering $OH^-$ quantity and the same should be done for $OH^-$;\footnote{In such a way we abstract the reaction $H_2O \rightleftharpoons OH^- + H^+$ without considering explicitly the amount of water molecules.} different pH will be obtained changing
the rates of these rules.\\
The rules are easy to define: $ H^+ \srewrites{k1} \emptyseq $ and $ OH^- \srewrites{k2} \emptyseq$ for the rules
that destroy ions and $ H^+ \srewrites{k3} H^+ \conc OH^- $ and $ OH^- \srewrites{k4} OH^- \conc H^+$ to create them,
the stochastic simulation correctly applies them with an application rate that depends on the defined $k$s
and the given ion numbers in the term to which they are applied.
In this way two couples of rules are capable of maintaining
a proper ratio between $H^+$ and $OH^-$: to obtain different pH it is enough to tune their rates
in order to reflect the desiderate ratio.

\begin{figure}[t]
\begin{center}
\begin{minipage}{0.98\textwidth}
\begin{center}
\includegraphics[height=45mm]{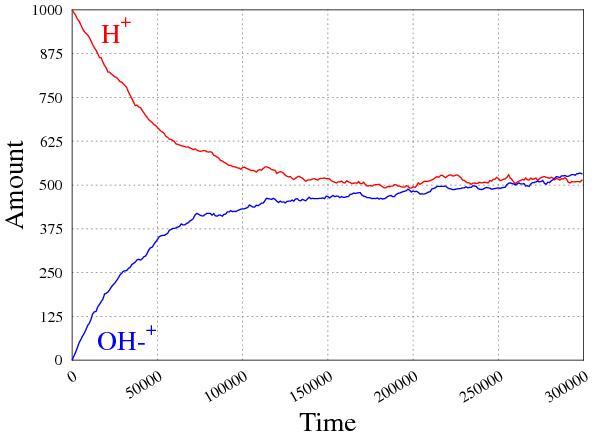}
\includegraphics[height=45mm]{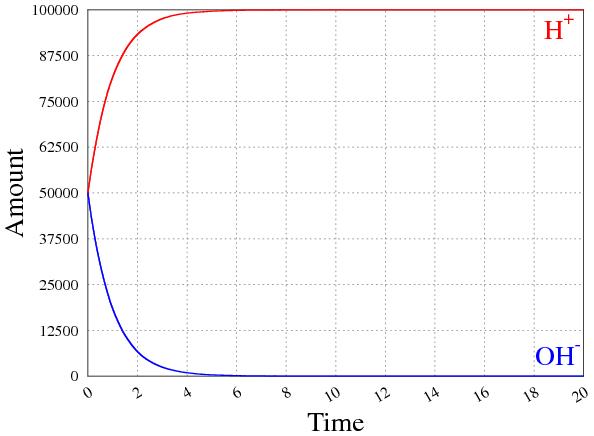}
\end{center}
\vspace{-0.5cm}
\caption{pH 7 and 4.5 with correct ratios reached}
\label{fig:ph_eq}
\end{minipage}
\end{center}
\end{figure}

We defined the correct rates for different pHs,
for example pH 4.5, which is necessary to model the periarbuscular space
and is characterized by a ratio between $H^+$ and $OH^-$ of $10^5$.

\subsection{$NH_3$/$NH_4^+$ equilibrium and LjAMT2;2}
After the definition of proper rules for pH, we had to drive the exchange
between ammonium cations and ammonia, with rules that should be capable of reaching and
keeping the right ratio for these molecules at a given pH.
Due to the explicit pH model the rules were changed with respect to the previous ones:
\begin{equation}
\label{eq1}\tag*{(R1')}
\top: NH_3 \conc H^+ \srewrites{k1} NH_4^+
\end{equation}
\begin{equation}
\label{eq2}\tag*{(R2')}
\top: NH_4^+ \srewrites{k2} NH_3 \conc H^+
\end{equation}

Rule~\ref{eq1} represents ammonia which becomes protonated binding a free
hydrogen ion, while rule~\ref{eq2} is the converse reaction. We did not consider other reactions
involving water, such as $NH_3 \conc H_2O \srewrites{k3} NH_4^+ \conc OH^-$,
for the previously explained reasons.

Several simulation were needed to tune the right rates, which were
defined for pH 7, pH 4.5 and pH 8.6 - more
extreme pHs are unlikely in our biological domain and they are difficult to model due to the ratios
that have to be reached ($1.7 \times 10^9$ between $H^+$ and $OH^-$ at pH 2.6),
which force to run simulations with huge molecule numbers.

\begin{figure}[t]
\begin{center}
\includegraphics[width=77mm]{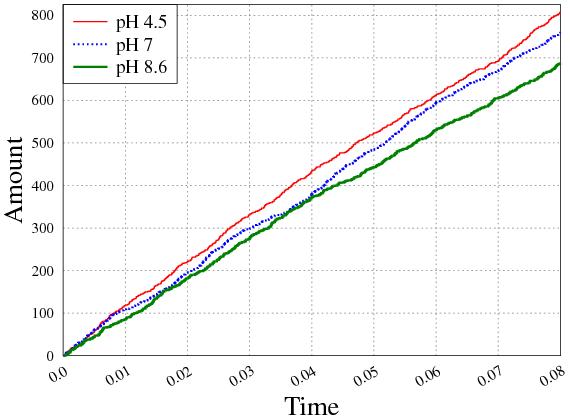}
\caption{\footnotesize{LjAMT2;2 uptake (internalized NH3), comparison between different extracellular pHs.}}
\label{fig:pHdep}
\end{center}
\end{figure}

Having defined all these rules the next step was to add LjAMT2;2
and try to confirm some of the results obtained with the first model,
for example the pH dependent uptake rate.
Figure~\ref{fig:pHdep} shows a plot that represents the internalized
$NH_3$ versus time with different periarbuscular pHs.
The three simulations, started with ``steady state'' quantities of $H^+$ and
$OH^-$ and of $NH_3$ and $NH_4^+$ outside the cell (according to the chosen pH),
while the cell started with no ammonia; they still have only a cell
with a single transporter on the membrane - to be able to compare the internalization
rate with sufficient numbers we changed the rate for the transport rule, namely $k_t=0.1 \times 10^{-2}$.

\subsection{Comparison with LjAMT1;1}

To further investigate the role of LjAMT2;2 in the context of AM and its peculiar
mechanism of transport, which does not depend on the $H^+$ gradient and seems otherwise
to have a role in its maintenance (by expelling the $H^+$ gotten from the $NH_4^+$
molecule), we compared it with another ammonium transporter which exists in plants
but is not so selectively expressed in arbusculated cells: LjAMT1;1~\cite{SMRPC01};
this is a transporter for $NH_4^+$ that does not expel $H^+$ in the periarbuscular space, therefore
it internalizes directly an $NH_4^+$ molecule.

It is interesting to try to understand if LjAMT2;2 has a role which is synergic with other transporters
in the AM symbiosis which relies on the $H^+$ gradient, such as those for the phosphates on the plant side
or for the carbohydrates in the fungi and if other ammonia transporters, like LjAMT1;1, would instead
``compete'' with other transporters consuming hydrogen.


Modelling what would happen if in the arbusculated cell LjAMT1;1 will be the principal ammonia
transporter, instead of LjAMT2;2, requires only to change the transporter rule with respect to the simulation
with the periarbuscular pH at 4.5 (which is the normal one) shown in the previous section:
\begin{equation*}\tag*{(R7')} \top: 
NH_4^+ \conc (LjAMT1 \conc x \into X)^\ell \srewrites{k_t} (LjAMT1 \conc x \into X \conc NH_4+)^\ell
\end{equation*}
Note that the rate is the same for both rules.

\begin{figure}[t]
\begin{center}
\begin{minipage}{0.98\textwidth}
\begin{center}
\subfigure[Periarbuscular $H^+$]{
\includegraphics[height=45mm]{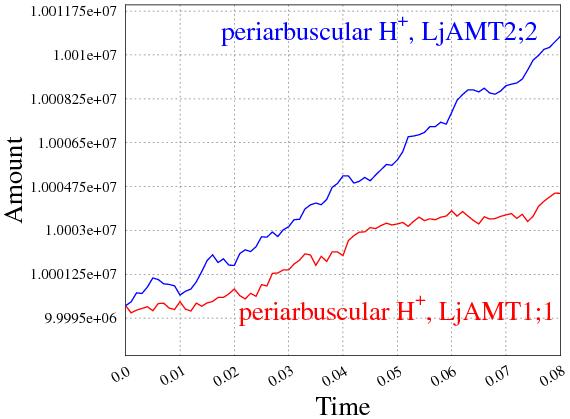}
\label{fig:LjAMT_comp1}
}
\subfigure[Periarbuscular $OH^-$]{
\includegraphics[height=45mm]{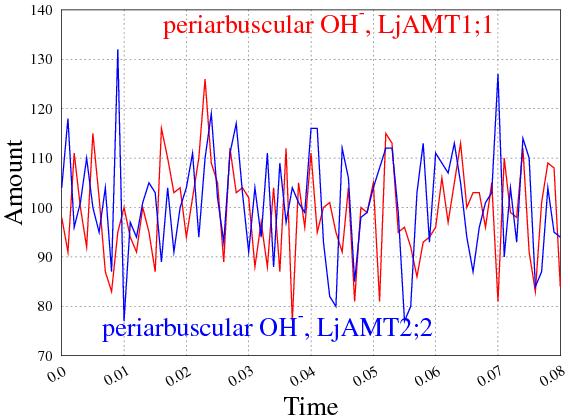}
\label{fig:LjAMT_comp2}
}
\caption{\footnotesize{Comparison between LjAMT1;1 and LjAMT2;2.}}
\end{center}
\label{fig:LjAMT_comp}
\end{minipage}
\end{center}
\end{figure}

Figure~\ref{fig:LjAMT_comp} represents the periarbuscular quantities of $H^+$ and $OH^-$ for two simulations
which had all the same rules except for the transporter one and started from the terms:
\begin{equation*}
NH_4^+ \times 10^7 \conc NH_3 \times 200 \conc OH^- \times 100 \conc H^+ \times 10^7 \conc
(LjAMT2 \into \emptyseq)^\ell
\end{equation*}
\begin{equation*}
NH_4^+ \times 10^7 \conc NH_3 \times 200 \conc OH^- \times 100 \conc H^+ \times 10^7 \conc
(LjAMT1 \into \emptyseq)^\ell
\end{equation*}

\section{Discussion}\label{SecDiscussion}
We dissected the route for the passage of $NH_3$ / $NH_4^+$ from the fungus to the plant in known and hypothetical mechanisms which were transformed in rules. Further, also the properties of the different compartments and their influence on the transported molecules were included, thus giving a first model for the simulation of the nutrients transfer.
With the model so far we can simulate the behaviour of the system when varying parameters as the different compartments pH, the initial substrate concentrations, the transport/diffusion speeds and the energy supply.

We can start comparing the two simulations with the plant cell
with the LjAMT2;2 transporter placed in different extracellular
situations: low $NH_4^+$ concentration
(Figure~\ref{fig:nh4_ljamt2_noNH4}) or high $NH_4^+$ concentration
(Figure~\ref{fig:nh4_ljamt2_noNH3}). As a natural
consequence of the greater concentration, the ammonium
uptake is faster when the simulation starts with more $NH_4^+$, as
long as the LjAMT2;2 can readily import it. The real situation
should be similar to this simulation, assuming that the level of
extracellular $NH_4^+$ / $NH_3$ is stable, meaning an active
symbiosis.

The simulation which represents an extracellular pH around 7 (Figure~\ref{fig:nh4_ljamt2_pH7}) shows a decreased internalisation
speed with respect to the simulation in Figure~\ref{fig:nh4_ljamt2_noNH3}, as could be inferred from the
concentrations of NH4\_inside and NH3\_inside in the plots on the right (focusing on the initial activity): this supports experimental data about
the pH-dependent activity of the transporter and suggests that the extracellular pH is fundamental
to achieve a sufficient ammonium uptake for the plants.
It is noticeable how the initial uptake rate in this case
is higher, despite the neutral pH, than the rate obtained considering an ``energy quantum''
used by the transporter (which has the same starting term),
as could be seen in the right panels of Figure~\ref{fig:nh4_ljamt2_energy}
and Figure~\ref{fig:nh4_ljamt2_pH7}.
These results could enforce the biological hypothesis
that, instead of ATP, a $NH_4^+$
concentration gradient (possibly created by the fungus) is used as
energy source by the LjAMT2;2 protein.

The simulations which also consider the fungal counterpart are
interesting because they provide an initial investigation of this
rather
 poorly characterized side of the symbiosis and confirm that plants can efficiently gain
ammonium if $NH_4^+$ is released from the fungi. This evidence
supports the latest biological hypothesis about how fungi supply
nitrogen to plants~\cite{Gov05,Cha06}, and could lead to further
models which could suggest which is the needed rate for $NH_4^+$
transport from fungi to the interfacial apoplast; thus driving
biologists toward one (or some) of the nowadays considered
hypotheses (active transport of $NH_4^+$, vesicle formation,
etc.).

The explicit representation of pH, while still yielding results comparable with the first
simulations, like the pH-dependency of the uptake rate which is clearly represented
in Figure~\ref{fig:pHdep}, offers a scenario where it is possible to analyze
more deeply LjAMT2;2 characteristics and the delicate interactions between
different transporters.

In the LjAMT2;2 and 1;1 comparison simulations the periarbuscolar pH is mantained at around 4.5
by the previously discussed rules, but by examining $H^+$ and $OH^-$ one could note
that LjAMT2;2 determines an increment of hydrogen which is higher than the one determined by LjAMT1;1,
while hydroxide shows oscillations around a mean value for both simulations.
This result could suggest that LjAMT2;2 indeed has a role in maintaining the $H^+$ gradient which is pivotal for other
nutrient exchanges that take place in the symbiosis and that its overexpression in the arbusculated cells has a functional
meaning.

\chapter{Modelling Ecological Systems}
\label{ecolog}
Computational Ecology is a field devoted to the quantitative description and analysis of ecological systems using empirical data, mathematical models (including statistical models), and computational technology. While the different components of this interdisciplinary field of research are not new, there is a new emphasis on the integrated treatment of the area. This emphasis is amplified by the expansion of our local, national, and international computational infrastructure, coupled with the heightened social awareness of ecological and environmental issues and its effects on research funding.

We advocate a convergence between computer and life sciences. This emerging paradigm moves to a system level understanding of life, where unpredictable,
complex behaviour show up. We claim that computer science will greatly contribute to a
better understanding of the behaviour of ecological systems. We plan to develop models, languages and tools for describing, analysing and
implementing \emph{in silico} ecological systems, as an additional contribution of Information Technology to
those typical research areas in current Computational Ecology, such as (i) storing, organising and
retrieving large amounts of ecological data or (ii) visual modelling techniques for scientific visualisation of multi--dimensional, computer--generated scenes that can be used to express empirical data.

More in detail, we use our formal framework for modelling and studying the
behaviour of living systems. Our starting point is that ecological systems are conveniently described as
entities that change their state because of the occurrence of biotic and abiotic interactions, giving
rise to some observable behaviour. We thus adhere to the view of living
systems as biological computing units.

In Section~\ref{sec:pop} we present some of the characteristic features leading the evolution of ecological systems, and we show how to encode them within CWC.

In section~\ref{Sec:Croton}, we take into consideration the species \emph{Croton wagneri}, a shrub in the dry ecosystem of southern Ecuador, and investigate how it could adapt to global climate change.

\section{Population Dynamics}\label{sec:pop}

Models of population dynamics describe the changes in the size and composition of populations.

A \emph{metapopulation}\footnote{The term metapopulation was coined by Richard Levins in 1970. In Levins' own words, it consists of ``a population of populations''~\cite{Lev69}.} is a group of populations of the same species distributed in different patches\footnote{A patch is a relatively homogeneous area differing from its surroundings.} and interacting at some level.
Thus, a metapopulation consists of several distinct populations and areas of suitable habitat.

Individual populations may tend to reach extinction as a consequence of demographic stochasticity (fluctuations in population size due to random demographic events); the smaller the population, the more prone it is to extinction. A metapopulation, as a whole, is often more stable: immigrants from one population (experiencing, e.g., a population boom) are likely to re-colonize the patches left open by the extinction of other populations. Also, by the \emph{rescue effect}, individuals of more dense populations may emigrate towards small populations, rescuing them from extinction.

\subsection{Exponential Growth Model}
The \emph{exponential growth model} is a common mathematical model for population dynamics, where, using $r$ to represent the pro-capita growth rate of a population of size $N$, the change of the population is proportional to the size of the already existing population:
$$
\frac{dN}{dt} = r \cdot N
$$

We can encode within CWC the exponential growth model with rate $r$ using a stochastic rewrite rule describing a reproduction event for a single individual at the given rate. Namely, given a population of species $a$ living in an environment modelled by a compartment with label $\ell$, the following CWC rule encodes the exponential growth model:
$$
\ell: a  \srewrites{r} a \conc a
$$
Counting the number of possible reactants, the growth rate of the overall population is automatically obtained by the stochastic semantics underlying CWC.

\subsection{BIDE model}

Populations are affected by births and deaths, by immigrations and emigrations (BIDE model~\cite{Cas01}). The number of individuals at time $t+1$ is given by:
$$N_{t+1}=N_{t}+B+I-D-E$$
where $N_t$ is the number of individuals at time $t$ and, between time $t$ and $t+1$, $B$ is the number of births, $I$ is the number of immigrations, $D$ is the number of deaths and $E$ is the number of emigrations.
Conditions triggering migration could be: climate, food availability or mating~\cite{DD07}.

We can encode within CWC the BIDE model for a compartment of type $\ell$ using stochastic rewrite rules describing the given events with their respective rates $r$, $i$, $d$, $e$:
$$
\begin{array}{lr}
\ell: a  \srewrites{r} a \conc a & \quad \text{(birth)}\\
\top: a \conc (x \into X)^\ell \srewrites{i} (x \into a \conc X)^\ell  & \quad \text{(immigration)}\\
\ell: a \srewrites{d} \emptyseq & \quad \text{(death)}\\
\top: (x \into a \conc X)^\ell  \srewrites{e} a \conc (x \into X)^\ell & \quad \text{(emigration)}
\end{array}
$$
Starting from a population of $N_t$ individuals at time $t$, the number $N_{t+1}$ of individuals at time  $t+1$ is computed by successive simulation steps of the stochastic algorithm. The race conditions computed according to the propensities of the given rules assure that all of the BIDE events are correctly taken into account.

\begin{example}\label{immigration} Immigration and extinction are key components of island biogeography. We model a metapopulation of species $a$ in a context of 5 different patches: 4 of which are relatively close, e.g. different ecological regions within a small continent, the last one is far away and difficult to reach, e.g. an island. The continental patches are modelled as CWC compartments of type $\ell_c$, the island is modelled as a compartment of type $\ell_i$. Births, deaths and migrations in the continental patches are modelled by the following CWC rules:
$$
\begin{array}{c}
\ell_c: a  \srewrites{0.005} a \conc a \quad \quad
\ell_c: a \srewrites{0.005} \emptyseq\\
\top: (x \into a \conc X)^{\ell_c}  \srewrites{0.01} a \conc (x \into X)^{\ell_c} \quad \quad
\top: a \conc (x \into X)^{\ell_c} \srewrites{0.5} (x \into a \conc X)^{\ell_c}
\end{array}
$$
These rates are drawn considering days as time unites and an average of life expectancy and reproduction time for the individuals of the species $a$ of 200 days ($\frac{1}{0.005}$). For the modelling of real case studies, these rates could be estimated from data collected \emph{in situ} by tagging individuals.\footnote{In the remaining examples we will omit a detailed time description.} In this model, when an individual emigrates from its previous patch it moves to the top-level compartment from where it may reach one of the close continental patches (might also be the old one) or start a journey through the sea (modelled as a rewrite rule putting the individual on the wrapping of the island compartment):
$$
\top: a \conc (x \into X)^{\ell_i} \srewrites{0.2} (x\conc a \into  X)^{\ell_i}\\
$$
Crossing the ocean is a long and difficult task and individuals trying it will probably die during the cruise; the luckiest ones, however, might actually reach the island, where they could eventually benefit of a better life expectancy for them and their descendants:
$$
\begin{array}{c}
\top: \conc (x\conc a \into X)^{\ell_i} \srewrites{0.333} (x \into  X)^{\ell_i} \quad \quad
\top: \conc (x\conc a \into X)^{\ell_i} \srewrites{0.0005} (x \into a\conc X)^{\ell_i}\\
\ell_i: a  \srewrites{0.007} a \conc a \quad \quad
\ell_i: a \srewrites{0.003} \emptyseq
\end{array}
$$
Considering the initial system modelled by the CWC term:
$$
\ov{t}= (\emptyseq \into 30*a)^{\ell_c} \conc (\emptyseq \into 30*a)^{\ell_c} \conc (\emptyseq \into 30*a)^{\ell_c} \conc (\emptyseq \into 30*a)^{\ell_c} \conc (\emptyseq \into \emptyseq)^{\ell_i}
$$
we can simulate the possible evolutions of the overall diffusion of individuals of species $a$ in the different patches. Notice that, on average, one over $\frac{0.333}{0.0005}$ individuals that try the ocean journey, actually reach the island.
In Figure~\ref{FigMP} we show the result of a simulation plotting the number of individuals in the different patches in a time range of approximatively 10 years. Note how, in the final part of the simulation, empty patches get recolonised. In this particular simulation, also, an exponential growth begins after the colonisation of the island.
The full CWC model describing this example can be found at: \url{http://www.di.unito.it/~troina/cmc13/metapopulation.cwc}.
\begin{figure}
\centering
\includegraphics[height=60mm]{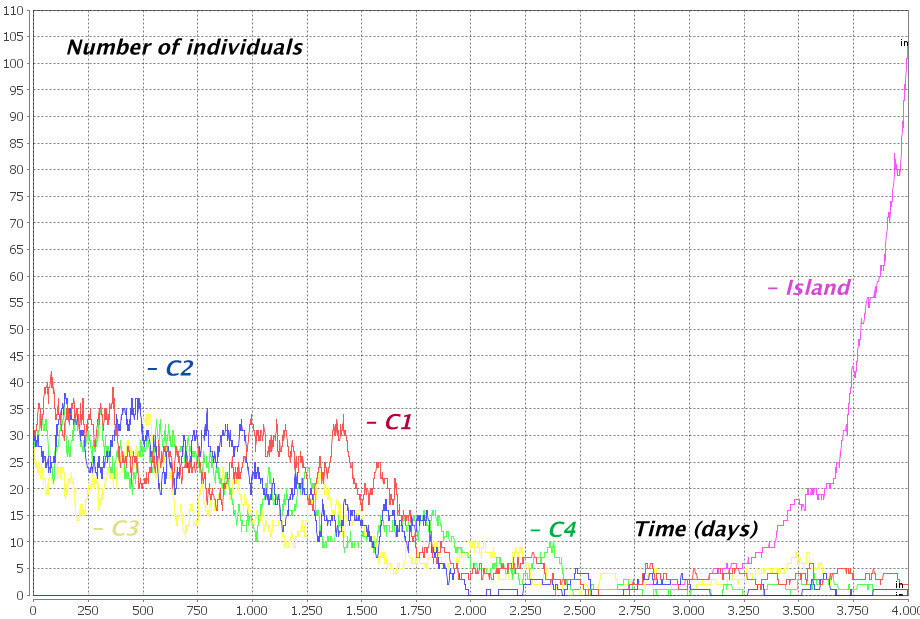}
\caption{\label{FigMP} Metapopulation dynamics.}
\end{figure}

\end{example}

\subsection{Logistic Model}

In ecology, using $r$ to represent the pro-capita growth rate of a population and $K$ the \emph{carrying capacity} of the hosting environment,\footnote{I.e., the population size at equilibrium.} $r/K$ selection theory~\cite{Pia70} describes a selective pressure driving populations evolution through the \emph{logistic model}~\cite{Verhulst}:
$$
    \frac{dN}{dt} = r\cdot N \cdot \left(1 - \frac{N}{K}\right)
$$
where $N$ represents the number of individuals in the population.

The logistic model with growth rate $r$ and carrying capacity $K$, for an environment modelled by a compartment with label $\ell$, can be encoded within CWC using two stochastic rewrite rules describing (i) a reproduction event for a single individual at the given rate and (ii) a death event modelled by a fight between two individuals at a rate that is inversely proportional to the carrying capacity:
$$
\begin{array}{l}
\ell: a  \srewrites{r} a \conc a\\
\ell: a \conc a \srewrites{\frac{2\cdot r}{K-1}} a
\end{array}
$$
If $N$ is the number of individuals of species $a$, the number of possible reactants for the first rule is $N$ and the number of possible reactants for the second rule is, in the exact stochastic model, ${N \choose 2}=\frac{N\cdot(N-1)}{2}$, i.e. the number of distinct pairs of individuals of species $a$. Multiplying this values by the respective rates we get the propensities of the two rules and can compute the value of $N$ when the equilibrium is reached (i.e., when the propensities of the two rules are equal): $r\cdot N = \frac{2 \cdot r}{K-1}\cdot \frac{N\cdot (N-1)}{2}$, that is when $N=0$ or $N=K$.

For a given species, this model allows to describe different growth rates and carrying capacities in different ecological regions. Identifying a CWC compartment type (through its label) with an ecological region, we can define rules describing the growth rate and carrying capacity for each region of interest.

Species showing a high growth rate are selected by the $r$ factor, they usually exploit low-crowded environments and produce many offspring, each of which has a relatively low probability of surviving to adulthood. By contrast, $K$-selected species adapt to densities close to the carrying capacity, tend to strongly compete in high-crowded environments and produce fewer offspring, each of which has a relatively high probability of surviving to adulthood.

\begin{example}
There is little, or no advantage at all, in evolving traits that permit successful competition with other organisms in an environment that is very likely to change rapidly, often in disruptive ways. Unstable environments thus favour species that reproduce quickly ($r$-selected species).
Characteristic traits of $r$-selected species include: high fecundity, small body, early reproduction and short generation time.
Stable environments, by contrast, favour the ability to compete successfully for limited resources ($K$-selected species). Characteristic traits of $K$-selected species include: large body size, long life expectancy, production of fewer offspring (usually requiring extensive parental care until maturity).
We consider individuals of two species, $a$ and $b$.
Individuals of species $a$ are modelled with an higher growth rate with respect to individuals of species $b$ ($r_a>r_b$). Carrying capacity for species $a$ is, instead, lower than the carrying capacity for species $b$ ($K_a<K_b$).
The following CWC rules describe the $r/K$ selection model for $r_a=5$, $r_b=0.00125$, $K_a=100$ and $K_b=1000$:
$$
\begin{array}{c}
\ell: a  \srewrites{5} a \conc a \quad \quad
\ell: b  \srewrites{0.00125} b \conc b\\
\ell: a \conc a \srewrites{0.1} a \quad \quad
\ell: b \conc b \srewrites{0.0000025} b
\end{array}
$$
We might consider a disruptive event occurring on average every 4000 years with the rule:
$$
\top: (x \into X)^{\ell} \srewrites{0.00025} (x \into a \conc b)^{\ell}
$$
devastating the whole content of the compartment (modelled with the variable $X$) and just leaving one individual of each species.
In Figure~\ref{FigrK} we show a 10000 years simulation for an initial system containing just one individual for each species. Notice how individuals of species $b$ are disadvantaged with respect to individuals of species $a$ who reach the carrying capacity very soon. A curve showing the growth of individuals of species $b$ in a stable (non disruptive) environment is also shown. The full CWC model describing this example can be found at: \url{http://www.di.unito.it/~troina/cmc13/rK.cwc}.
\begin{figure}
\centering
\includegraphics[height=50mm]{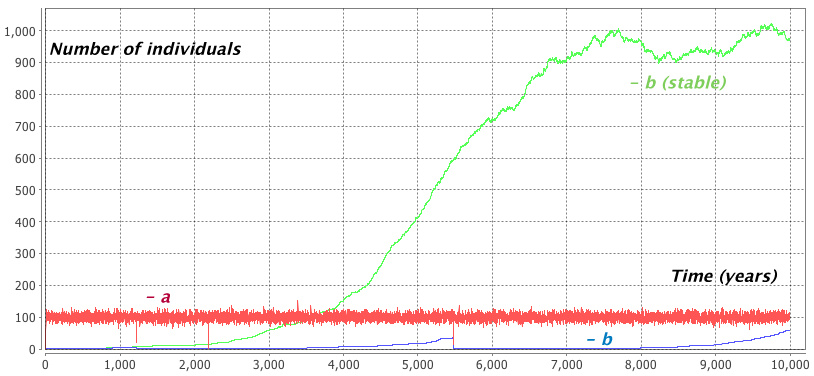}
\caption{\label{FigrK} r/K selection in a disruptive environment.}
\end{figure}
\end{example}

\subsection{Competition and Mutualism}

In ecology, \emph{competition} is a contest for resources between organisms: animals, e.g., compete for water supplies, food, mates, and other biological resources. In the long term period, competition among individuals of the same species (\emph{intraspecific competition}) and among individuals of different species (\emph{interspecific competition}) operates as a driving force of adaptation, and, eventually, by natural selection, of evolution. Competition, reducing the fitness of the individuals involved,\footnote{By fitness it is intended the ability of surviving and reproducing. A reduction in the fitness of an individual implies a reduction in the reproductive output. On the opposite side, a fitness benefit implies an improvement in the reproductive output.} has a great potential in altering the structure of populations, communities and the evolution of interacting species. It results in the ultimate survival, and dominance, of the best suited variants of species: species less suited to compete for resources either adapt or die out. We already depicted a form of competition in the context of the logistic model, where individuals of the same species compete for vital space (limited by the carrying capacity $K$).

Quite an apposite force is \emph{mutualism}, contest in which organisms of different species biologically interact in a relationship where each of the individuals involved obtain a fitness benefit. Similar interactions between individuals of the same species are known as \emph{co-operation}. Mutualism belongs to the category of symbiotic relationships, including also \emph{commensalism} (in which one species benefits and the other is neutral, i.e. has no harm nor benefits) and \emph{parasitism} (in which one species benefits at the expense of the other).

The general model for
competition and mutualism between two species $a$ and $b$ is defined by the following equations~\cite{Tak89}:
$$
\begin{array}{l}
   \frac{dN_a}{dt} = \frac{r_a \cdot N_a}{K_a} \cdot \left( K_a - N_a + \alpha_{ab}\cdot N_b \right) \\
   \frac{dN_b}{dt} = \frac{r_b \cdot N_b}{K_b} \cdot \left( K_b - N_b + \alpha_{ba}\cdot N_a \right)
\end{array}
$$
where the $r$ and $K$ factors model the growth rates and the carrying capacities for the two species, and
the $\alpha$ coefficients describe the nature of the relationship between the two species: if $\alpha_{ij}$ is negative, species $N_j$ has negative effects on species $N_i$ (i.e., by competing or preying it), if $\alpha_{ij}$ is positive, species $N_j$ has positive effects on species $N_i$ (i.e., through some kind of mutualistic interaction).

The logistic model, already discussed, is included in the differential equations above. Here we abstract away from it and just focus on the components which describe the effects of competition and mutualism we are now interested in.

\begin{model}[Competition and Mutualism] For a compartment of type $\ell$, we can encode within CWC the model about competition and mutualism for individuals of two species $a$ and $b$ using the following stochastic rewrite rules:
$$
\ell: a \conc b \srewrites{f_a \cdot |\alpha_{ab}|}
\left\{
\begin{array}{lr}
 a\conc a \conc b & \quad \text{if } \alpha_{ab}>0 \\
 b & \quad \text{if } \alpha_{ab}<0
\end{array}
\right.
\quad\quad
\ell: a \conc b \srewrites{f_b \cdot |\alpha_{ba}|}
\left\{
\begin{array}{lr}
 a\conc b \conc b & \quad \text{if } \alpha_{ba}>0 \\
 a & \quad \text{if } \alpha_{ba}<0
\end{array}
\right.
$$
where $f_i=\frac{r_i}{K_i}$ is obtained from the usual growth rate and carrying capacity. The $\alpha$ coefficients are put in absolute value to compute the rate of the rule, their signs affect the right hand part of the rewrite rule.
\end{model}

\begin{example}
Mutualism has driven the evolution of much of the biological diversity we see today, such as flower forms (important to attract mutualistic pollinators) and co-evolution between groups of species~\cite{Tho05}.
We consider two different species of pollinators, $a$ and $b$, and two different species of angiosperms (flowering plants), $c$ and $d$. The two pollinators compete between each other, and so do the angiosperms. Both species of pollinators have a mutualistic relation with both angiosperms, even if $a$ slightly prefers $c$ and $b$ slightly prefers $d$. For each of the species involved we consider the rules for the logistic model and for each pair of species we consider the rules for competition and mutualism. The parameters used for this model are in Table~\ref{TabMC}. So, for example, the mutualistic relations between $a$ and $c$ are expressed by the following CWC rules
$$
\top: a \conc c \srewrites{\frac{r_a}{K_a}\cdot \alpha_{ac}} a \conc a \conc c
\quad \quad \quad
\top: a \conc c \srewrites{\frac{r_c}{K_c}\cdot \alpha_{ca}} a \conc c \conc c
$$
Figure~\ref{FigCM} shows a simulation obtained starting from a system with 100 individuals of species $a$ and $b$ and 20 individuals of species $c$ and $d$. Note the initially balanced competition between pollinators $a$ and $b$. This random fluctuations are resolved by the ``long run'' competition between the angiosperms $c$ and $d$: when $d$ predominates over $c$ it starts favouring the pollinator $b$ that now can win its own competition with pollinator $a$. The model is completely symmetrical: in other runs, a faster casual predominance of a pollinator may lead the evolution of its preferred angiosperm. The CWC model describing this example can be found at: \url{http://www.di.unito.it/~troina/cmc13/compmutu.cwc}.

\begin{table}
\footnotesize
\centering
\begin{tabular}{|c|c|c|c|c|c|c|}
\hline
\textbf{Species} ($i$)	&	$r_i$		&	$K_i$	&	$\alpha_{ai}$	& $\alpha_{bi}$	& $\alpha_{ci}$& $\alpha_{di}$	\\
\hline
$a$								&	0.2			& 1000		& $\bullet$ 			& -1						& +0.03			 & +0.01				\\
\hline
$b$								&  0.2			& 1000		& -1						& $\bullet$			& +0.01			& +0.03				\\
\hline
$c$								&  0.0002		& 200		& +0.25				& +0.1					& $\bullet$ 		& -6						\\
\hline
$d$								&  0.0002		& 200		& +0.1					& +0.25				& -6					 & $\bullet$			\\
\hline
\end{tabular}
\normalsize
\caption{\label{TabMC} Parameters for the model of competition and mutualism.}
\end{table}

\begin{figure}
\centering
\includegraphics[height=50mm]{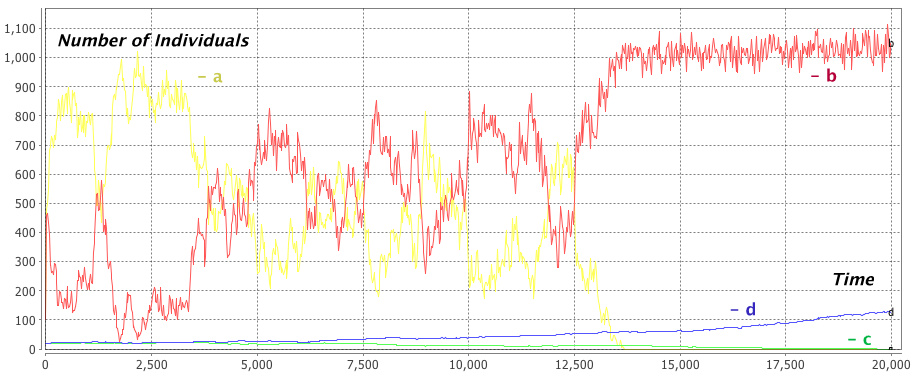}
\caption{\label{FigCM} Competition and Mutualism.}
\end{figure}

\end{example}

\subsection{Trophic Networks}

A \emph{food web} is a network mapping different species according to their alimentary habits. The edges of the network, called \emph{trophic links}, depict the feeding pathways (``who eats who'') in an ecological community~\cite{Elt27}. At the base of the food web there are autotroph species\footnote{Self-feeding: able to produce complex organic compounds from simple inorganic molecules and light (by \emph{photosynthesis}) or inorganic chemical reactions (\emph{chemosynthesis}).}, also called basal species. A \emph{food chain} is a linear feeding pathway that links monophagous consumers (with only one exiting trophic link) from a top consumer, usually a larger predator, to a basal species. The length of a chain is given by the number of links between the top consumer and the base of the web. The influence that the elements of a food web have on each other determine important features of an ecosystem like the presence of strong interactors (or \emph{keystone species}), the total number of species, and the structure, functionality and stability of the ecological community.

To model quantitatively a trophic link between species $a$ and $b$ (i.e., a particular kind of competition) we might use Lotka-Volterra equations~\cite{Vol26}:
$$
\begin{array}{l}
    \frac{dN_b}{dt} = N_b \cdot (r_b-\alpha \cdot N_a) \\
    \frac{dN_a}{dt} = N_a\cdot (\beta \cdot N_b - d)
\end{array}
$$
where $N_a$ and $N_b$ are the numbers of predators and preys, respectively, $r_b$ is the rate for prey growth, $\alpha$ is the prey mortality rate for per-capita predation, $\beta$ models the efficiency of conversion from prey to predator and $d$ is the mortality rate for predators.

\paragraph{Trophic Links} Within a compartment of type $\ell$, given a predation mortality $\alpha$ and conversion from prey to predator $\beta$, we can encode in CWC a trophic link between individuals of species $a$ (predator) and $b$ (prey) by the following rules:
$$
\begin{array}{l}
\ell: a \conc b \srewrites{\alpha} a\\
\ell: a \conc b \srewrites{\beta} a \conc a \conc b
\end{array}
$$
Here we omitted the rules for the prey exponential growth (absent predators) and predators exponential death (absent preys). These factors are present in the Lotka-Volterra model between two species, but could be substituted by the effects of other trophic links within the food web. In a more general scenario, a trophic link between species $a$ and $b$ could be expressed condensing the two rules within the single rule:
$$
\ell: a \conc b \srewrites{\gamma} a \conc a
$$
with a rate $\gamma$ modelling both the prey mortality rate and the predator conversion factor.

\begin{example}
Trophic cascades occur when predators in a food web suppress the abundance of their prey, thus limiting the predation of the next lower trophic level. For example, an herbivore species could be considered in an intermediate trophic level between a basal species and an higher predator. Trophic cascades are important for understanding the effects of removing top predators from food webs, as humans have done in many ecosystems through hunting or fishing activities. We consider a three-level food chain between species $a$, $b$ and $c$. The basal species $a$ reproduces with the logistic model, the intermediate species $b$ feeds on $a$, species $c$ predates species $b$:
\footnotesize
$$
\ell: a \srewrites{0.4} a \conc a \qquad
\ell: a \conc a\srewrites{0.0002} a
\qquad \qquad
\ell: a \conc b \srewrites{0.0004} b \conc b \qquad
\ell: b \conc c \srewrites{0.0008} c \conc c
$$
\normalsize
Individuals of species $c$ die naturally, until an hunting species enters the ecosystem. At a rate lower than predation, $b$ may also die naturally (absent predator). An atom $h$ may enter the ecosystem and start hunting individuals of species $c$:
\footnotesize
$$
\ell: c \srewrites{0.52} \emptyseq \qquad
\ell: b \srewrites{0.03} \emptyseq
\qquad \qquad
\top: h \conc (x \into X)^\ell \srewrites{0.003} (x \into X \conc h)^\ell \qquad
\ell: h \conc c \srewrites{0.5} h
$$
\normalsize
Figure~\ref{FigTC} shows a simulation for the initial term $h \conc (\emptyseq \into 1000*a \conc 100*b \conc 10*c )^\ell$. When the hunting activity starts, by removing the top predator, a top-down cascade destroys the whole community. The CWC model describing this example can be found at: \url{http://www.di.unito.it/~troina/cmc13/trophic.cwc}.
\begin{figure}
\centering
\includegraphics[height=50mm]{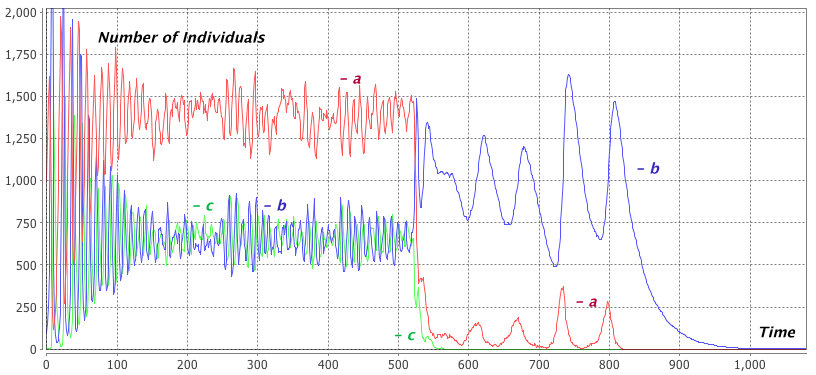}
\caption{\label{FigTC} A Throphic Cascade.}
\end{figure}
\end{example}

\section{\emph{Croton wagneri} and Climate Change}\label{Sec:Croton}

The knowledge of the relationships between the various attributes that define natural ecosystems (composition, structure, biotic and abiotic interactions, etc.) is crucial to improve our understanding of their functioning and dynamics. Much in the same way, this knowledge could allow us to evaluate and predict the ecological impacts of global change and to establish the appropriate measures to preserve the intrinsic characteristics of the ecological systems minimising the environmental impact.

Dry ecosystems are characterised by the presence of discontinuous vegetation that may reflect less than 60\% of the available landscape. The main pattern in arid ecosystems is a vegetation mosaic composed of patches and clear sites.
Dry ecosystems constitute an ideal model to analyse the relationships between the attributes of the ecosystem and its functioning at different levels of organisation. This becomes particularly important for the crucial need, for this kind of ecosystem, to deal with environmental problems like the loss of biodiversity and climate change \cite{ReySS02}. Morphological and functional changes shown by each species of these ecosystems are generally very different. As a consequence, the continuously adapting structure of the community will depend on the plasticity of each species in response to the major environmental changes. In the last decades, changes in the morphology of a species have been investigated from a functional perspective. This kind of analysis becomes particularly relevant because, for example, the existence of plant species in environments with extreme climatic conditions may depend not only on the ability of the plant to structurally specialise but also on its capacity to adapt metabolically. Another key factor of dry ecosystems is the possibility to study special patterns of the vegetation distribution to understand whether the plant communities show competitive or mutualistic relations. In particular, in dry (infertile) ecosystems, when competitive relations predominate, the spatial distribution of the vegetation tends to be uniform (with a quite regular distance between the plants); when mutualistic relations predominate, the vegetation tends to form clusters of plants; if none of these relations is observed, the spatial distribution of the vegetation is more random \cite{WM04}.

The study site is located in a dry scrub in the south of Ecuador ($03^\circ 58' 29''$ S, $01^\circ 25' 22''$ W) near the Catamayo Valley, with altitude ranging from 1400m to 1900m over the sea level. Floristically, in this site we can find typical species of xerophytic areas (about 107 different species and 41 botanical families). The seasonality of the area directly affects the species richness: about the 50\% of the species reported in the study site emerge only in the rainy season. Most species are shrubs although there are at least 12 species of trees with widely scattered individuals, at least 50\% of the species are herbs.
The average temperature is $20^\circ$ C with an annual rainfall around 600 mm, the most of the precipitation occurs between December and March, the remaining months present a water deficit with values between 200 and 600 mm of rain annually.
Generally, this area is composed by clay, rocky and sandy soils~\cite{AKS05}.

In~\cite{Gen93} about 1300 different species belonging to the dry ecosystems in Northwest South America have been identified. For this study we will focus on data available for the species \emph{Croton wagneri M\"{u}ll. Arg.}, belonging to the Euphorbiaceae family. This species, particularly widespread in tropical regions, can be identified by the combination of latex, alternate simple leaves, a pair of glands at the apex of the petiole, and the presence of stipules. \emph{C. wagneri} is the dominant endemic shrub in the dry scrub of Ecuador and has been listed as Near Threatened (NT) in the Red Book of Endemic Plants of Ecuador~\cite{VPLJ00}. This kind of shrub could be considered as a nurse species\footnote{A nurse plant is one with an established canopy, beneath which germination and survival are more likely due to increased shade, soil moisture, and nutrients.} and is particularly important for its ability to maintain the physical structure of the landscape and for its contribution to the functioning of the ecosystem (observing a marked mosaic pattern of patches having a relatively high biomass dispersed in a matrix of poor soil vegetation)~\cite{Gut01}.

In the study area, 16 plots have been installed along four levels of altitude gradients (1400m, 1550m, 1700m and 1900m): two 30mx30m plots per gradient in plane terrain and two 30mx30m plots per gradient in a slope surface (with slope greater than $10^\circ$). The data collection survey consisted in enumerating all of the \emph{C. wagneri} shrubs in the 16 plots: the spatial location of each individual was registered using a digital laser hypsometer. Additionally, plant heights were measured directly for each individual and the crown areas were calculated according to the method in~\cite{SACH11}. Weather stations collect data about temperatures and rainfall for each altitude gradient. An extract of data collected from the field can be found at: \url{http://www.di.unito.it/~troina/croton_data_extract.xlsx}.
This data show a morphological response of the shrub to two factors: temperature and terrain slope. A decrease of the plant height is observed at lower temperatures (corresponding to higher altitude gradients), or at higher slopes.

\subsection{The CWC model}

A simulation plot is modelled by a compartment with label $P$. Atoms $g$, representing the plot gradient (one $g$ for each metre of altitude over the level of the sea), describe an abiotic factor put in the compartment wrap.

According to the temperature data collected by the weather stations we correlate the mean temperatures in the different plots with their respective gradients. In the content of a simulation plot, atoms $t$, representing 1$^\circ$C each, model its temperature. Remember that, in this case, the higher the gradient, the lower the temperature. Thus, we model a constant increase of temperature within the simulation plot compartment, controlled by the gradient elements $g$ on its wrap:
$$
\top: (x \into X)^P  \srewrites{1}   (x \into t \conc X)^P \qquad
\top: (g \conc x \into t \conc X)^P  \srewrites{0.000024}   (g \conc x \into  X)^P
$$


Atoms $i$ are also contained within compartments of type $P$, representing the complementary angle of the plot's slope (e.g., $90*i$ for a plane plot or $66*i$ for a 24$^\circ$ slope).

We model \emph{C. wagneri} as a CWC compartment with label $c$. Its observed trait, namely the plant height, is specified by atomic elements $h$ (representing one mm each) on the compartment wrap.

To model the shrub heights distribution within a parcel, we consider the plant in two different states: a ``young'' and an ``adult'' state. Atomic elements $y$ and $a$ are exclusively, and uniquely, present within the plant compartment in such a way that the shrub height increases only when the shrub is in the young state ($y$ in its content). The following rules describe (i) the passage of the plant from $y$ to $a$ state with a rate corresponding to a 1 year average value, and (ii) the growth of the plant, affected by temperature and slope, with a rate estimated to fit the field collected data:
$$
c: y  \srewrites{0.00274}  a \qquad \qquad
P: t \conc i \conc (x \into y \conc X)^{c}  \srewrites{0.000718}   t \conc i \conc (x \conc h \into y \conc X)^c
$$

\subsection{Simulation results}
Now we have a model to describe the distribution of  \emph{C. wagneri} height using as parameters the plot's gradient ($n*g$) and slope ($m*i$). Since we do not model explicitly interactions that might occur between \emph{C. wagneri} individuals, we consider plots containing a single shrub. Carrying on multiple simulations, through the two phase model of the plant growth, after 1500 time units (here represented as days), we get a snapshot of the distribution of the shrubs heights within a parcel. The CWC model describing this application can be found at: \url{http://www.di.unito.it/~troina/cmc13/croton.cwc}.

Each of the graphs in Figure~\ref{FigC} is obtained by plotting the height deviation of 100 simulations with initial term $(n*g \into m*i \conc (\emptyseq \into y)^c)^P$.
The simulations in Figures~\ref{FigC} (a) and (c) reflect the conditions of real plots and the results give a good approximation of the real distribution of plant heights. Figures~\ref{FigC} (b) and (d) are produced considering an higher slope than the ones on the real plots from were the data has been collected. These simulation results can be used for further validation of the model by collecting data on new plots corresponding to the parameters of the simulation.

\begin{figure}
\centering
\subfigure[$1400*g$ and $90*i$] {
\includegraphics[height=45mm]{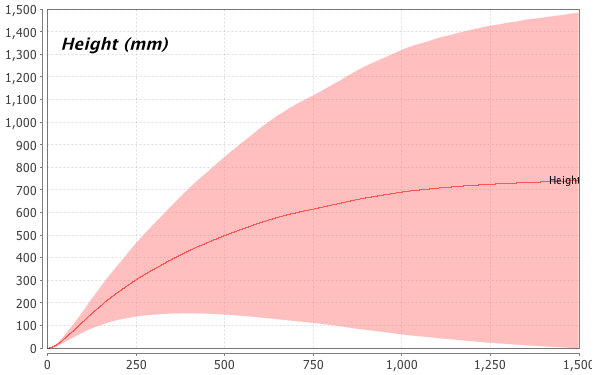}
}
\centering
\subfigure[$1550*g$ and $60*i$] {
\includegraphics[height=45mm]{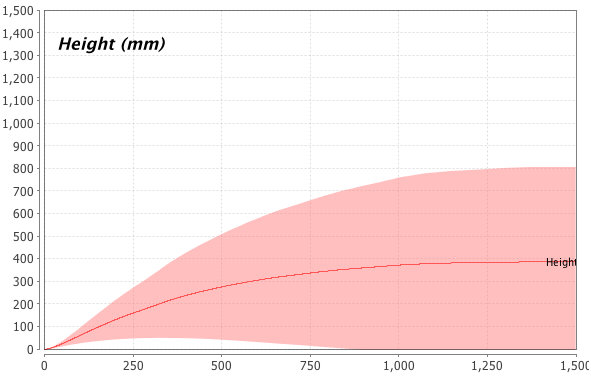}
}
\subfigure[$1700*g$ and $85*i$] {
\includegraphics[height=45mm]{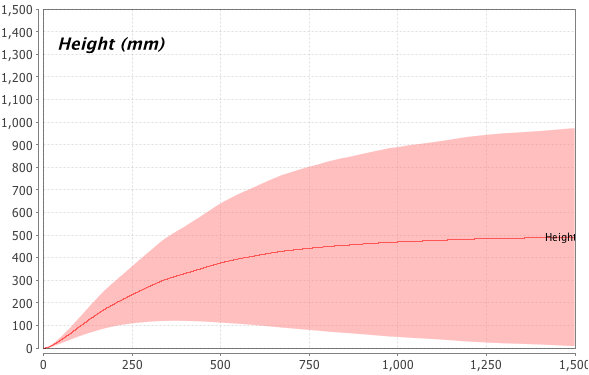}
}
\subfigure[$1900*g$ and $75*i$] {
\includegraphics[height=45mm]{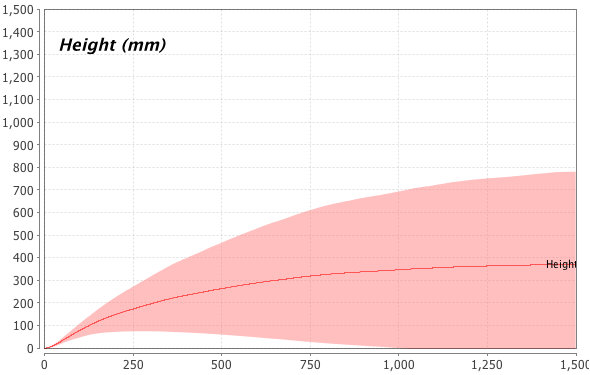}
}

\caption{Deviation of the height of \emph{Croton wagneri} for 100 simulations.}
\label{FigC}
\end{figure}

If we already trust the validity of our model, we can remove the correlation between the gradient and the temperature, and directly express the latter. Predictions can thus be made about the shrub height at different temperatures, and how it could adapt to global climate change. Figure~\ref{FigCT} shows two possible distributions of the shrub height at lower temperatures (given it will actually survive these more extreme conditions and follow the same trend).

\begin{figure}
\centering
\subfigure[$12^\circ$C, plain terrain] {
\includegraphics[height=45mm]{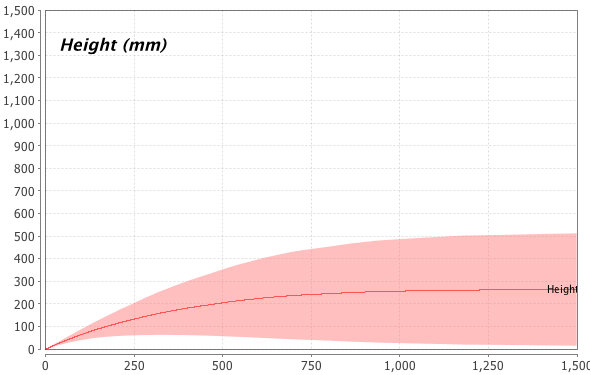}
}
\centering
\subfigure[$10^\circ$C, plain terrain] {
\includegraphics[height=45mm]{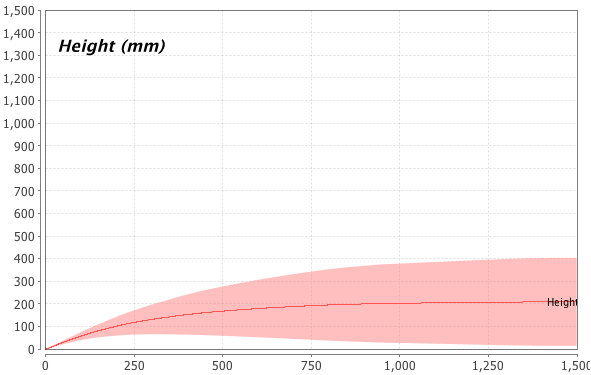}
}

\caption{Deviation of the height of \emph{Croton wagneri} for 100 simulations.}
\label{FigCT}
\end{figure}

\chapter{Spatial Models with CWC}
\label{spatial}
For the well-mixed chemical systems (even divided into nested
compartments) often found in cellular biology, interaction and
distribution analysis are sufficient to study the system's
behaviour. However, several complex biological phenomena include aspects in which space plays an essential role, key examples are the growth of tissues and organisms, embryogenesis and morphogenesis processes or cell proliferation. Thus, a realistic
modelling of biological processes requires space to be taken into account~\cite{K06}.
This has encouraged, in recent years, the development of formal models for the description of biological systems in which spatial properties can be modelled~\cite{CG10,barbuti2011SCLS,MV10}, as required by the emerging field of \emph{spatial systems biology}~\cite{SMG11} which aims at integrating the analysis of biological systems with spatial properties.

With CWC one might exploit the notion of compartment to represent spatial regions (given a fixed topology) in which the labels play a key role in defining the spatial properties. As we have done in Example~\ref{immigration} the movement and growth of system elements are described via specific rules involving adjacent compartments and the functionalities of biological or ecological components are affected by the spatial constraints given by the sector or region in which they interact with other elements.

\section{Case Studies with Topological Properties}

In this section we provide some hint about spatial modelling and analysis within the CWC framework by means of some paradigmatic examples. For simplicity,
we will consider the compartments in our system to be \emph{well-stirred}. Thus, also the compartments representing the spatial sectors of our topology
will consist of well-mixed components. Note that the topological analysis described in the following examples could be expressed also with other calculi
with compartmentalisation such as, for example, BioAmbients~\cite{RPSCS04}, Brane Calculi~\cite{Car04} and Beta-Binders~\cite{DPPQ06}.

\subsection{Developmental Biology: Retinal Layers}

The vertebrate retina contains seven major cell types, six
neuronal and one glial. These cells are derived, during the embryonal stage, from
multipotent retinal stem (progenitor) cells. The retina could be considered a stack of several neuronal layers (see Figure~\ref{FIG_RET}), where interactions specific to a certain developmental
stage are mediated by local conditions, which
affect the course in which the cells acquire their definite
phenotype~\cite{Gri01,ADT99}.

\begin{figure}[t]
\begin{center}
\begin{minipage}{0.98\textwidth}
\begin{center}
\includegraphics[width=50mm]{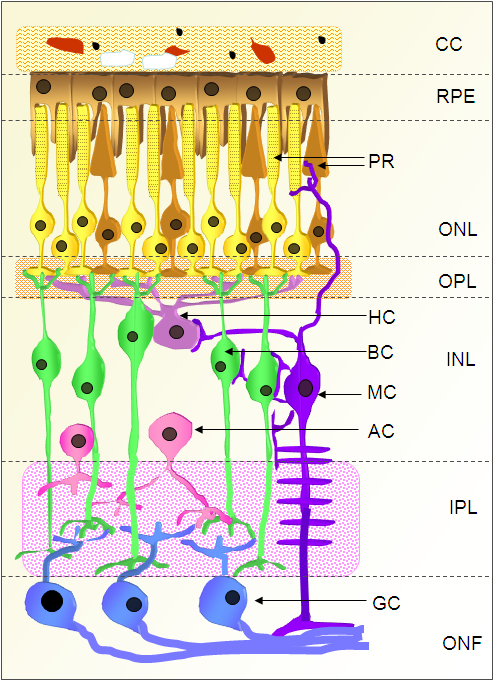}
\end{center}
\caption{\label{FIG_RET}Structural scheme of the vertebrate retina: (CC) choroid coat, (RPE) retinal pigment epithelium, (ONL) outer nuclear layer, (OPL) outer plexiform layer, (INL) inner nuclear layer, (IPL) inner plexiform layer, (ONF) optic nerve fibers. Cell types: (PR) photoreceptor cells: cones and rods, (HC) horizontal cells, (BC) bipolar cells, (MC) M\"uller cells, (AC) amacrine cells, (GC) ganglion cells.}
\end{minipage}
\end{center}
\end{figure}

The analysis of retinal layers appears to be a solid case study in a spatial context.
As a simple modelling exercise, we can describe in CWC a topological abstraction (we use the $\ldots$ notation to avoid details) of the retinal layers with the term:

$$
\begin{array}{ll}
\texttt{RETINA}= & ( \ldots \into \ldots )^{CC} \conc ( \ldots \into \ldots )^{RPE} \conc\\
 & ( \ldots \into (\ldots \into \ldots)^{cone} \conc (\ldots \into \ldots)^{rod} \conc \ldots )^{ONL} \conc\\
 & ( \ldots \into \ldots )^{OPL} \conc\\
 & ( \ldots \into (\ldots \into \ldots)^{hc} \conc (\ldots \into \ldots)^{bc} \conc (\ldots \into \ldots)^{mc} \conc (\ldots \into \ldots)^{ac} \conc \ldots )^{INL} \conc\\
 & ( \ldots \into \ldots )^{IPL} \conc\\
 & ( \ldots \into (\ldots \into \ldots)^{gc} \conc \ldots )^{ONF}
\end{array}
$$
where labels with capital characters denote the topological layers and labels with lower-case characters denote the cell types.

In the embryonal stage, all layers may contain only retinal stem (progenitor) cells with different signals, on the different layers, leading different cellular differentiation. Namely, the initial condition of the $ONL$ and $INL$ layer could be expressed by:

$$
(  \into (\ldots \into \ldots)^{stem} \conc (\ldots \into \ldots)^{stem} \conc \ldots )^{ONL} \conc (  \into (\ldots \into \ldots)^{stem} \conc (\ldots \into \ldots)^{stem} \conc \ldots )^{INL}
$$

and rules modelling cellular differentiation in the above layers could be:

$$
\begin{array}{lcrl}
&ONL & : & (x \into X)^{stem} \srewrites{k_{cone}} (x \into X)^{cone}\\
&ONL & : & (x \into X)^{stem} \srewrites{k_{rod}} (x \into X)^{rod}\\
&INL & : & (x \into X)^{stem} \srewrites{k_{hc}} (x \into X)^{hc}\\
&INL & : & (x \into X)^{stem} \srewrites{k_{ac}} (x \into X)^{ac}\\
& \ldots & &
\end{array}
$$

The reader might have noticed that, according to Figure~\ref{FIG_RET}, some of the retinal cells may expand through different layers (see, for example,
M\"uller cells). In the representation above we located these cells in the layer where the cell lays with the main part of its body (and its nucleus).
However, a more complex structure could be built if we slightly enrich the CWC framework by defining the compartment labels as (non recursive) algebraic
data types \`{a} la ML~\cite{SMLrev}, and adding the feature to create fresh names. In such a way, complexes with the same, unique, name may lay on
different topological sectors. For the moment, we avoid to implement this extension in the CWC framework, since it is not explicitly needed in the AM
symbiosis interactions we are considering. However, we leave this feature as a very close future work.

\subsection{Cell Growth and Proliferation: A CWC Grid}\label{SECT_GRID}

Eukaryotic cells reproduce themselves by cell cycle which can be divided in two brief periods: the interphase during which the cell grows, accumulating nutrients and duplicating its DNA and the mitosis ($M$) phase, during which the cell splits itself into two distinct cells, often called ``daughter cells''.
Interphase proceeds in three stages, $G1$, $S$, and $G2$. $G1$ (Gap1) is also called the growth phase, and is marked by synthesis of various enzymes that are required in $S$ phase, mainly those needed for DNA replication. In $S$ phase (Synthesis) DNA replication occurs. The cell then enters the $G2$ (Gap2) phase, which lasts until the cell enters mitosis. Again, significant biosynthesis occurs during this phase, mainly involving the production of microtubules, which are required during the process of mitosis.


Spatial representation of the growth of cell populations could be useful in many situations (in the study of bacteria colonies, in developmental biology, and in the analysis of cancer cells proliferation).

Intuitively, we may express cell proliferation by describing a finite space through a CWC grid such that on each  \emph{grid cell} we can insert a cell of the growing population under analysis. An $k\times n$ grid could be modelled in CWC with $k*n$ compartment labels modelling the \emph{grid cells}. A cell may grow in one of the adjacent \emph{grid cells} (with a given scheme of adjacency modelled by the set of rewrite rules).

Intuitively, using the atomic element $e$ to denote an empty \emph{grid cell}, an empty grid could be modelled by the following CWC term:

$$
\begin{array}{ll}
\texttt{Prol\_Grid}= & (  \into e )^{1,1} \conc \ldots \conc (  \into e )^{1,n} \conc\\
& (  \into e )^{2,1} \conc \ldots \conc (  \into e )^{2,n} \conc\\
& \ldots \\
& (  \into e )^{m,1} \conc \ldots \conc (  \into e )^{m,n} \conc\\
\end{array}
$$

If we represent a cell of our population as a compartment with membrane $m$ and with a single atom in its content representing the phase of the cell cycle ($M$, $G1$, $S$, $G2$), a \emph{grid cell} $i,j$ may contain the cell $(m \into M)^{cell}$ that could grow, with kinetic $k_{mit}$, towards the empty \emph{grid cell} $i,j+1$ with the rewrite rule
\footnote{Note that under our assumptions the variables 
will always match only with the empty multiset}:

$$
\begin{array}{lcrl}
&\TOP & : & ( x \into  ( m \conc y \into M \conc Y)^{cell} \conc X )^{i,j} ( z \into  e \conc Z )^{i,j+1} \srewrites{k_{mit}}  \\
&     &   & ( x \into  ( m \conc y \into G1 \conc Y)^{cell} \conc X  )^{i,j} ( z \into  ( m \into G1)^{cell} \conc Z )^{i,j+1} \\
\end{array}
$$

\noindent  while cells may change their internal state (cell cycle phase) with the rules:

$$
\begin{array}{lcrl}
&i,j & : & ( m \conc x \into G1 \conc X)^{cell}  \srewrites{k_{phase1}}  ( m \conc x \into S \conc X)^{cell}\\
&i,j & : & ( m \conc x \into S \conc X)^{cell}  \srewrites{k_{phase2}}  ( m \conc x \into G2 \conc X)^{cell}\\
&i,j & : & ( m \conc x \into G2 \conc X)^{cell}  \srewrites{k_{phase3}}  ( m \conc x \into M \conc X)^{cell}
\end{array}
$$

Note that each element of the grid can be occupied only by one cell, so this model allows only a limited growth of the cells, corresponding to the dimension of the grid. However this could be enough to investigate the properties of the cell growth.

\subsection{Quorum Sensing and Molecular Diffusion}\label{SECT_QS}

Molecular diffusion (or simply diffusion) arises form the motion of particles as a function of temperature, viscosity of the medium and the size (or mass) of the particles. Usually, diffusion explains the flux of molecules from a region of higher concentration to one of lower concentration, but it may also occur when there is no concentration gradient. The result of the diffusion process is a gradual mixing of the involved particles. In the case of uniform diffusion (given a uniform temperature and absent external forces acting on the particles) the process will eventually result in a complete mixing.
Equilibrium is reached when the concentrations of the diffusing molecules between two compartments becomes equal.

\begin{figure}
\begin{center}
\begin{minipage}{0.98\textwidth}
\begin{center}
\includegraphics[width=80mm]{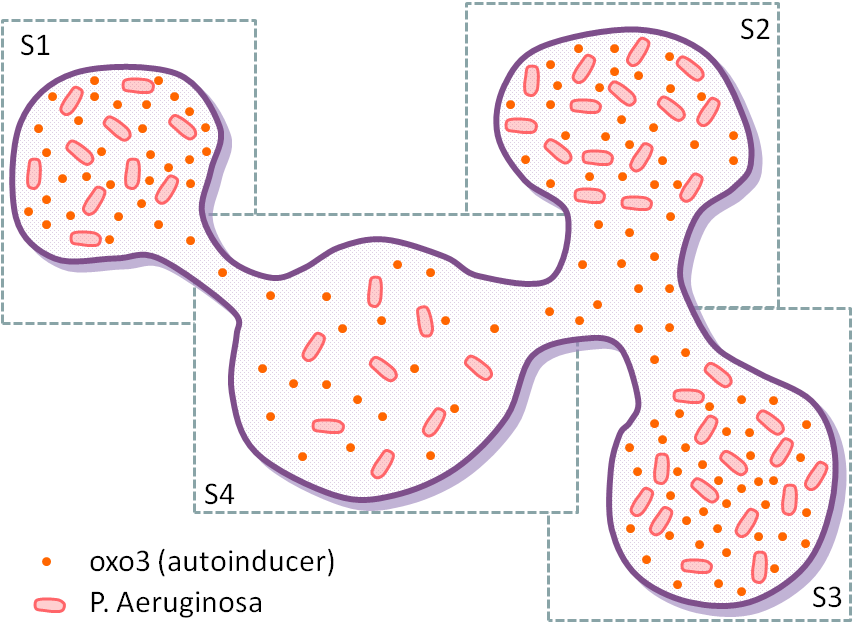}
\end{center}
\caption{\label{FIG_QS}Quorum Sensing in a distributed topology.}
\end{minipage}
\end{center}
\end{figure}
Diffusion is of fundamental importance in many disciplines of physics, chemistry, and biology.
In this subsection we will consider the diffusion of the auto-inducer communication signals in a quorum sensing process where the bacteria initiating the process are distributed in different sectors of an environment (see Figure~\ref{FIG_QS}). We consider the sectors of the environment to be connected through channels with different capacities and different speeds regulating the movement of the auto-inducer molecules between the different sectors.

\begin{figure}
\center
\includegraphics[width=.49\textwidth]{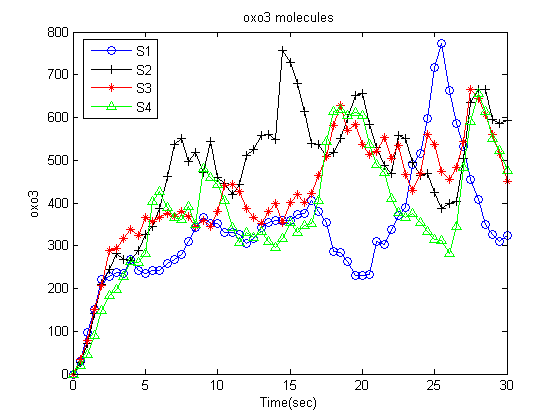}
\includegraphics[width=.49\textwidth]{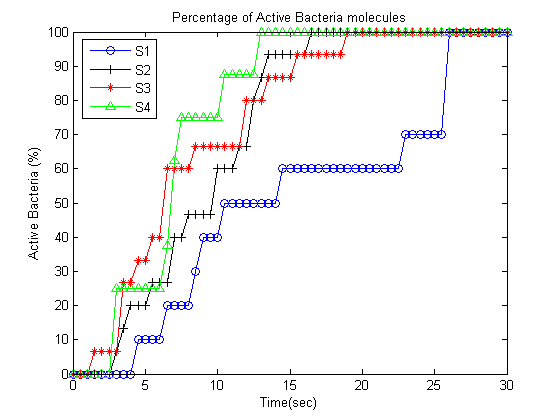}
\caption{\label{res_qs} Simulation results of oxo3 molecules (left figure) and active bacteria percentage with respect to the total number of bacteria (right figure) in the different sectors.}
\end{figure}
The CWC model describing the system can be found in Appendix A. We modelled the four sectors as four different compartments with special labels $Si$ allowing the flow of the oxo3 auto-inducer molecules with different speed. Bacteria (modelled as compartments with label $m$) were marked as ``active'' when they reached a sufficient high level of the oxo3 auto-inducer. The results of a simulation running for $30$ seconds is shown in Figure~\ref{res_qs}.

Although CWC does not have a direct mechanism to formulate the spatial position of its terms and a metric distance between them, it has the potential to describe topologically different spatial concepts. For example, the compartments are able, through the containment specification, to express discrete distance among the entities that are inside and those on the outside. Moreover, through the rule kinetics we are able to model space anisotropy (lack of uniformity in all orientations), which are useful to formulate complex concepts of reachability. The model discussed in this section is a clear example of these issues.

\section{Spatial Interactions in the Arbuscular Mycorrhiza Symbiosis}

Arbuscular mycorrhizae, which are formed between $80\%$ of land plants and arbuscular mycorrhizal (AM) fungi belonging to the Glomeromycota phylum, are the most common and widespread symbiosis on our planet \cite{harrison2005signaling}. AM fungi are obligate symbionts
which supply the host with essential nutrients such as phosphate, nitrate and other minerals from the soil. In return, AM fungi receive carbohydrates derived from photosynthesis in the host. AM symbiosis also confers resistance to the plant against pathogens and environmental stresses. Despite the central importance of AM symbiosis in both agriculture and natural ecosystems, the mechanisms for the formation of a functional symbiosis between plants and AM fungi are largely unknown.

\subsection{The Biological Interactions under Investigation}

The interaction begins with a molecular dialogue between the plant and the fungus \cite{harrison2005signaling}. Host roots release signalling molecules characterized as strigolactones \cite{akiyama2005plant} which are sesquiterpenes derived from the carotenoid pathway \cite{matusova2005strigolactone,lopez2008tomato} 
Within just a few hours, strigolactones at subnanomolar concentrations induce alterations in fungal physiology and mitochondrial activity and extensive hyphal branching \cite{besserer2006strigolactones}.
Root exudates from plants grown under phosphate-limited conditions are more active than those from plants with sufficient phosphate nutrition, suggesting that the production of these active exudates in roots is regulated by phosphate availability \cite{nagahashi2000partial,yoneyama2001production}.
Strigolactones also appear to act as chemo-attractants: fungal hyphae of \textit{Glomus mosseae} exhibited chemotropic growth towards roots at a distance of at least 910 $\mu$m in response to host-derived signals, possibly strigolactones \cite{sbrana2005chemotropism}.
Some solid evidence has been presented for AM fungal production of a long-hypothesized symbiotic signal, the ``Myc Factor'', in response to the plant symbiotic signal. Fungal hyphae of the genus \textit{Gigaspora} growing in the vicinity of host roots, but separated from the roots by a membrane, release a diffusible substance that induces the expression of a symbiosis-specific gene 
in \textit{Medicago truncatula} roots \cite{kosuta2003diffusible}. This expression was correlated both spatially and temporally with the appearance of hyphal branching, and was not observed when hyphal branching was absent.
The external signal released by these fungi is perceived by a receptor on the plant plasma membrane and is transduced into the cell with the activation of a symbiotic signalling pathway that lead to the colonization process.
One of the first response observed in epidermal cells of the root is a repeated oscillation of $Ca^{2+}$ concentration in the nucleus and perinuclear cytoplasm through the alternate activity of $Ca^{2+}$ channels and transporters \cite{oldroyd2008coordinating,kosuta2008differential,chabaud2011arbuscular}.
Kosuta and collegues \cite{kosuta2008differential} assessed mycorrhizal induced calcium changes in \textit{Medicago truncatula} plants transformed with the calcium reporter cameleon. Calcium oscillations were observed in a number of cells in close proximity to plant/fungal contact points. Very recently Chabaud and colleagues \cite{chabaud2011arbuscular} using root organ cultures of both \textit{Medicago truncatula} and \textit{Daucus carota} 
observed $Ca^{2+}$ spiking in AM-responsive zone of the root treated with AM spore exudate.

Once reached the root surface the AM fungus differentiates a hyphopodium via
which it enters the root. 
A reorganization of the plant cell occurs before the fungal penetration: a thick cytoplasmic bridge (called pre-penetration apparatus -PPA-) 
is formed and it constitutes a trans-cellular tunnel in which
the fungal hypha will grow \cite{genre2005arbuscular}. Following this event, a hyphal peg is produced which enters and crosses the epidermal cell, avoiding a direct contact between fungal
wall and host cytoplasm thanks to a surrounding membrane of host origin \cite{genre2007check}.
Than the fungus overcomes the epidermal layer and it grows inter-and intracellularly
all along the root in order to spread fungal structures.
Once inside the inner layers of the cortical cells the differentiation of specialized, highly branched intracellular hyphae called arbuscules occur. Arbuscules are considered the major site for nutrients exchange between the two organisms 
and they form by repeated dichotomous hyphal branching and grow
until they fill the cortical cell. They are ephemeral structures \cite{toth1984dynamics,javot2007medicago}: at some point after maturity, the arbuscules collapse, degenerate, and die and the plant cell regains its previous organization \cite{bonfante1984anatomy}. The mechanisms underlying arbuscule turnover are unknown
but do not involve plant cell death. In arbuscule-containing cells the two partners come in intimate physical contact and adapt their metabolisms to allow reciprocal benefits.

\subsection{The CWC Model}

We model the Arbuscular Mycorrhyzal symbiosis in a 2D space. The soil environment is partitioned into different concentric layers to account for the distance between the plant root cells and the fungal hyphae where the inner layer contains the plant root cells. The plant root tissue is also modelled as concentric layers each one composed by sectors.

For the sake of simplicity we modelled the soil into $5$ different concentric layers (modelled as different compartments labelled by $L_0$-$L_{4}$ identifying $L_4$ with the top level $\TOP$). Although the root has a complex layered structure we simplified it into $3$ concentric layers (one epidermal layer and $2$ cortical layers) each one composed by $5$ sectors mapped as different compartments labelled by $e_i$ ($i=1 \ldots 5$) for epidermal cells and $c_{jk}$ ($j=1, 2$ and $k=1 \ldots 5$) for cortical cells. Exploiting the flexibility of CWC, compartment labels are here used to characterize both their spatial position and their biological properties. All the compartments and the main atoms of the model are depicted in Figure~\ref{sym_schema}.

\begin{figure}
\center
\includegraphics[width=.5\textwidth]{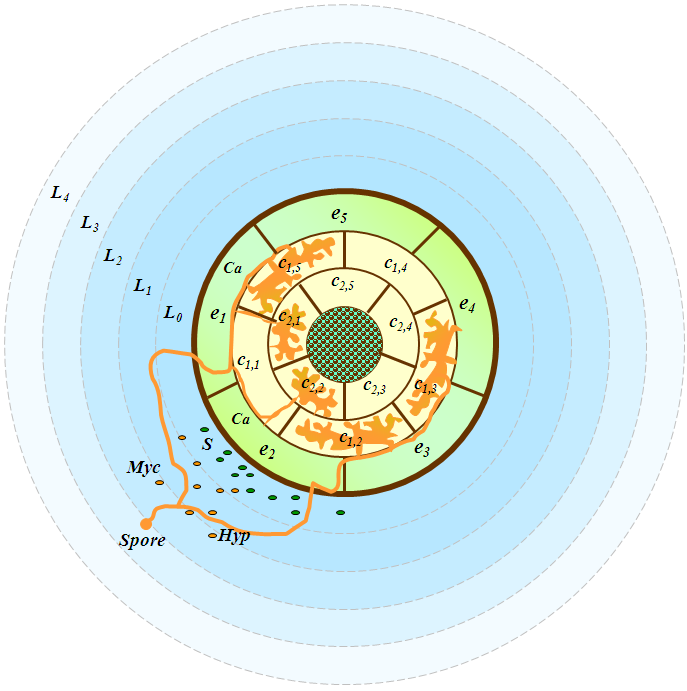}
\caption{\label{sym_schema}2D spatial model of the Arbuscular Mycorrhyzal symbiosis.}
\end{figure}

The stigolactones (atoms $S$) derived from the carotenoid pathway (atom $C_P$) are released by epidermal cells and can diffuse through the soil degrading their activity when farthest from the plant. This situation was modelled by the following rules:

\begin{equation}
\begin{array}{lcrlr}
(R_{CP_i}) & L_0 & : &  ( x \into C_P \conc X)^{e_i} \srewrites{K_{CP}} S ( x \into C_P \conc X)^{e_i} & i=1 \ldots 5\\
(R_{S_i})  & L_{i+1} & : &  ( x \into S \conc X)^{L_i} \srewrites{K_{S}} S ( x \into \conc X)^{L_i} &  i=0 \ldots 3\\
(R_{S_{4}})  & \TOP & : &  S \srewrites{K_{S}}  \emptyseq &
\end{array}
\end{equation}

Since phosphate (atoms $P$) conditions regulate strigolactones biosynthesis inhibiting the carotenoid pathway (atom $IC_P$), the following rules were considered:

\begin{equation}
\begin{array}{lcrlr}
(R_{ICP_i}) & L_0 & : & P ( x \into C_P \conc X)^{e_i} \srewrites{K_{ICP}} P ( x \into IC_P \conc X)^{e_i} & i=1 \ldots 5
\end{array}
\end{equation}

The fungal structure is described by atoms
where each atom represents a particular type of fungal component. The atom $Spore$ represents the fungal spore while atoms $Hyp$ represent fungal hyphae.
AM fungal hyphae branching towards the plant root is regulated by the strigolacones activity. If we locate the fungal spore at a fixed distance from the plant root (let us say at the layer $L_k$ where $0 < k < 4$) then the spore hyphal branching is generated stochastically in the relative inferior and superior layers. In our experiments the fungal spore was placed at the soil layer $L_2$. The hyphal branching is radial with respect to the spore position causing a branching away from the spore and promoted at the proximity of the root by the strigolactones. This branching process was represented by the following rules:

\begin{equation}
\begin{array}{lcrlr}
(R_{F_1}) & L_2 & : & Spore ( x \into  \conc X)^{L_{1}} \srewrites{K_{F}}  Spore ( x \into Hyp \conc X)^{L_{1}}  & \\
(R_{F_2}) & L_{3} & : & ( x \into Spore \conc X)^{L_{2}} \srewrites{K_{F}}  Hyp ( x \into Spore \conc X)^{L_{2}}  & \\
(R_{H_j}) & \TOP & : & ( x \into Hyp \conc X)^{L_{3}} \srewrites{K_{H}}  Hyp ( x \into Hyp \conc X)^{L_{3}}  & \\
(R_{H_1}) & L_{1} & : & S  \conc Hyp ( x \into \conc X)^{L_{0}} \srewrites{K_{H_1}}  S  \conc Hyp ( x \into Hyp \conc X)^{L_{0}}  &
\end{array}
\end{equation}

The AM fungal production of Myc factor (atom $Myc$) at the proximity of the root cells inducing the calcium spiking phenomenon (atom $Ca$) at a responsive state (atom $Resp$) which prepares the epidermal cells to include the fungal hyphae was modelled by the following rules:

\begin{equation}
\begin{array}{lcrlr}
(R_{Myc}) & L_0 & : & Hyp \srewrites{K_{M}} Myc \conc Hyp & \\
(R_{MycCa_i}) & L_0 & : & Myc ( Resp \conc x \into \conc X)^{e_i} \srewrites{K_{MC}}  (Resp \conc x \into \conc Ca \conc X)^{e_i} & i=1 \ldots 5\\
(R_{Ca_i}) & L_0 & : & Hyp ( Resp \conc x \into Ca \conc X)^{e_i} \srewrites{K_{Ca}} Hyp ( NResp \conc x \into Hyp \conc X)^{e_i} & i=1 \ldots 5
\end{array}
\end{equation}

where the epidermal cells change their state from responsive to not responsive (atom $NResp$) since they can be penetrated by only one hypha.

Once the fungal hyphae penetrate the epidermal cells they are allowed to branch in the two directions of the underneath cortical layer and from this  cortical layer to the inner one. The cortical cells change their state from responsive (atom $Resp$) to transitive (atom $Trans$) and then to mycorrhized (atom $Arb$) since an hypha can grow in the cortical intercellular spaces even if already mycorrhized but only one arbuscule can be generated. The hyphal branching from the epidermal layer to the cortical layers is modelled through the following rules:

\begin{equation}
\begin{array}{lcrlr}
(R_{CB_{1i}}) & L_0 & : & ( NResp \conc x \into Hyp \conc X)^{e_i} ( Resp \conc y \into \conc Y)^{c_{1j}} \srewrites{K_{CB}} \\ 
              &     &   & ( NResp \conc x \into Hyp \conc X)^{e_i} ( Trans \conc y \into  Hyp \conc Y)^{c_{1j}} & \\
& & & i=1 \ldots 5, j=i (mod 5), i+1 (mod 5); & \\
(R_{CB1_{2i}}) & L_0 & : & ( Trans \conc x \into Hyp \conc X)^{c_{1i}} ( Resp \conc y \into  \conc Y)^{c_{2j}} \srewrites{K_{CB}} \\
               &     &   &  ( Trans \conc x \into Hyp \conc X)^{c_{1i}} ( Trans \conc x \into  Hyp \conc Y)^{c_{2j}} & \\
 & & & i=1 \ldots 5, j=i (mod 5), i+1 (mod 5); & \\
(R_{CB2_{2i}}) & L_0 & : & ( Arb \conc x \into Hyp \conc X)^{c_{1i}} ( Resp \conc y \into  \conc Y)^{c_{2j}} \srewrites{K_{CB}}  \\
               &     &   & ( Arb \conc x \into Hyp \conc X)^{c_{1i}} ( Trans \conc y \into  Hyp \conc Y)^{c_{2j}} & \\
 & & & i=1 \ldots 5, j=i (mod 5), i+1 (mod 5); & \\
\end{array}
\end{equation}

Finally, the modelling of each fungal hyphae penetrated into a cortical cell to create an arbuscule (atom $A$) and start the symbiosis is performed by the following rules:

\begin{equation}
\begin{array}{lcrlr}
(R_{A_{ij}}) & L_0 & : & ( Trans \conc x \into Hyp \conc X)^{c_{ij}} \srewrites{K_{A}}   ( Arb \conc x \into Hyp \conc X)^{c_{ij}} & i=1, 2 ; j=1 \ldots 5\\
\end{array}
\end{equation}

\subsection{Simulation Results}

The initial term describing our system is given by $T$:
\begin{equation}
\begin{array}{lll}
T =& ( \into  \conc ( \into  \conc ( \into Spore \conc ( \into  \conc ( \into  \conc T')^{L_{0}})^{L_{1}})^{L_{2}})^{L_{3}})^{\TOP}\\
&\text{where} \\
&T'= V \conc U \\
& V = V_1 \conc  V_2 \conc  V_3 \conc  V_4 \conc  V_5\\
& V_i = ( Resp \into  \conc C_P)^{e_{i}}\\
& U = U_{11} \conc  U_{12} \conc  U_{13} \conc  U_{14} \conc  U_{15} \conc  U_{21} \conc   U_{22} \conc  U_{23} \conc  U_{24} \conc  U_{25}\\
& U_{jk} = ( Resp \into  \conc )^{c_{jk}}\\
\end{array}
\end{equation}
\begin{figure}
\center
\includegraphics[width=.49\textwidth]{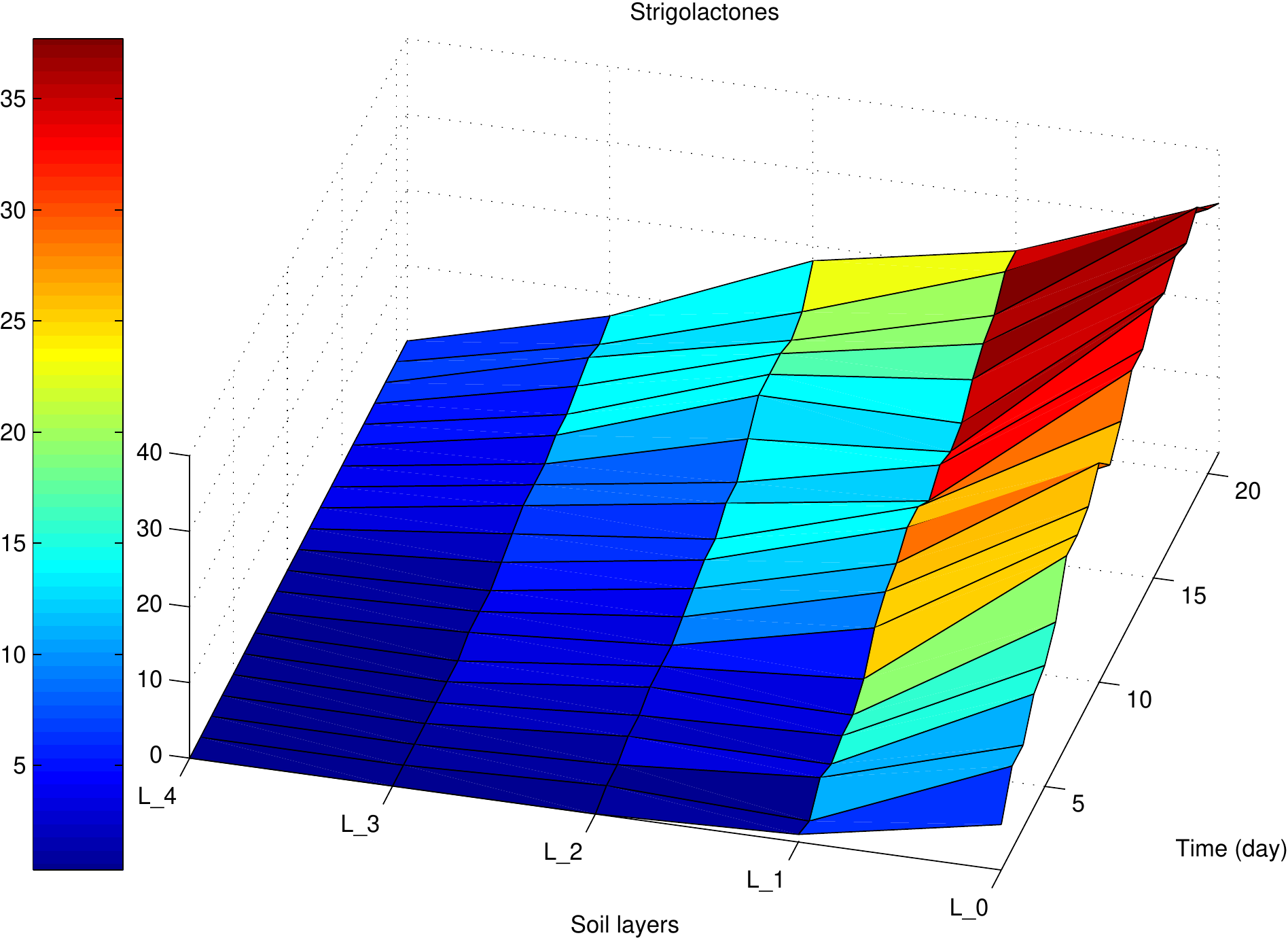}
\includegraphics[width=.49\textwidth]{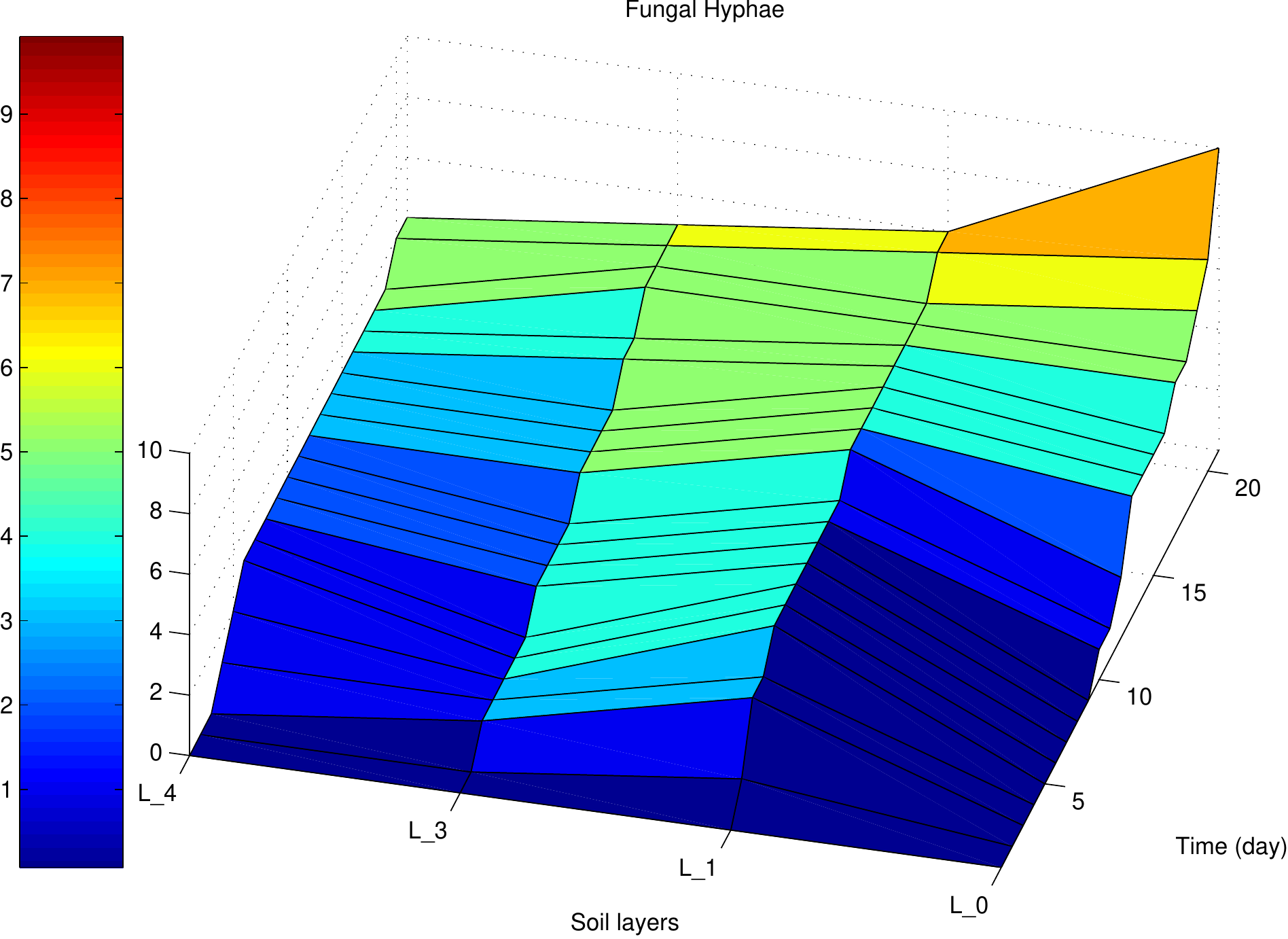}
\caption{\label{res_noP}Simulation results of strigolactones diffusion (left figure) and fungal hyphal branching (right figure) along the soil layers during time.}
\end{figure}

The model has been used to virtually simulate $21$ days of the fungus--plant interactions. The reaction rates have been chosen to resemble the experimental observations of ``in vivo'' fungal germination, hyphal branching and arbuscules creation under similar environmental conditions of the simulations. The rates were set as following: $K_{CP}= 0.8$, $K_{S}=0.1$, $K_{ICP}= 0.01$, $K_{F}=0.2$, $K_{H}=0.05$, $K_{H_1}
0.01$, $K_{M} = 0.2$, $K_{MC} = 0.2$, $K_{Ca} = 1.0$, $K_{CB} = 0.9$, $K_{A} = 1.0$.
In Figure \ref{res_noP} we show a surface plot representing an exemplification of the CWC simulation about strigolactones diffusion (on the left) and fungal hyphal branching (on the right) along the soil layers during time.

We compared our simulation results adding the phosphate ($100$ $P$ atoms) to the soil layers yielding to the results given by Figure \ref{res_highP}. In this case we have a lower strigolactones production by the root cells and an hyphal branching not promoted in the direction of the root.

We performed $60$ simulations in the three conditions of low phosphate ($10$ $P$ atoms), medium phosphate ($50$ $P$ atoms) and high phosphate ($100$ $P$ atoms) conditions leading to a $60\%$ of arbuscules in the first case, $30\%$ in the second case, and $10\%$ in the latter case. The first arbuscules occur around the $8^{th}-9^{th}$ day of simulation at low phosphate condition while they are formed from the $9^{th}$ to the $15^{th}$ day at high phosphate conditions.
These results reflect the in vivo experiments on several plants inoculated with different AM fungi \cite{mosse1973advances,asimi1980influence,hepper1983effect}.
\begin{figure}
\center
\includegraphics[width=.49\textwidth]{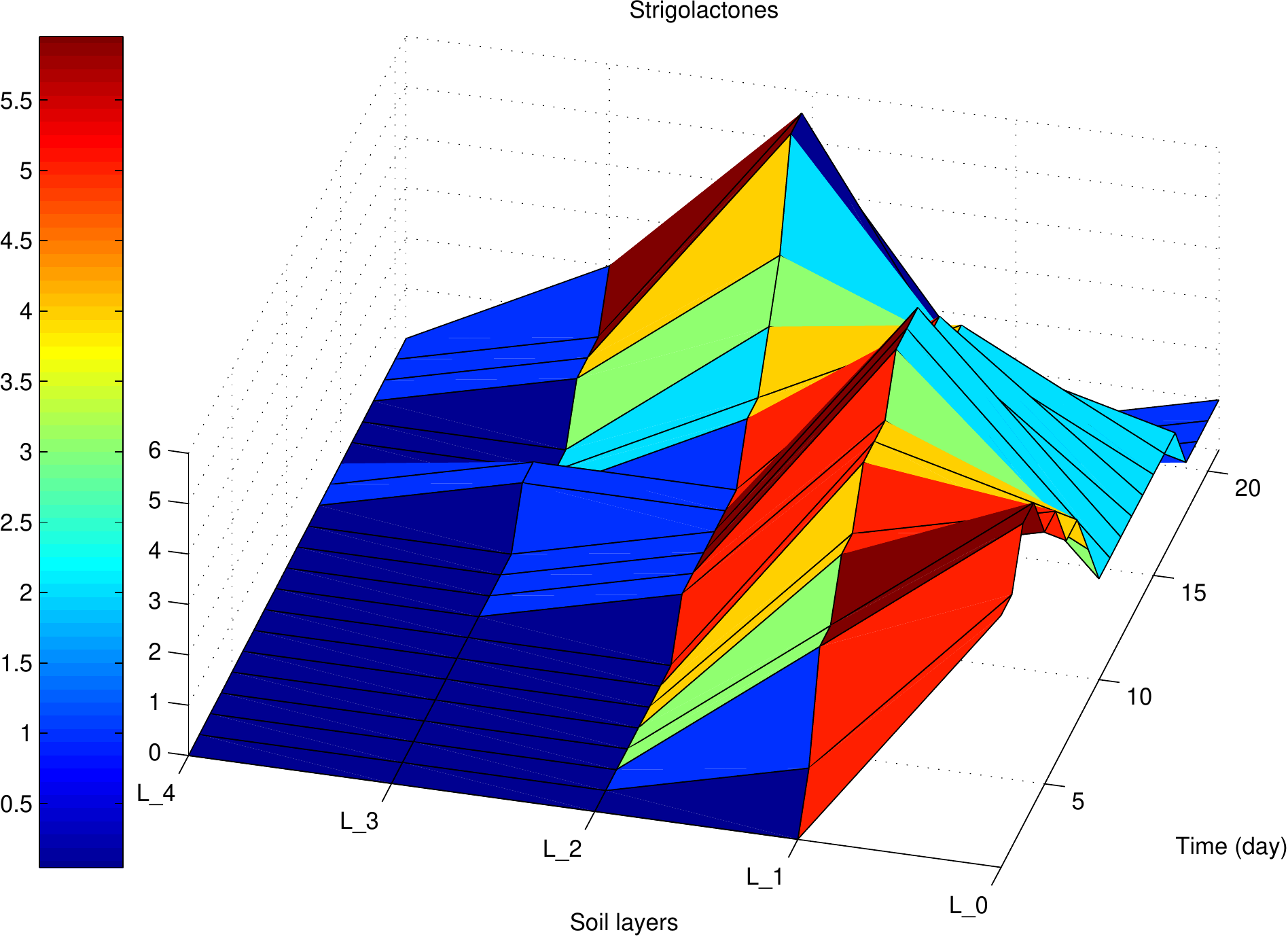}
\includegraphics[width=.49\textwidth]{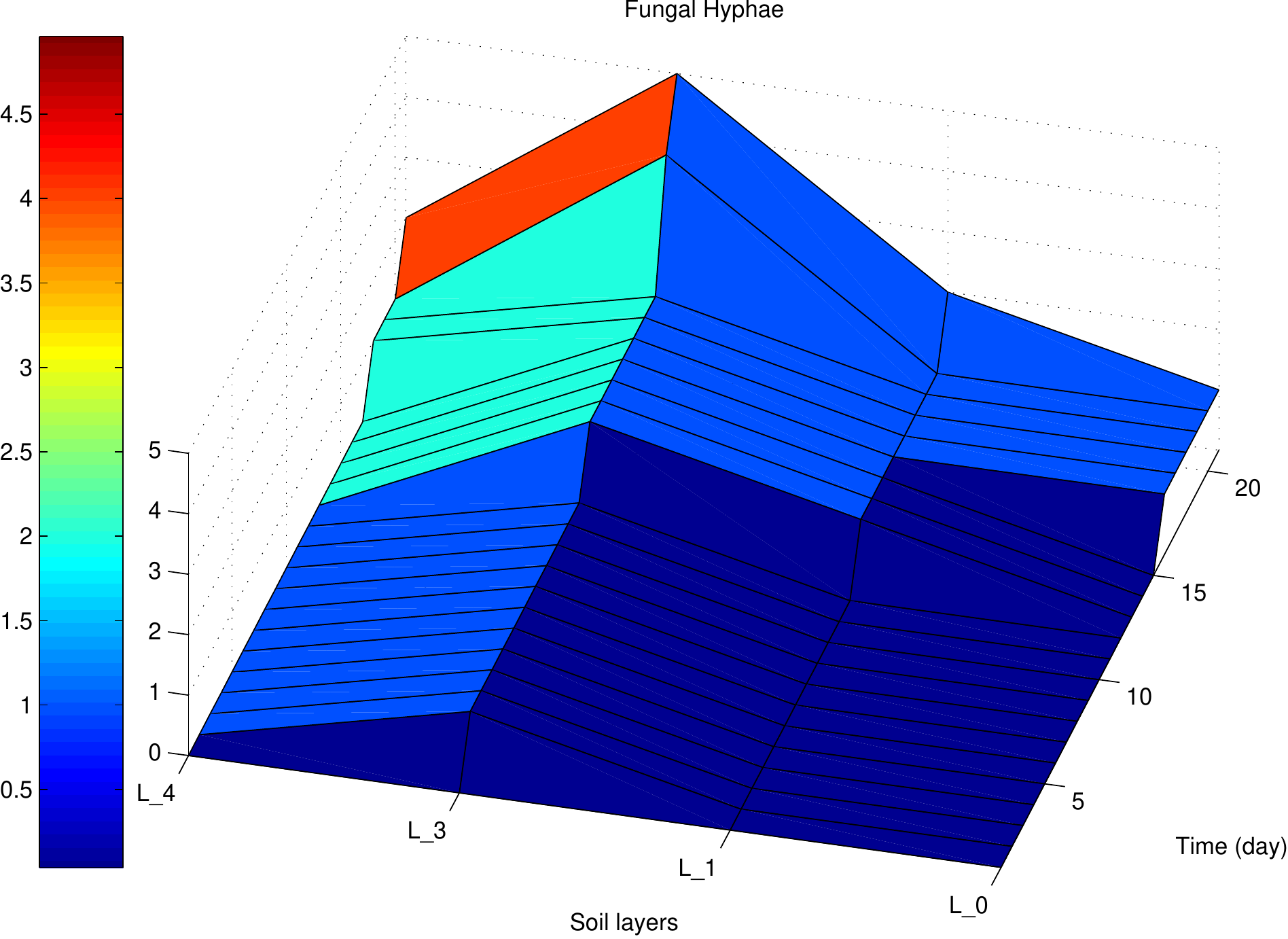}
\caption{\label{res_highP}Simulation results of strigolactones diffusion (left figure) and fungal hyphal branching (right figure) along the soil layers at high phosphate concentrations.}
\end{figure}

Our long term goal is to design a virtual spatial model that would enable us to test ``in silico'' various
hypotheses about the interaction between the physiological processes that drive the arbuscular mycorrizal symbiosis and the signalling processes that take place in this symbiosis.

\section{A Surface Language}

CWC allows to model several spatial interactions in a very natural way. However, when the complexity of simulation scenarios increases, the specification of large networks of compartments each one having its own peculiar behaviour and initial state may require a long and error prone modelling phase.

In this paper we introduce a surface language for CWC that defines a framework in which the notion of space is included as an essential component of the system. The space is structured as a square grid, whose dimension must be declared as part of the system specification.  The surface language provides basic constructs for modelling spatial interactions on the grid. These constructs can be compiled away to obtain a standard CWC model, thus exploiting the existing CWC simulation tool.

A similar approach can be found in \cite{MV10} where the topological structure of the components is expressed via explicit links which require ad-hoc rules to represent movements of biological entities and a logic-oriented language to flexibly specify complex simulation
scenarios is provided.

Thus, we embed CWC into a surface language able to express, in a synthetic form, both spatial (in a two-dimensional grid) and biochemical CWC transformations. The semantics of a surface language model is defined by translation into a standard CWC model.

We distinguish between two kind of compartments: 
 \begin{enumerate}
\item \emph{Standard} compartments (corresponding to the usual CWC compartments), used to represent entities (like bacteria or cells) that can move through space.
\item \emph{Spatial} compartments, used to represent portions of space. Each spatial compartment defines a location in a two dimensional grid
through a special atom, called \emph{coordinate}, that occurs on its wrap. A coordinate is denoted by \texttt{row.column}, where \texttt{row} and
\texttt{column} are intergers. Spatial compartments have distinguished labels, called \emph{spatial labels}, that can be used to provide a specific
characterisation of a portion of space.
\end{enumerate}
For simplicity we assume that the wraps of each spatial compartment contains only the coordinate. Therefore, spatial compartment differentiations can be
expressed only in terms of labels.\footnote{Allowing the wrap of spatial compartments to contain other atoms, thus providing an additional mean to express
spatial compartment differentiations, should not pose particular technical problems (extend the rules of the surface language to deal with a general wrap
content also for spatial compartments should be straightforward).}

For example, the spatial compartment $( \texttt{1.2}  \into 2*b)^{\textit{soil}}$ represents the cell of the grid located in the first row and the second
column, and has type \textit{soil}, the spatial compartment $( \texttt{2.3}  \into 3*b \conc c)^{\textit{water}}$ represents a \textit{water}-type spatial
compartment in position \texttt{2.3}. In our grid we assume that molecules can float only through neighbour cells: all the rules of interaction between
spatial compartments must obviously contain the indexes of their location. For example, the rule $\TOP : ( \texttt{1.2} \conc x \into a\conc
X)^{\textit{water}} ( \texttt{2.2} \conc y \into Y )^{\textit{soil}} \srewrites{k} ( \texttt{1.2} \conc x \into X)^{\textit{water}}  ( \texttt{2.2}\conc y
\into a\conc Y)^{\textit{soil}}$ moves the molecule $a$ from the \textit{water} compartment in position \texttt{1.2} to the \textit{soil} compartment in
position \texttt{2.2} with a rate $k$ representing in this case, the speed of the movement of $a$ in downwards direction from a cell of
\textit{water}-type to a cell of \textit{soil}-type.






Let \texttt{R} and \texttt{C} denote the dimensions of our
\texttt{R} $\times$ \texttt{C} grid defined by \texttt{R} rows and
\texttt{C} columns. To increase the expressivity of the language we define a few structures to denote portions (i.e. sets of cells) of
the grid. With $\Theta$ we denote a  set of coordinates of
the grid and we use the notion \texttt{r.c} $\in \Theta$
when the coordinate \texttt{r.c} is contained in the set
$\Theta$.  We define rectangles by
\texttt{rect}[\texttt{r.c},\texttt{r'.c'}] where
\texttt{r.c},\texttt{r'.c'} represent the edges of the rectangle. We
project rows and columns of our grid with the constructions
\texttt{row}[$i$] and \texttt{col}[$j$] respectively.

\begin{example}
The set $\Theta=\{\mathtt{6.6}\}\cup \mathtt{rect}[\mathtt{1.1},\mathtt{3.2}]
\cup \mathtt{col}[5]$ represents the set of coordinates
$$\Theta = \{ \mathtt{6.6}\} \cup \{ \mathtt{1.1}, \mathtt{2.1}, \mathtt{3.1}, \mathtt{1.2}, \mathtt{2.2} , \mathtt{3.2}\} \cup \{ \mathtt{i.5} \;| \; \forall i \in [1,\mathtt{R}]\}.$$
\end{example}

Note that \texttt{row}[$i$] is just a shorthand for
\texttt{rect}[\texttt{i.1},\texttt{i.\texttt{C}}]. Similarly for
columns.\\
  We use \texttt{[*]} as shorthand to indicate the whole grid
(i.e. \texttt{rect[1.1,R.C]}).

We also define four \emph{direction} operators,
\texttt{N,~W,~S,~E} that applied to a range of cells shift them,
respectively, up, left, down and right.
 For instance \texttt{E(1.1)} = \texttt{1.2}. In the intuitive way, we also define the four diagonal movements (namely, \texttt{NW},~\texttt{SW},~\texttt{NE},~\texttt{SE}). With $\Delta$ we denote a set of directions and we use the special symbol $\diamond$ to denote the set containing all eight possible directions.

We convene that when a coordinate, for effect of a shit, goes out of
the range of the grid the corresponding point is eliminated from the
set.

\subsection{Surface Terms}

We define the initial state of the system under analysis as a set of
compartments modelling the two-dimensional grid containing the
biological entities of interest.

Let $\Theta$ denote a set of coordinates and $\ell_s$ a spatial label. We use the notation:
    $$\Theta, \ell_s  \boxplus \; \ov{t} $$
to define a set of cells of the grid. Namely $\Theta, \ell_s  \boxplus \; \ov{t}$ denotes the top level CWC term:
 $$\wr{\emptyseq}{   \wr{ \mathtt{r_1.c_1}}{ \ov{t}   }{\ell_s} \conc \ldots \conc \wr{ \mathtt{r_n.c_n}}{ \ov{t}  }{\ell_s}   }{\top}  $$
 where  $\mathtt{r_i.c_j}$ range over all elements of $\Theta$.

A spatial CWC term is thus defined by the set of grid cells covering
the entire grid.

\begin{figure}
\centering
\subfigure[Initial state] {
\includegraphics[width=.45\textwidth]{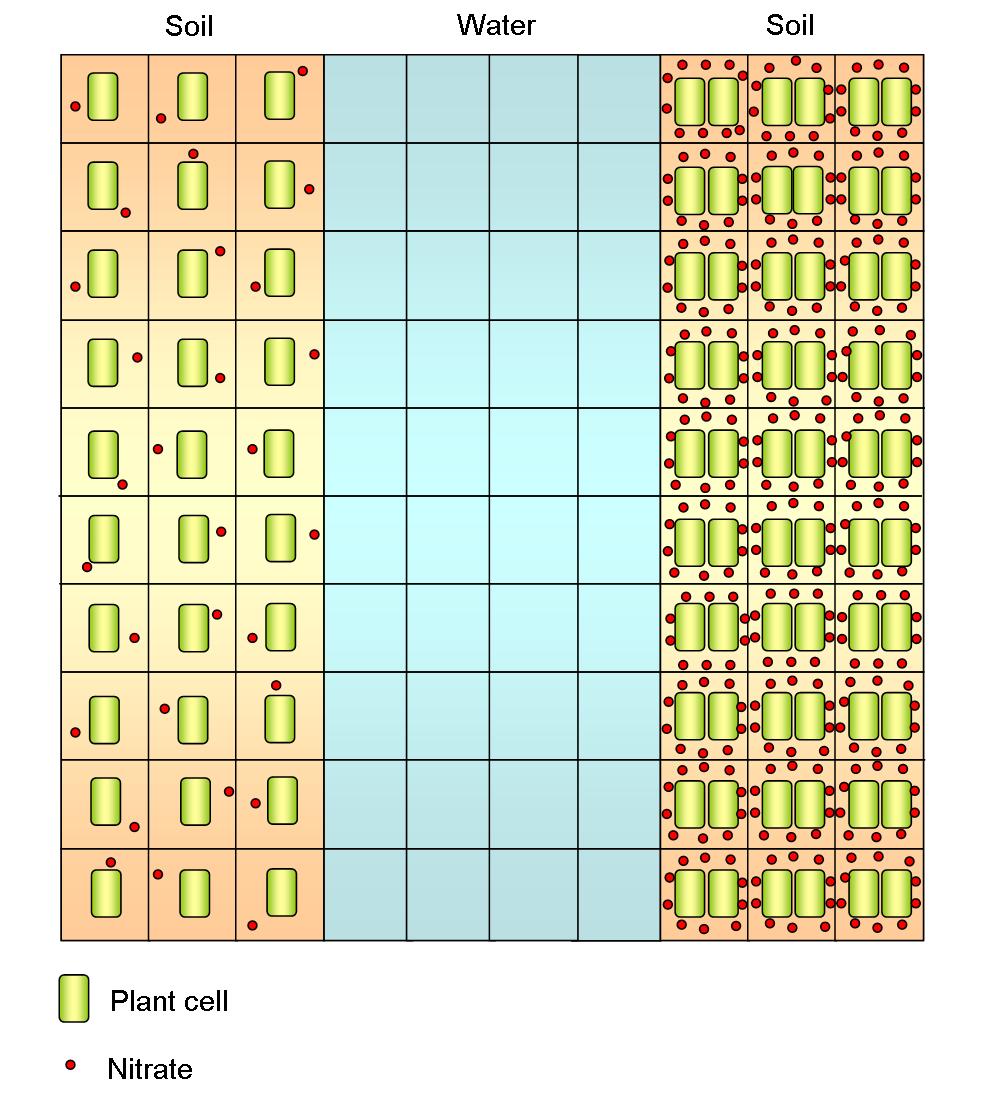}
\label{ex_grid_1}
}
\centering
\subfigure[Spatial events] {
\includegraphics[width=.45\textwidth]{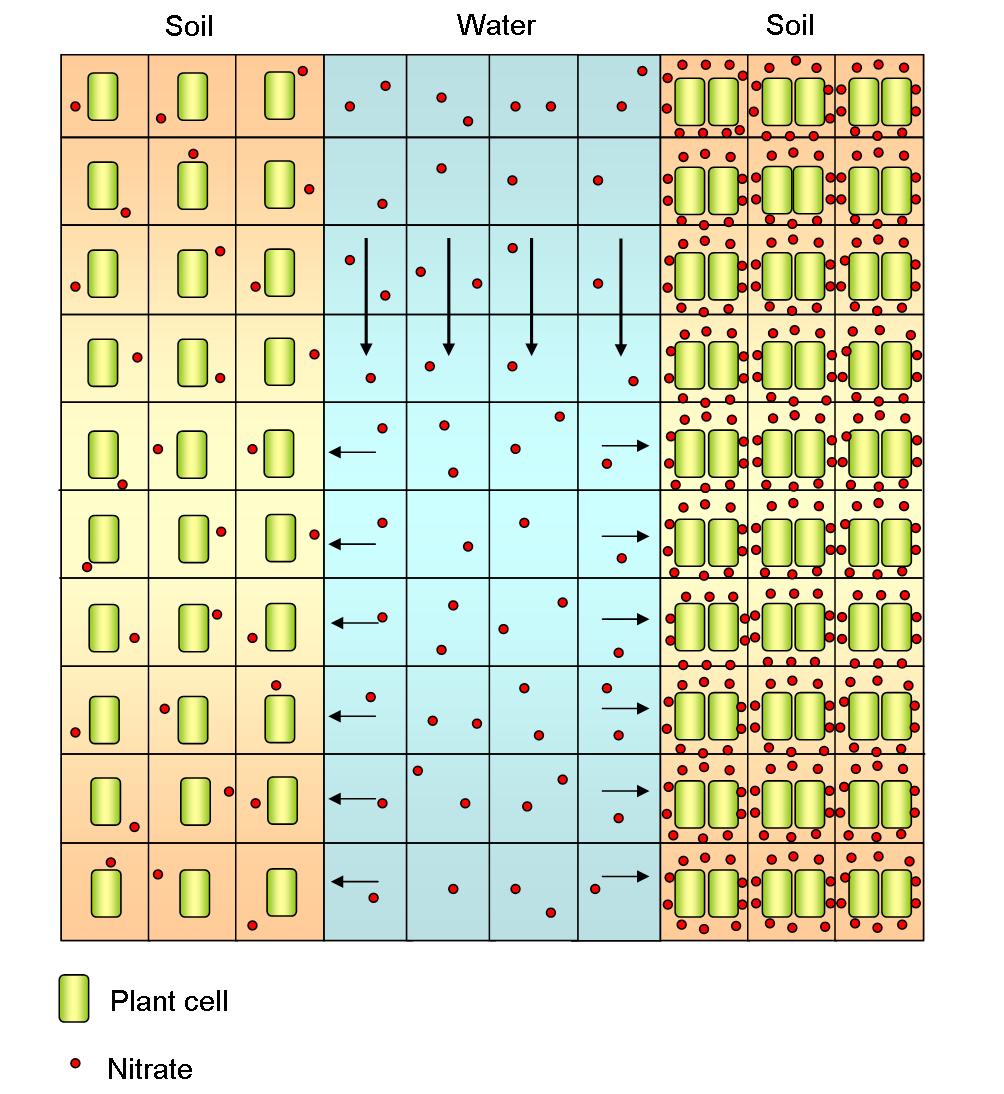}
\label{ex_grid_2}
}
\caption{Graphical representations of the grid described in the Example $3.2$.}\label{ex_grid}
\end{figure}

\begin{example}\label{EX_grid}
The CWC term obtained by the three grid cell constructions:\\
$
\mathtt{rect[1.1,10.3]}, soil \boxplus \; nitr \conc (receptors \into cytoplasm\conc nucleus)^{PlantCell}$\\
$\mathtt{rect[1.4,10.7]}, water \boxplus \; \emptyseq$ \\
$\mathtt{rect[1.8,10.10]}, soil \boxplus \; 10*nitr \conc 2 * (receptors \into cytoplasm \conc nucleus)^{PlantCell}$\\
\noindent builds a $10\times 10$ grid composed by two portions of soil (the right-most one reacher of nitrates and plant cells) divided by a river of water (see Figure \ref{ex_grid_1}).
\end{example}






\subsection{Surface Rewrite Rules}

We consider rules for modelling three kind of events.

\smallskip
\noindent
\emph{Non-Spatial Events:} are described by standard CWC rules, i.e. by rules of the shape:
 $$\ell:  \ov\LeftPat \srewrites{k} \ov\RightPat $$
\noindent Non-spatial rules can be applied to any compartment of type $\ell$ occurring in any
portion of the grid and do not depend on a particular location.

\begin{example}
A plant cell might perform its activity in any location of the grid. The following rules, describing some usual activities within a cell, might happen in any spatial compartment containing the plant cells under considerations:\\
$PlantCell: nucleus  \srewrites{k_1} nucleus \conc  mRNA$\\
$PlantCell: mRNA \conc cytoplasm  \srewrites{k_2} mRNA \conc  cytoplasm \conc protein$.
\end{example}

\smallskip \noindent
\emph{Spatial Events:} are described by rules that can be applied to specific spatial compartments. These rules allow to change the spatial label of the considered compartment. Spatial events are described by rules of the following shape:
$$
\Theta \triangleright \ell_s : \overline{\LeftPat} \srewrites{k}  \ell_s': \ov\RightPat
$$
\noindent Spatial rules can be applied only
within the spatial compartments with coordinates contained in the set $\Theta$ and with the spatial label $\ell_s$. The application of the rule may also change the label of the spatial compartments $\ell_s$ to $\ell_s'$.
This rule is translated into the CWC set of rules:\\
$\TOP : (  \mathtt{r_i.c_i} \conc x \into
\overline{\LeftPat}\conc X)^{\ell_s} \srewrites{k} (  \mathtt{r_i.c_i}\conc x
\into \ov{\RightPat}\conc X )^{\ell_s'} \quad \forall \mathtt{r_i.c_i} \in
\Theta.$

Note that spatial rules are analogous to non spatial ones. The only difference is the explicit indication of the set $\Theta$ which allows to write a single rule instead of a set of rule (one for each element of $\Theta$).

\begin{example}
If we suppose that the river of water in the middle of the grid defined in Example~\ref{EX_grid} has a downward streaming, we might consider the initial part of the river (framed by the first row \emph{\texttt{rect[1.4,1.7]}})
 to be a source of nitrates (as they are coming from a region which is not modelled in the actual considered grid). The spatial rule:\\
$
\mathtt{rect[1.4,1.7]}  \triangleright water: \emptyseq \srewrites{k_3} water: nitr
$\\
\noindent models the arrival of nitrates at the first modeled portion of the river (in this case it does not change the label of the spatial compartment involved by the rule).
\end{example}
 %

 \noindent
\emph{Spatial Movement Events:} are described by rules considering the content of two adjacent spatial compartments and are described by rules of the following shape:
$$
\Theta \triangleleft \Delta \triangleright \ell_{s_1}, \ell_{s_2}: \ov{p_1} ,\ov{p_2} \srewrites{k} \ell_{s_1}', \ell_{s_2}': \ov{o_1} ,\ov{o_2}
$$
\noindent This rule changes the
content of two adjacent (according to the possible directions contained in $\Delta$) spatial
compartments and thus allows to define the movement of objects. The
pattern matching is performed by checking the content of a spatial
compartment of type $\ell_{s_1}$ located in a portion of the
grid defined by $\Theta$ and the content of the adjacent spatial
compartment of type $\ell_{s_2}$. Such a rule could also change the
labels of the spatial compartments.
This rule is translated into the CWC set of rules:
$$\TOP : ( \mathtt{r_i.c_i}\conc x \into
\overline{\LeftPat_1} \conc X)^ {\ell_{s_1}}  ( dir(\mathtt{r_i.c_i}) \conc y \into
 \overline{\LeftPat_2} \conc Y) ^{\ell_{s_2}}
\srewrites{k} \\ ( \mathtt{r_i.c_i} \conc x \into  \ov{o_1}\conc X)
^{\ell_{s_1}'}  ( dir(\mathtt{r_i.c_i}) \conc y \into   \ov{o_2}\conc Y )^{\ell_{s_2}'}$$
for all $\mathtt{r_i.c_i} \in \Theta$ and for all $dir \in \Delta$.

\begin{example}
We assume that the flux of the river moves the nitrates in the water according to a downward direction in our grid and with a constant speed in any portion of the river with the following rule:\\
$
\mathtt{rect[1.4,9.7]} \triangleleft \{\mathtt{S}\} \triangleright water, water: nitr,\emptyseq \srewrites{k_4} water, water: \emptyseq, nitr
$\\
when nitrates reach the down-most row in our grid they just disappear (non moving event):\\
$
\mathtt{rect[10.4,10.7]} \triangleright water: nitr \srewrites{k_4} water: \emptyseq
$.\\
Moreover, nitrates streaming in the river may be absorbed by the soil on the riverside with the rule:\\
$
\mathtt{rect[1.4,10.7]} \triangleleft \{\mathtt{W}, \mathtt{E}\} \triangleright water, soil: nitr,\emptyseq \srewrites{k_5} water, soil: \emptyseq, nitr
$.\\
A graphical representation of these events is shown in Figure \ref{ex_grid_2}. Other rules can be defined to move the nitrates within the soil etc.
\end{example}

\subsection{An AM Symbiosis Model with the Surface Language}

Simultaneously to intraradical colonization, the fungus develops an extensive network of hyphae which explores and exploits soil microhabitats for nutrient acquisition. AM fungi have different hyphal growth patterns, anastomosis and branching frequencies which result in the occupation of different niche in the soil and probably reflect a functional diversity \cite{maherali2007influence} (see Figure \ref{fig:ERM}).

\begin{figure}
\center
\includegraphics[width=.8\textwidth]{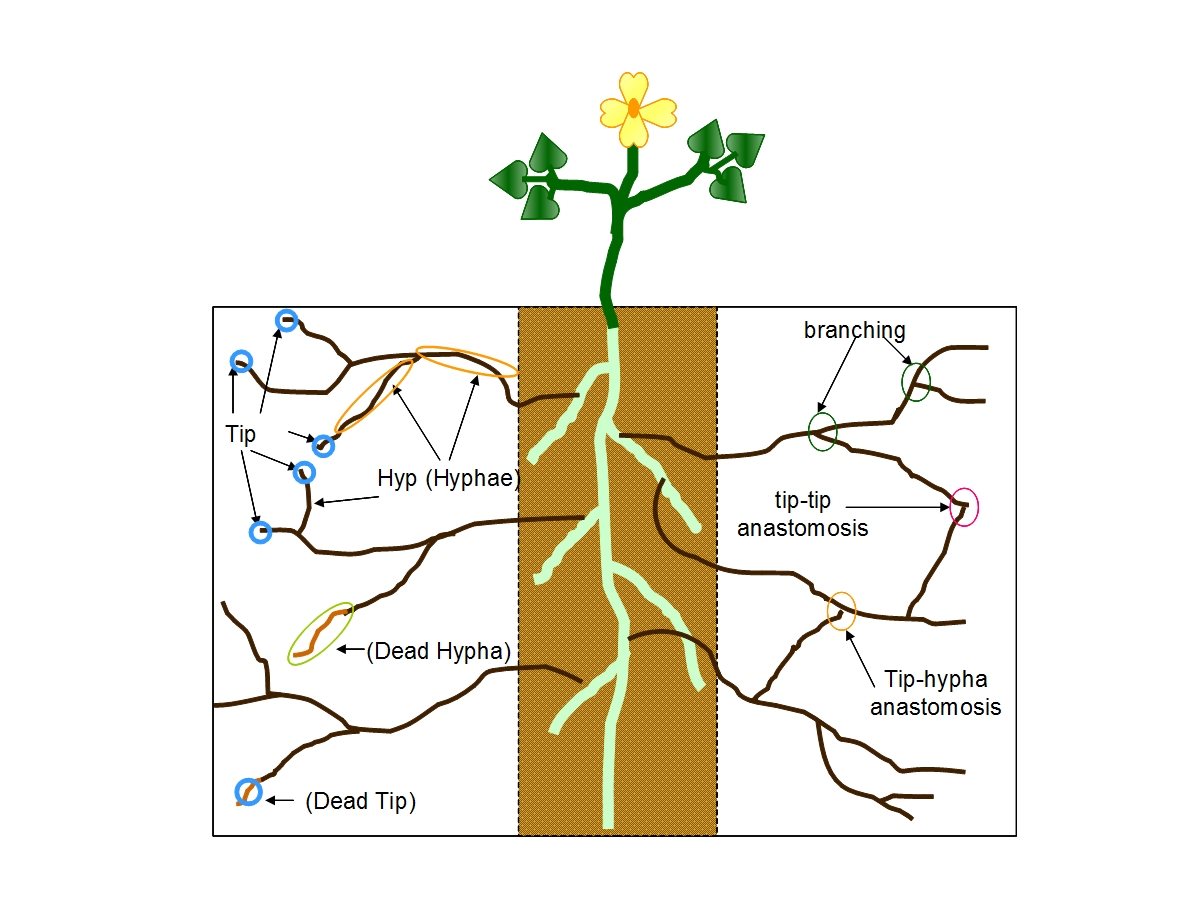}
\caption{Extraradical mycelia of an arbuscular mycorrhizal fungus.}\label{fig:ERM}
\end{figure}

The mycelial network that develops outside the roots is considered as the most functionally diverse component of this symbiosis. Extraradical mycelia (ERM) not only provide extensive pathways for nutrient fluxes through the soil, but also have strong influences upon biogeochemical cycling and agro-ecosystem functioning \cite{purin2008parasitism}. The mechanisms by which fungal networks extend and function remain poorly characterized. The functioning of ERM presumably relies on the existence of a complex regulation of fungal gene expression with regard to nutrient sensing and acquisition.
The fungal life cycle is then completed by the formation, from the external mycelium, of a new generation of spores able to survive under unfavourable conditions.

Investigations on carbon (C) metabolism in AM fungi have proved useful to offer some explanation for their obligate biotrophism. As mentioned above, an AM fungus relies almost entirely on the host plant for its carbon supply. Intraradical fungal structures (presumably the arbuscules) are known to take up photosynthetically fixed plant C as hexoses. Unfortunately, no fungal hexose transporter-coding gene has been characterized yet in AM fungi.

In order to quantify the contribution of arbuscular mycorrhizal (AM) fungi to plant nutrition, the development and extent of the external fungal mycelium and its nutrient uptake capacity are of particular importance. Shnepf and collegues \cite{schnepf2008growth} developed and analysed a model of the extraradical growth of AM fungi associated with plant roots considering the growth of fungal hyphae from a
cylindrical root in radial polar coordinates.

Measurements of fungal growth can only be made in the presence of plant. Due to this practical difficulty experimental data for calibrating the spatial and temporal explicit models are scarce.
Jakobsen and collegues \cite{jakobsen1992external} presented measurements of hyphal length densities of three AM fungi: \textit{Scutellospora calospora} (Nicol.\& Gerd.) Walker \& Sanders; \textit{Glomus} sp. associated with clover (\textit{Trifolium subterraneum} L.); these data appeared suitable for comparison with modelled hyphal length densities.

The model in  \cite{schnepf2008growth} describes, by means of a system of Partial Differential Equations (PDE), the development and distribution of the fungal mycelium in soil in terms of the creation and death of hyphae, tip-tip and tip-hypha anastomosis, and the nature of the root-fungus interface. It is calibrated and corroborated using published experimental data for hyphal length densities at different distances away from root surfaces. A good agreement between measured and simulated values was found for the three fungal species with different morphologies associated with \textit{Trifolium subterraneum} L. The model and findings are expected to
contribute to the quantification of the role of AM fungi in plant mineral nutrition and the
interpretation of different foraging strategies among fungal species.

\subsubsection{Surface CWC Model}

In this Section we describe how to model the growth of arbuscular mycorrhyzal fungi using the surface spatial CWC.
We model the  growth of AM fungal hyphae in a soil environment partitioned into $13$ different layers (spatial compartments with label $soil$) to account for the distance in centimetres between the plant root and the fungal hyphae where the soil layer at the interface with the plant root is at position $1.1$. We describe the mycelium by two atoms:
the hyphae (atom $Hyp$) related to the length densities (number of hyphae in a given compartment) and the hyphal tips (atom $Tip$). The plant root (atom $Root$) is contained in the $soil$ compartment at position $1.1$.

The tips and hyphae at the root-fungus interface proliferate according to the following spatial events:

$$\{ \mathtt{1.1}\} \triangleright soil : Root \srewrites{\tilde{a}} soil : Root \conc Hyp $$
$$\{ \mathtt{1.1}\} \triangleright soil : Root \srewrites{a} soil : Root \conc Tip $$

\noindent where $\tilde{a}$ and $a$ is the root proliferation factor for the hyphae and tips respectively.

Hyphal tips are important, because growth occurs due to the elongation of the region just behind the tips. Therefore, the spatial movement event describing the hyphal segment created during a tip shift to a nearby compartment is:\\

$$
[*] \triangleleft  \{\mathtt{E}, \mathtt{W}\} \triangleright soil, soil: Tip, \emptyseq \srewrites{v} soil, soil: Hyp , Tip
$$

\noindent where $v$ is the rate of tip movement. The hyphal length is related to tips movement, i.e. an hyphal trail is left behind as tips move through the compartments. We consider hyphal death to be linearly
proportional to the hyphal density, so that
the rule describing this spatial event is:\\

$$ [*] \triangleright soil : Hyp \srewrites{d_H} soil : \emptyseq $$

\noindent where $d_H$ is the rate of hyphal death.

Mycorrhizal fungi are known to branch mainly apically where one tip splits into two. In the simplest case, branching and
tip death are linearly proportional to the existing tips in that location modelled with the following spatial events:

$$ [*] \triangleright soil : Tip \srewrites{b_T} soil : 2*Tip $$
$$ [*] \triangleright soil : Tip \srewrites{d_T} soil : \emptyseq $$

\noindent where $b_T$ is the tip branching rate and $d_T$ is the tip death rate.

Alternatively, if we assume that branching decreases with increasing tip density and ceases at a given maximal tip density, we employ the spatial event:

$$ [*] \triangleright soil : 2* Tip \srewrites{c_T} soil : \emptyseq $$

\noindent where $c_T=\frac{b_T}{T_{max}}$. From a
biological point of view, this behaviour take into account the volume saturation when the tip density achieves the maximal number of tips $T_{max}$.

The fusion of two hyphal tips or a tip with a hypha
can create interconnected networks by means of
anastomosis:

$$ [*] \triangleright soil : 2* Tip  \srewrites{a_1} soil : Tip $$
$$ [*] \triangleright soil : Tip \conc Hyp \srewrites{a_2} soil : Tip $$

\noindent where $a_1$ and $a_2$ are the tip-tip and tip-hypha anastomosis rate constants, respectively.

The initial state of the system is given by the following grid cell definition:

$$\{ \mathtt{1.1}\},\;soil \boxplus  Root \conc T_0 * Tip \conc  H_0 * Hyp$$
$$\mathtt{rect}[\mathtt{1.2},\mathtt{1.13}],\;soil \boxplus  \emptyseq$$

\noindent where  $T_0$ and $H_0$ are the initial number of tips and hyphae respectively at the interface with the plant root.

\subsubsection{Results}

We run $60$ simulations on the model for the fungal species \textit{Scutellospora
calospora} and \textit{Glomus} sp. Figure \ref{res_fungi} show the mean values of hyphae (atoms $Hyp$) of the resulting stochastic simulations in function of the elapsed time in days and of the distance from the root surface. The rate parameters of the model are taken from~\cite{schnepf2008growth}.

The results for \textit{S. calospora} are in accordance with the linear PDE model of~\cite{schnepf2008growth}  which is characterized by linear branching with a relatively small net branching rate and both kinds of anastomosis are negligible when compared with the other species. This model imply that the fungus is mainly growing and allocating resources for
getting a wider catchment area rather than local expoloitation of mineral resources via hyphal branching.

The model for \textit{Glomus} sp. considers the effect of nonlinear
branching due to the competition between tips for space. The results obtained for \textit{Glomus} sp. are in accordance with the non--linear PDE model of~\cite{schnepf2008growth} which imply that local exploitation for resources via hyphal branching is important for
this fungus as long as the hyphal tip density is small. Reaching near the maximum tip density, branching decreases. 
Symbioses between a given host plant and different AM fungi have been shown to differ functionally~\cite{ravnskov1995functional}. 

\begin{figure}
\center
\includegraphics[angle=-90,width=.49\textwidth]{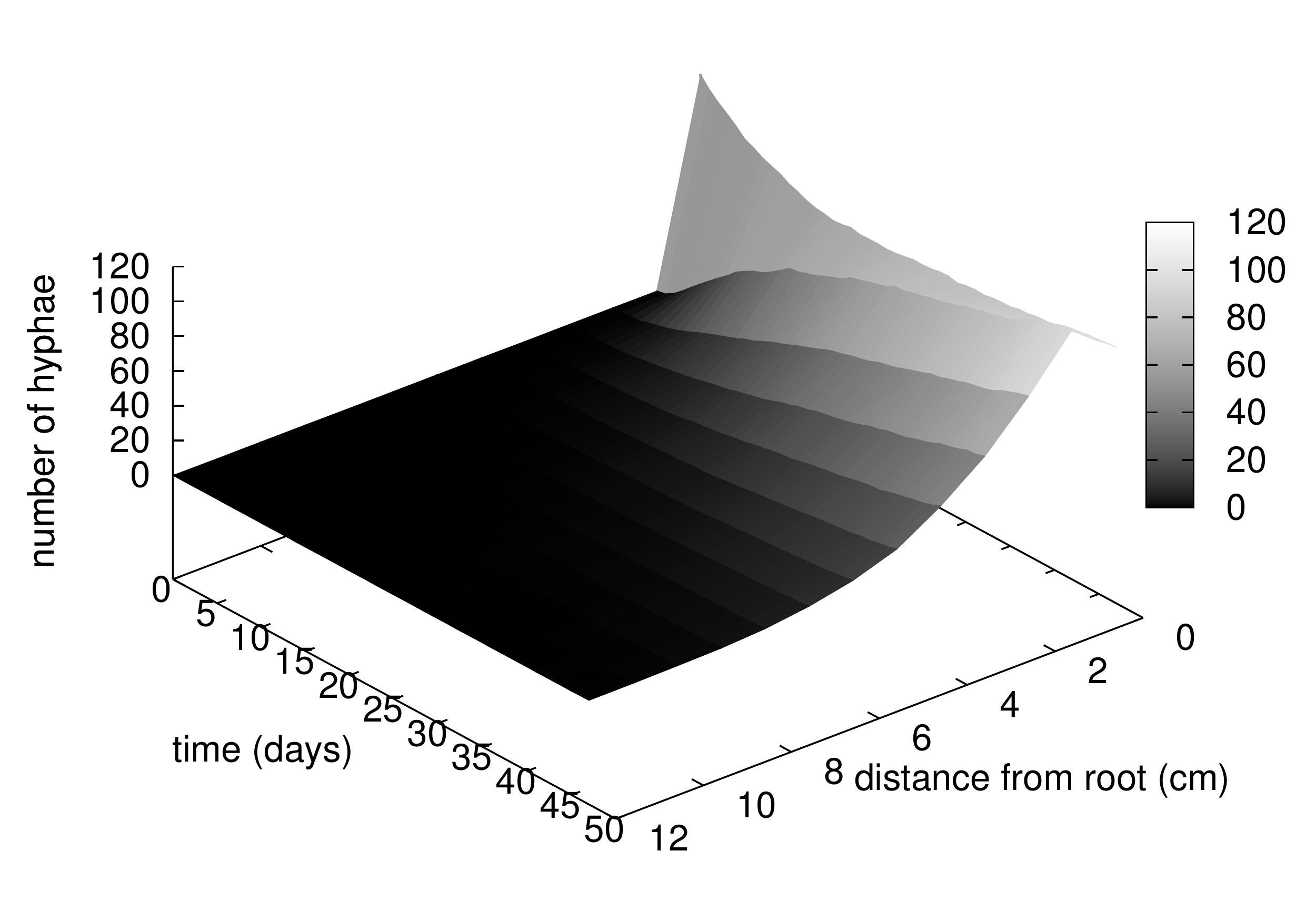}
\includegraphics[angle=-90,width=.49\textwidth]{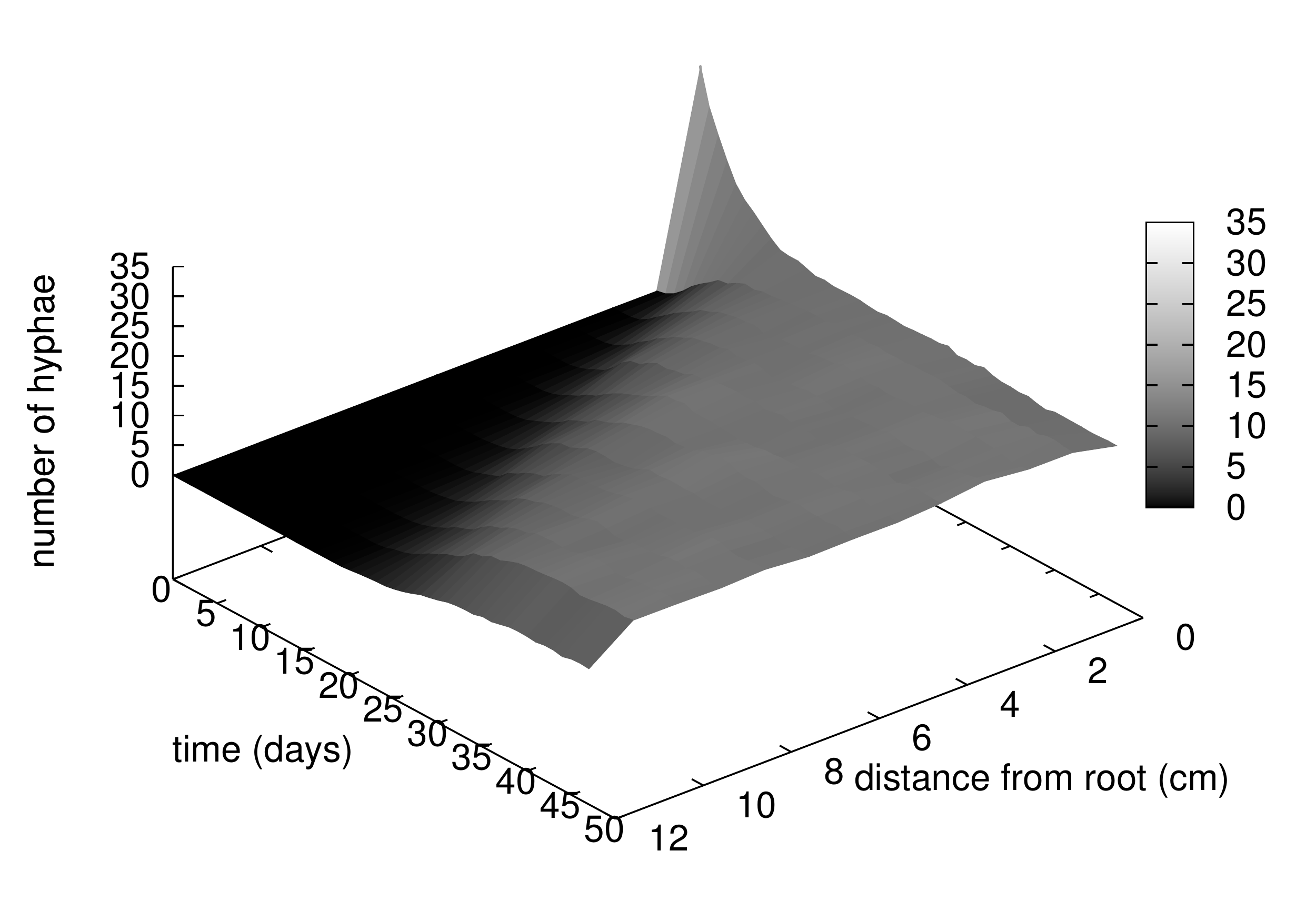}
\caption{Mean values of $60$ stochastic simulations of hyphal growth (atoms $Hyp$) results for S. calospora and Glomus sp. fungi.}\label{res_fungi}
\end{figure}

\chapter{A Hybrid Semantics}
\label{hybrid}
Metabolic networks involving large numbers of molecules are most often modelled deterministically through Ordinary Differential Equations (ODEs) but it may happen that a purely deterministic model does not accurately capture the dynamics of multi-stable system, and a stochastic description is needed.
However, while the ODEs method is extremely efficient, the computational cost of a discrete stochastic simulation often becomes an insurmountable obstacle. Thus, when the deterministic approach is applicable and provides a good approximation of the system behaviour, it might be profitable to take advantage of its efficiency, and move to the stochastic approach when this is not true any more.

In a hybrid model, some reactions are modelled in a discrete  way (i.e. computed probabilistically according
to an exact stochastic method) and others in a continuous way (i.e. computed in a deterministic way by a set of ODEs). These models for the simulation of biological systems have been presented in the last few years for purely mathematical models~\cite{SK05,GCPS06,CDR09}, i.e. models in which all reactions take place in a single, ``flat'', environment (without compartmentalisation). 

In this chapter we present an hybrid simulation algorithm for the CWC framework. We will apply our approach to a variant of Lotka-Volterra dynamics and an HIV-1 transactivation mechanism.

The CWC simulator has been enriched with a prototype implementation of the hybrid simulation algorithm~\cite{HCWC_SIM}. Hybrid simulations produce results consistent with those obtained by the exact stochastic simulation method, but with a considerable gain in computing time.

\section{Comparing Quantitative Methods}\label{SECT:running}

In this section we compare the two quantitative simulation methods based, respectively, on stochastic computations and on the deterministic solution of ODEs. We consider, as a running example, a toy case study derived from a Lotka-Volterra prey-predator dynamics. 
We let the prey-predator oscillatory dynamics to be confined into a compartment \emph{IN} interfered with rare events causing dramatic changes in the species evolution. 
A rare event could be represented as a viral epidemic factor entering and
exiting compartment \emph{IN} with a relatively slow rate. Once inside the compartment  \emph{IN} the
viral epidemic factor has the capability of killing some preys.

The set of \CalculusShortName\ rules modelling such an example is given in Figure~\ref{fig:LKvir-rules}. The preys (atoms $a$) and predators (atoms $b$) are located in compartment \emph{IN} and follow a dynamics given by the rules ($B_1$),($B_2$) and ($B_3$). The viral epidemic (atom \emph{Vir}) enters and leaves the compartment with rules ($N_1$) and ($N_2$) respectively, and kills the preys with rule ($B_4$).

\begin{figure}[h]
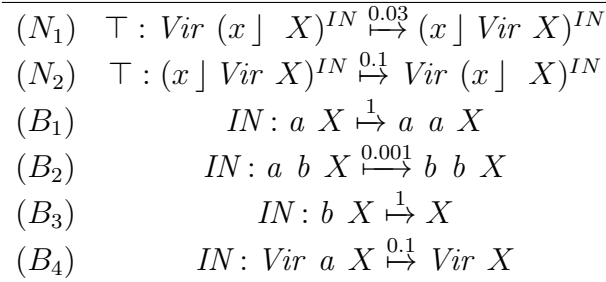

\hrule
$
\begin{array}{lc}
(N_1) & \TOP : \textit{Vir} \conc (x \into \conc X)^{IN} \xsrewrites{0.03} (x \into \textit{Vir} \conc X)^{IN}\\
(N_2) & \TOP : (x \into \textit{Vir} \conc X)^{IN} \xsrewrites{0.1} \textit{Vir} \conc (x \into \conc X)^{IN}\\
(B_1) & \textit{IN} : \textit{a} \conc X \xsrewrites{1} \textit{a} \conc \textit{a} \conc X\\
(B_2) & \textit{IN} : \textit{a} \conc \textit{b} \conc X \xsrewrites{0.001} \textit{b} \conc \textit{b} \conc X\\
(B_3) & \textit{IN} : \textit{b} \conc X \xsrewrites{1} X\\
(B_4) & \textit{IN} : \textit{Vir} \conc \textit{a} \conc X \xsrewrites{0.1} \textit{Vir} \conc X\\
\end{array}
$
 \caption{ \CalculusShortName\ rules for the prey-predator dynamics}
\label{fig:LKvir-rules}
\end{figure}

\subsection{Stochastic Simulation}

Simulations are performed for $60$ time units, with the starting term:
$$
\textit{Vir} \conc (\emptyseq \into 1200 \timesSilent \textit{a} \conc 1200 \timesSilent \textit{b})^\textit{IN}.
$$
Several stochastic simulations of the toy case study were performed, showing different possible system evolutions of the dynamics of the species inside the compartment \textit{IN} depending on the viral epidemic factor. Two of these runs are shown in Figure~\ref{fig:stoch_toy}.
\begin{figure*}[t]
\centering
  \includegraphics[width=.49\textwidth]{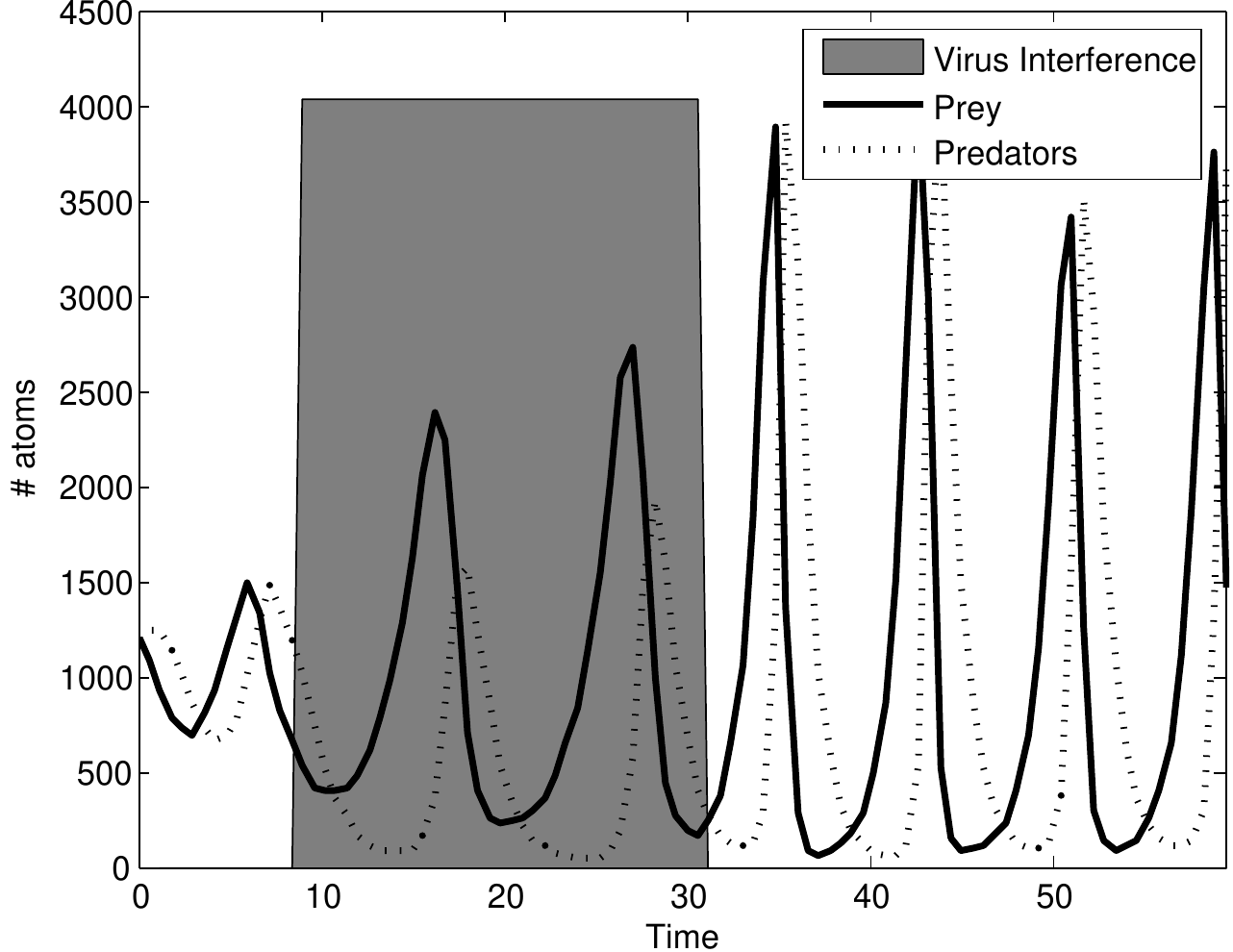}
 \includegraphics[width=.49\textwidth]{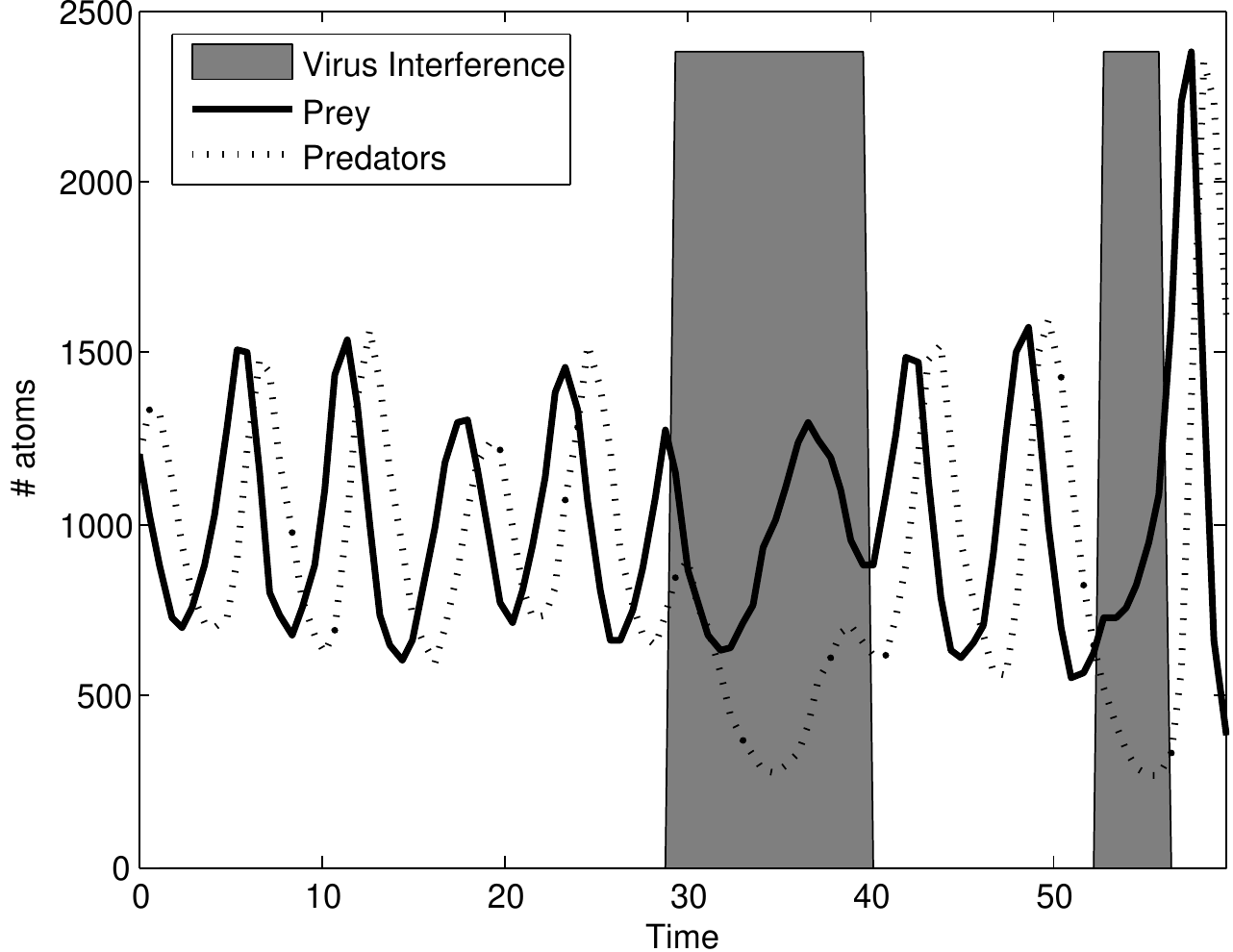}
  \caption{Two different runs obtained with a purely stochastic simulation showing the different behaviour of the dynamics of the species
  inside the compartment \textit{IN} depending on the viral epidemic factor}
\label{fig:stoch_toy}
\end{figure*}
A characteristic of this example is that the evolution of the system is
strongly determined by the viral epidemic random event that can change dramatically the dynamics of the species.

\subsection{Deterministic Simulation}\label{SECT:ODE_SEM}

A \short\ system $\QQ$ consisting of $r$ biochemical rewrite rules represents
 a system of $r$ biochemical reactions. Its deterministic semantics is defined by
extracting from $\QQ$ a system of ODEs  to be used for simulating the evolution of the involved multisets of atoms~\cite{HJ72}.
%
%
For every label $\ell$, let
\begin{itemize}
\item
$a_1,\ldots,a_{n_\ell}$ denote the $n_\ell$ species of atoms that may occur at top level within a compartment of type $\ell$, and
\item
$\QQ_\ell$ denote the set of rules with label $\ell$.
\end{itemize}
The $i$-th rule in the set $\QQ_\ell$ is denoted by
$$
  \ell :  \bar{a}_i \csrew{k_i}  \bar{b}_i \ \ \  i= 1, 2, \ldots, \LengthOf{\QQ_\ell}
$$
 For all species $a_j$ ($j=1, 2, \ldots, n_\ell$) let $\alpha^-_{i,j}$ be the number of atoms of species $a_j$ consumed by the $i$-th rule
 and $\alpha^+_{i,j}$ the number of atoms of species $a_j$ produced by the $i$-th rule.
 The $n_\ell \times \LengthOf{\QQ_\ell}$ stoichiometric matrix $\Lambda_\ell$ is defined
 by $\nu_{i,j}=\alpha^+_{i,j}-\alpha^-_{i,j}$.\footnote{Many of the $\alpha^-_{i,j},\;\alpha^+_{i,j}$ are usually $0$.}

Let $[a]$ denote the number of the atoms of species $a$ occurring at top level in a given compartment of type $\ell$. If
$\bar{a}_i= n_{i_1} a_{i_1} \ldots n_{i_{r_i}} a_{i_{r_i}} (r_i\geq 1)$,
%
the evolution of the given compartment of type $\ell$ is modelled by the following system of ODEs:
$$
  \ell : \frac{d[a_j]}{dt} = \sum_{i=1}^{\LengthOf{\QQ_\ell}} \nu_{i,j} \cdot k_i \cdot [a_{i_1}]^{n_{i_1}}\cdot \ldots\cdot [a_{i_{r_i}}]^{n_{i_{r_i}}}
$$

%
%
%
    %

Computationally, ODEs are well studied and understood. They can be solved
using a variety of numerical methods, from the Euler method to
higher-order Runge-Kutta methods or stiff methods, many of which are readily
available in software packages that can be easily incorporated
into existing simulation code. In all the examples presented in this chapter we use a GNU library implementing an explicit embedded Runge-Kutta Prince-Dormand method.

To perform a deterministic simulation of the toy case study we have to remodel the Lotka-Volterra dynamics by \emph{eliminating} the stochastic influence of the virus epidemics. The equation system governing the population evolution is:
\begin{eqnarray*}
\frac{da}{dt} = a - 10^{-3} a\cdot b \\
\frac{db}{dt} = -b + 10^{-3} a\cdot b
\end{eqnarray*}
with an initial condition of $a(0)=b(0)=1200$.

The evolution of the system is represented in Figure~\ref{fig:lk_det}. Notice that the mean of $100$ stochastic simulations of the Lotka-Volterra dynamics without the viral epidemics interference tends to the deterministic simulation.

\begin{figure*}
\centering
  \includegraphics[width=.49\textwidth]{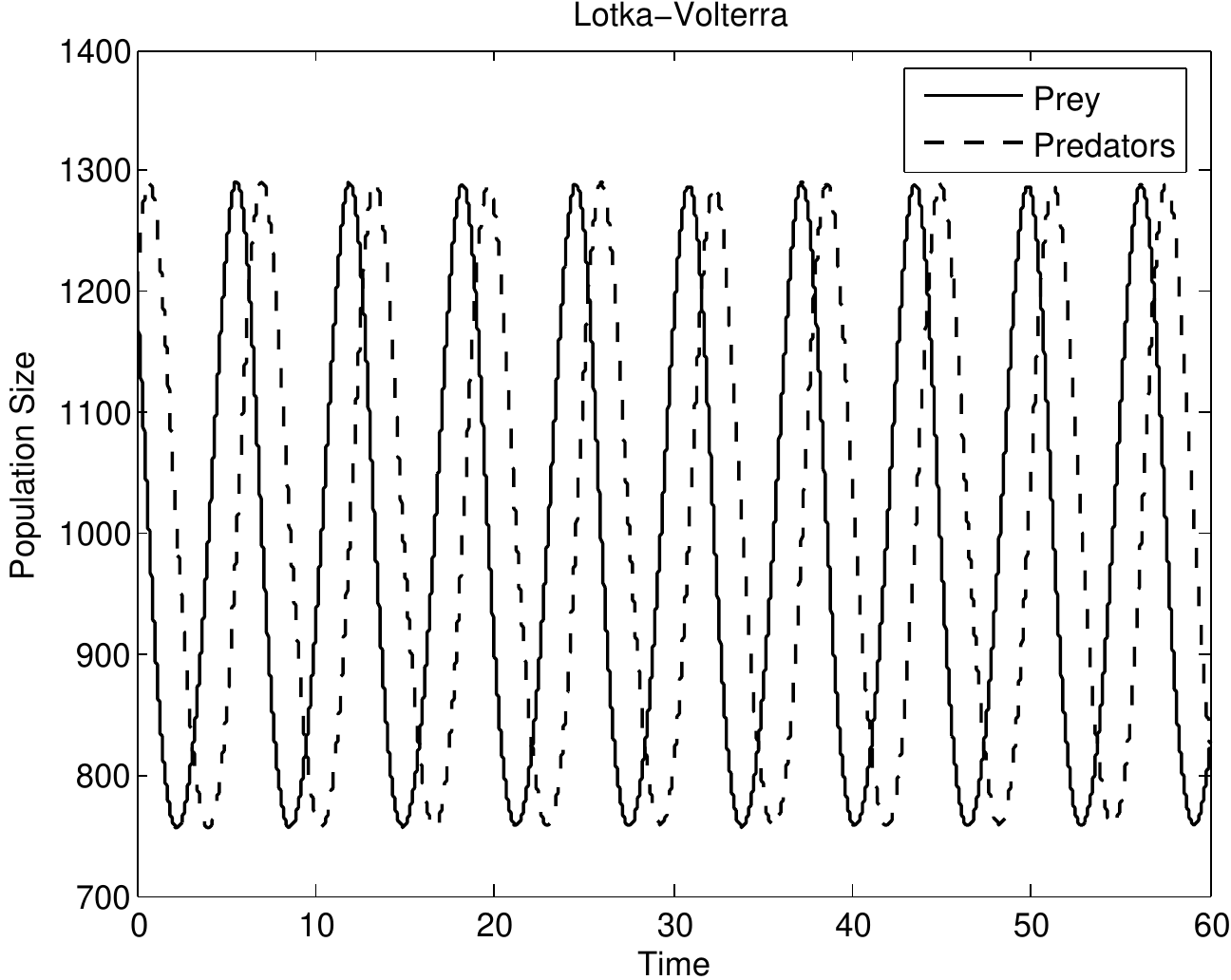}
  \includegraphics[width=.49\textwidth]{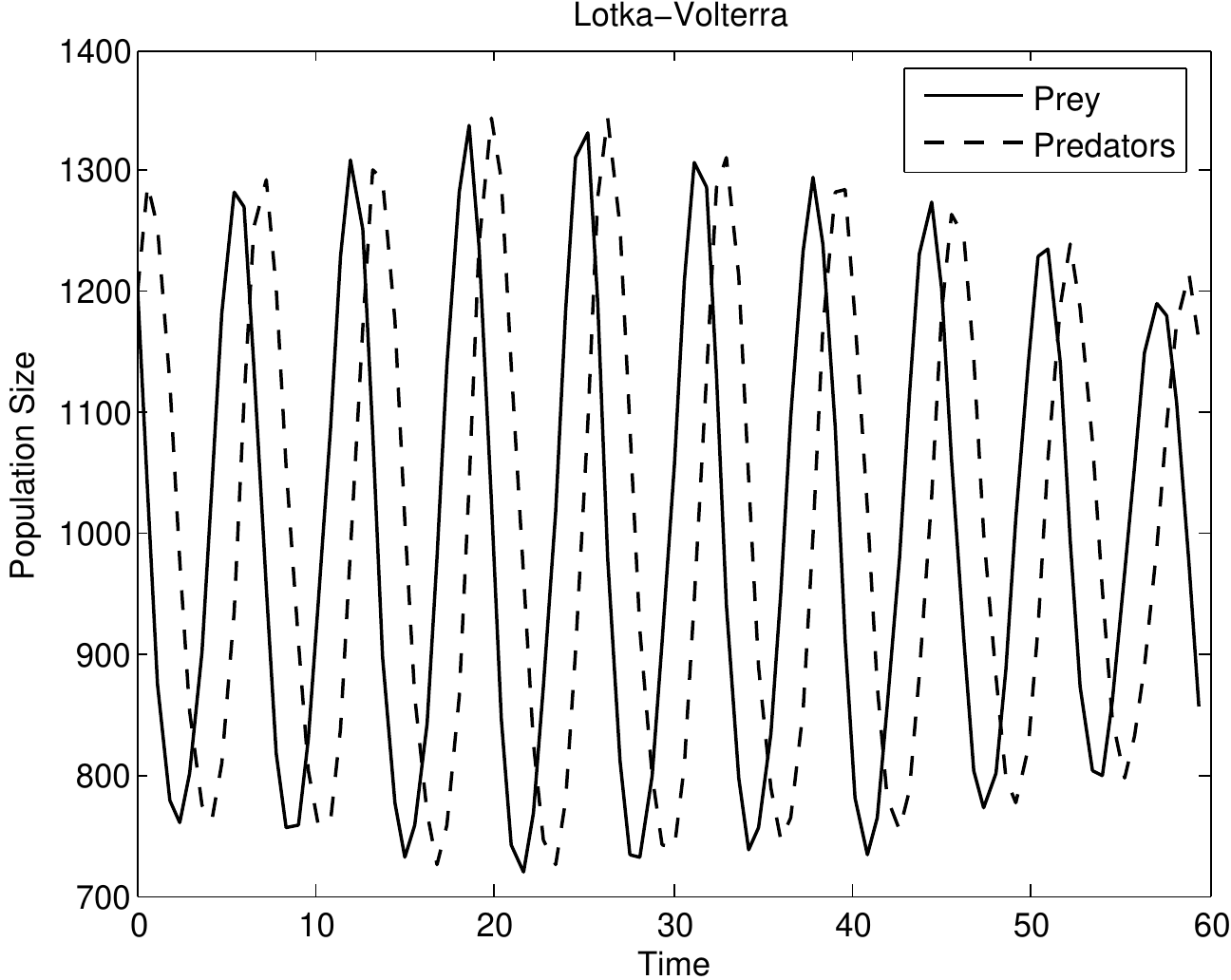}
  \caption{Deterministic simulation of the Lotka-Volterra dynamics (left figure) and the mean of $100$ stochastic simulations (right figure)}
\label{fig:lk_det}
\end{figure*}

\section{Hybrid Evolution}\label{sect:Hybrid}

The stochastic approach is based on a probabilistic simulation method that
 manages the evolution of exact integer quantities and often requires a huge
computational time to complete a simulation. The ODEs numerical approach computes a unique deterministic and
 fractional evolution of
the species involved in the system and achieves very efficient computations.
%

A numerical background for the hybrid evolution in
which fast reactions are approximated using a deterministic
evolution while slow reactions are treated as stochastic events
can be found in~\cite{haseltine2005origins, haseltine2002approximate}. This schema allows to accurately solve
fast reactions using an ODE solver at the thermodynamic limit. This condition is of course ideal and unattainable
in biological systems. However, the analytical knowledge of the system allows the use of this approximation
if the variation of the species affected by slow reactions are almost insensitive with respect to species affected by fast reactions.

In this section we adapt this approximation method within \short, defining a hybrid simulation technique.
%
%
Given a \short\ system $\QQ$ we partition it into a set of biochemical rewrite rules $\BB$ and a set of non-biochemical rewrite rules $\NN$.  Rules in $\NN$
are always applied by using the stochastic method. Rules in $\BB$ might be applied with the ODEs approach. In general $\BB$ might contain both rules that
model evolution of large numbers of molecules according to very fast reactions (whose execution is suitable to be correctly computed with ODEs) and rules that model very slow reactions or reactions that involve a very
small number of reagents. In the latter case
it is convenient to compute the execution of
the associated rule according to the stochastic approach.

According to the state of the system, a rule might be dynamically interpreted either as stochastic or deterministic. For instance, during a simulation, it might happen that a given biochemical rewrite rule $\ell:
\bar{a}_i \srewrites{k_i} \bar{b}_i \in \BB$ is applied initially according to the stochastic semantics, since the associated compartment contains a very
small number of reagents. After the system has evolved for some time, however, the number of the reagents involved in the rule can be substantially
increased and it becomes convenient to model the corresponding reaction according to the deterministic approach.

 Actually, at the beginning of each simulation
step we build, for each compartment in the term, a system of ODEs for the simulation of the biochemical rules in that compartment which (1) are
sufficiently fast and (2) involve a sufficient number of reagents. For the remaining rules the evolution is determined by the stochastic simulation
algorithm.

In order to describe the hybrid semantics we assume that, given a \short\ term $\ov{t}$, each compartment of $\ov{t}$ is univocally identified by an
index $\iota$. The index of the (implicit) compartment at the top level will be denoted by $\iota_0$. The \emph{biochemical reagents} of a compartment $(\ov{a}\into \ov{t})^\ell$ with index
$\iota$, written $\BR(\iota)$, are expressed by the multiset of the atomic elements appearing in the top level of $\ov{t}$. For example, given the term
\[
\ov{t} = 2a \conc (b \into (c \conc d \into \emptyseq)^{\ell'} \conc e )^\ell \conc (b \into d \conc e )^\ell
\]
and assuming that the compartment $(b \into (c \conc d \into \emptyseq)^{\ell'} \conc e )^\ell$ has index $\iota_1$, the compartment $ (c \conc d \into \emptyseq)^{\ell'}$ has index $\iota_2$ and the compartment $(b \into d \conc e )^\ell$ has index $\iota_3$, we have that $\BR(\iota_0)=2a$, $\BR(\iota_1)=e$, $\BR(\iota_2)=\emptyseq$ and $\BR(\iota_3)=d \conc e$ where $\iota_0$ is the index of the to level compartment.

\begin{figure}
\hrule $\;$ \\
Let $\ov{t}$ denote the whole term and let $I$ denote the set of compartment indexes occurring in $\ov{t}$.
\begin{enumerate}
\item For each compartment $\iota\in I$:   
\begin{itemize}
\item Let $\ell$ be the label of $\iota$, let $\RR_\iota=\BB_\ell$ and let $\DD_\iota=\emptyset$.
\item For each biochemical rule 
$B_i=\ell:\bar{a}_i \srewrites{k_i} \bar{b}_i \in \BB_\ell$ let $\bar{a}_i= n_{i_1} a_{i_1} \ldots n_{i_{r_i}} a_{i_{r_i}} (r_i\geq 1)$ and let
$[a_{i_j}]_\iota$ denote the number of $a_{i_j}$ atoms occurring in $\BR(\iota)$. Let $\pi_i^{\iota}=k_i \cdot ([a_{i_1}]_\iota^{n_{i_1}} / n_{i_1}) \cdot \ldots
\cdot ([a_{i_{r_i}}]_\iota^{n_{i_{r_i}}} / n_{i_{r_i}})$ be the rate of the rule $B_i$ in the compartment $\iota$. If $\pi_i^{\iota} > \phi$ and $\min_{j=1}^{r_i}{[a_{i_{j}}]_\iota} > \psi$ remove $B_i$ from $\RR_\iota$ and put it into $\DD_\iota$.
\end{itemize}
\item Considering the rules in $\bigcup_{\iota \in I} \RR_\iota \cup \NN$ selected according to Gillespie's method and to the semantics given in Section~\ref{SECT:running}
a stochastic transition step $C[\Pat\sigma] \ltrans{f(\sigma)} C[o\sigma]$, where $R=\ell:\Pat \srewrites{f}  o
\in \RR_{\iota'} \cup \NN_\ell$.
\\
Let $\tau$ be the corresponding time interval.$^\dagger$
\item
For each compartment $\iota$ in $I$:
\begin{itemize}
\item
Let $\Ode_{\iota}$ denote the system of ODEs for the rules in $\DD_\iota$ in the compartment $\iota$ as explained in Section~\ref{SECT:ODE_SEM} without
considering, in the compartment $\iota'$ where the stochastic transition step takes place, the active reagents appearing in the left part $\Pat$ of the
stochastically applied rule. (If $\DD_\iota=\emptyset$ then $\Ode_{\iota}=\emptyset$.)
\item Apply the system of ODEs $\Ode_{\iota}$ to the biochemical reagents $\BR(\iota)$
 of the compartment for a time duration $\tau$.
\end{itemize}
\item Update the term $\ov{t}$ according to the right part $o$ of the chosen stochastic rule and to the applications of the systems of ODEs.
\end{enumerate}
\underline{\hspace{4cm}}\\
 \footnotesize{$^\dagger$ It may happen that no rule in $(\bigcup_{\iota \in I} \RR_\iota)\cup\NN$ is applicable.
In such cases the evolution of the system must be determined for some time $\tau$ according to the deterministic semantics only. In our implementation we
choose as $\tau$ the maximum time calculated by Gillespie's algorithm for each of the applicable biochemical rules in $\bigcup_{\iota \in I}\DD_\iota$.}
 \caption{Steps performed by an hybrid simulation iteration}
\label{fig:CWM-hybrid-step}
\end{figure}

A basic point of our hybrid approach is the criterion to determine, at each computation stage, the reductions to compute in the stochastic or in the
deterministic way. In this paper we have chosen simply to put a threshold on the number of possible reagents and on the speed of the reaction, but other
more sophisticated criteria should be investigated.

These two thresholds are named:
\begin{description}
	\item[$\phi$] defining a minimum rate to consider a rule deterministically,
	\item[$\psi$] defining a minimum quantity to consider the involved rule deterministically.
\end{description}
Notice that these thresholds must be hold during all deterministic evolution in order to validate the hybrid approach.

Given as input a term $\ov{t}$ to reduce, a rate threshold $\phi$ and a quantity threshold $\psi$, each iteration of the hybrid reduction semantics performs the four steps listed in Figure~\ref{fig:CWM-hybrid-step}.
For every label $\ell$, the subsets of $\BB$ and $\NN$ containing the rules with label $\ell$ are denoted by
$\BB_\ell$ and $\NN_\ell$, respectively. The first step identifies, for each compartment $\iota\in I$ (where $I$ is the set of all compartment indexes occurring in $\ov{t}$), two disjoint sets of biochemical rules, namely
$\DD_\iota$ (to be applied deterministically) and $\RR_\iota$ (to be applied, together with the rules in $\NN$, according to the stochastic method). The
second step selects, considering only the rules in $\bigcup_{\iota \in I} \RR_\iota \cup \NN$, the next rule to be applied stochastically. When the stochastic transition is chosen, we ``lock'' the reagents involved in such a reaction. They will not contribute to the ODEs evolution (since they are already active in the stochastic sense), and their product is added at the end of the stochastic time step. The third step computes a system of ODEs $\Ode_\iota$ for each compartment $\iota\in I$ and applies
the ODEs for the time duration selected by the stochastic step. The calculation are performed in the same units as the stochastic computation, namely in terms of number of molecules and time. The fourth step updates the terms according to the results of the simulation.

Note that during the deterministic step there was the implicit assumption that the evolution of the species involved in fast reactions, calculated with the ODEs, did not significantly alter the propensity of the slow reactions changing their priority with respect to fast reactions. Omission of this hypothesis requires a control on the evolutionary trajectory of species calculated by the ODEs in order to stop deterministic evolution if the priorities of the reactions have changed its structure. 


Before computing the next stochastic step after the deterministic one, the fractional numbers of molecules computed by the ODEs need to be converted into integer numbers. As suggested by \cite{vasudeva2004adaptive}, fractions can be handled by the two following methods: rounding to the nearest integer and probabilistic rounding. Rounding to the nearest integer may introduce a bias. For example, if the number of molecules is $200.3$, it will always be rounded to $200$ and this bias may be amplified during the whole simulation. With a probabilistic rounding, instead, a molecular number of $200.3$ is rounded up to $201$ with probability $30\%$, and down to $200$ with probability $70\%$. Using such a technique, the average number of the molecular counts becomes identical to the one obtained using deterministic calculations.

In general, if reactions are fast enough, the deterministic ODEs simulation approximate better the exact stochastic simulations. This is the idea behind the use of the threshold $\phi$. The use of $\psi$, instead, allows to prevent the rounding approximation error that may derive when we are dealing with species of few elements. Combined together, the thresholds $\phi$ and $\psi$ affect the level of approximation we want to use in our simulations. Notice that with $\phi=+\infty$ all reactions will be considered \emph{too slow} and the simulation will be computed with the purely stochastic method.

\subsection{Running Example: Hybrid Simulations}\label{lab_toy_hybrid}
Hybrid simulations of the toy example presented in the previous section were performed by using thresholds $\phi = 0.5$ and $\psi = 10$. In Figure~\ref{fig:hybr_toy} we report two runs of the hybrid simulations showing two different evolutions of the species.

\begin{figure*}[t]
\centering
  \includegraphics[width=.49\textwidth]{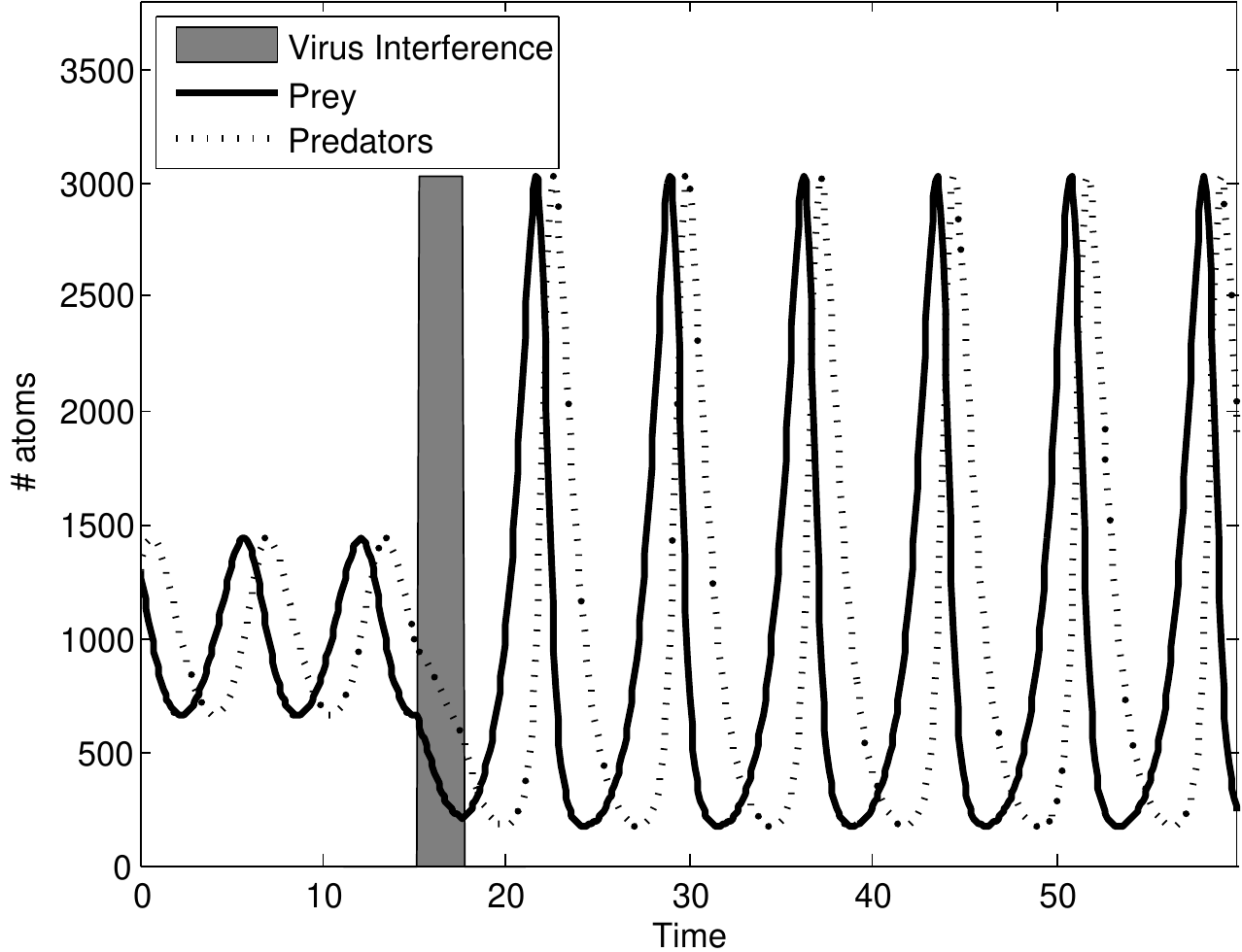}
 \includegraphics[width=.49\textwidth]{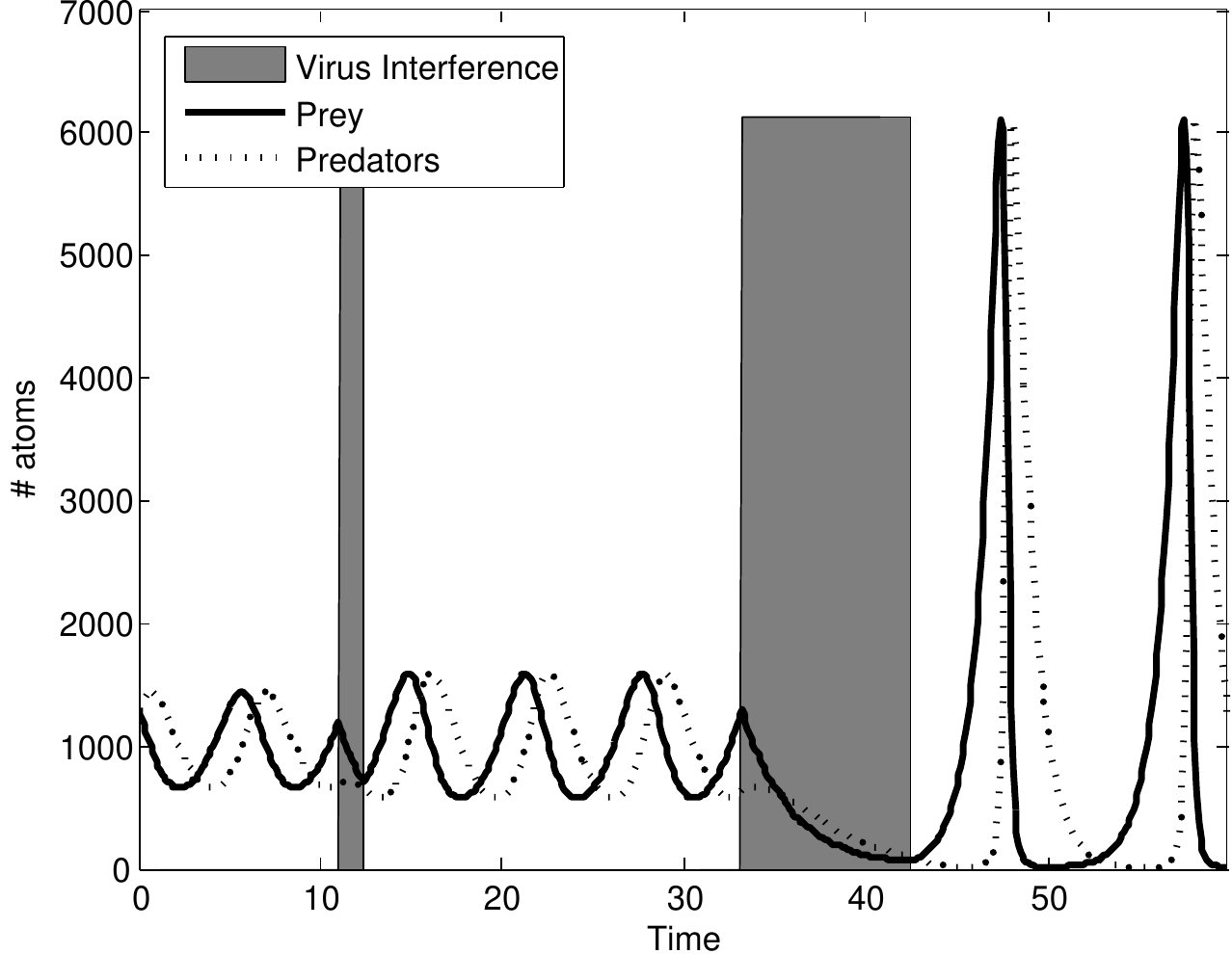}
  \caption{Two different runs of the hybrid simulations showing the different behaviour of the dynamics of the species
  inside the compartment \textit{IN} depending on the virus epidemic factor}
\label{fig:hybr_toy}
\end{figure*}

Notice that until the viral epidemic factor does not reach the compartment, the prey-predator dynamics (rules ($B_1$),($B_2$) and ($B_3$) of the model in Figure \ref{fig:LKvir-rules}) is treated deterministically since high propensities drive the sub-system. Conversely, a rare viral outbreak introduces a stochastic change inside the compartment influencing the amount of preys (and, consequently, the predators dynamics) through the rule ($B_4$).

The long period of evolution, waiting for the epidemic rare event given by the rules ($N_1$) and ($N_2$), allows an efficient deterministic computation. 
In this case, other approximation techniques, such as $\tau$-leaping, could not take the full advantage of ODEs since in such an oscillating scenario the propensities of the fast rules change rapidly and repeatedly over the time intervals and the length of leap is bound by the frequency of the oscillations.

A comparison of the computational time needed to perform $100$ runs using the hybrid method versus
the stochastic simulation technique provided a dramatic improvement on the computational effort. The relative speed-up, measured as the ratio of the computational run time of the stochastic simulations to that of hybrid simulations, was about $80$.

\subsection{Further Examples}

In this section the results of the hybrid algorithm,
when compared with the stochastic simulation algorithm, are presented on two benchmark models collected from the literature. The two models are a simple crystallisation system~\cite{SK05,haseltine2002approximate,GCPS06} and a model of intracellular viral infection~\cite{SYSY02,haseltine2002approximate,GCPS06} whose quantitative behaviours have been accurately reproduced by the hybrid simulation.

The first benchmark is a simplified model for the crystallisation of species $A$, consisting of two reactions, one occurring
many more times than the other. The rules  and rates are taken from \cite{SK05}, Table~3. Our partitioning scheme, setting the thresholds $\phi = 6 \cdot 10^{-4}$ and $\psi = 100$, allow to classify the fast reaction as continuous and the other as discrete.

Figure \ref{crist} compares the averages of species $A$ and $B$ computed by $100$ runs of the hybrid and pure stochastic simulation algorithms. In this example an implicit step (instead of an explicit one) for the ODEs solver could further improve the approximation.

\begin{figure*}[t]
\centering
\subfigure[Simple Crystallisation results]{
  \includegraphics[width=.43\textwidth]{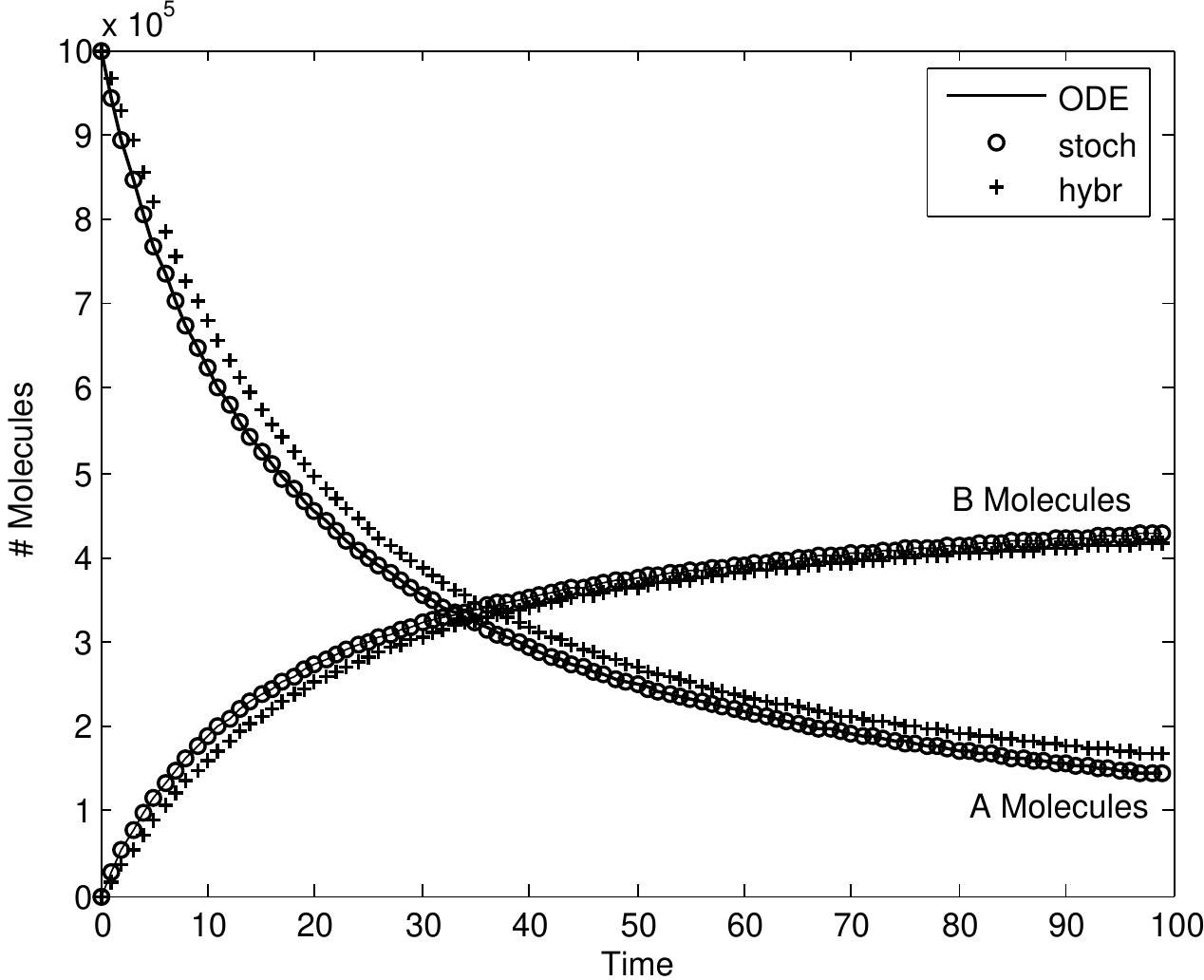}
\label{crist}}
\subfigure[Virus infection results]{
 \includegraphics[width=.4\textwidth]{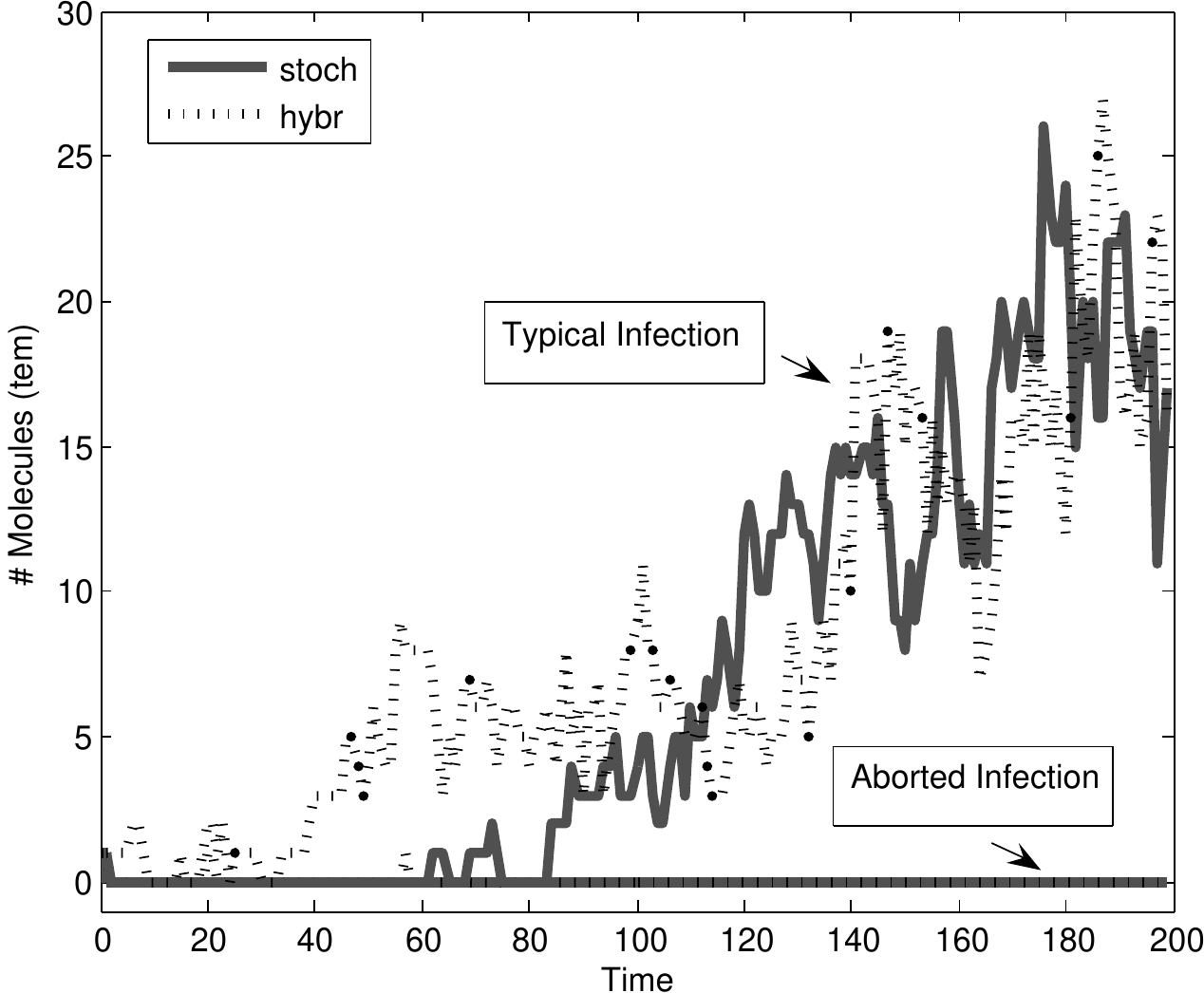}
  \label{virus}
}
  \caption{Comparison of the hybrid and pure stochastic simulations on the examples. Fig. (a) shows the average of $100$ runs of the hybrid and pure stochastic simulations compared with the ODE solution. Fig. (b)  shows two exemplificative solutions computed by the hybrid and pure stochastic simulations.}
\label{fig:examples}
\end{figure*}

The second benchmark is a general model of the infection of a cell by a virus. This model includes the genomic and template viral nucleic acids (\emph{gen} and \emph{tem} respectively) and the viral structural protein (\emph{struct}). This model has two interesting features: (i) the three components of the model exhibit fluctuations that vary by several orders of magnitude; and (ii) the model solution exhibits a bimodal distribution either a ``typical'' infection in which all species become populated, or an ``aborted'' infection in which all species are eliminated from the cell. The rules and rates are taken from \cite{SYSY02}, Table~1.

Figure \ref{virus} shows two exemplificative solutions computed by the hybrid (using $\phi = 10$ and $\psi = 25$) and pure stochastic simulations corresponding to a ``typical'' and ``aborted'' infection of species \emph{tem}.

Based on $100$ runs of the hybrid and pure stochastic simulation algorithms the speedup was around $500\%$ for the first benchmark and around $200\%$ for the second benchmark.
Note, however, that the presence of data structures to handle compartmentalisation introduces in our simulator an overhead which can reduce the efficiency differences between the two algorithms with respect to other hybrid simulators.

\section{A Real Model of Different Cellular Fate}\label{sect:tat}

To assess the soundness and efficiency of our hybrid approach on a real biological
problem we decided to apply it to a well known system where stochastic effects
play a fundamental role in determining its development: the HIV-1 transactivation mechanism.

After a cell has been infected, the retrotransposed DNA of the virus is integrated in the host genome and it begins its transcription into \textit{mRNA} and
then the translation to yield viral proteins; the initial speed of this mechanism, however, is fairly slow. The speedup of the viral production process is
determined by a regulation system driven by the viral protein \textit{TAT}: this protein is capable of binding cellular factors of the host to produce the
\textit{pTEFb} complex which in its acetylated form is able to bind to the integrated viral genome and speed up the transcription machinery, thus ending
in more viral proteins and, therefore, more \textit{TAT}, determining a positive loop.

The time scale during which this loop is triggered is affected by several factors e.g. the initial low \textit{TAT} production and the rate of its degradation,
the equilibrium between the active (acetylated) and inactive form of \textit{pTEFb}. As a consequence, the stochastic fluctuations in this
events are considered pivotal in determining when viral proteins are produced in a sufficient quantity to determine cellular lysis and viral spreading.
Since HIV is known to stay dormant and inactive in some types of cells and since the time between the infection and the high viral production rate related to the
active phase of AIDS is variable, this transactivation mechanism is of great interest.
Viral latency is also believed to be the cause of the persistent low level viremia observed in patients
treated with antiretroviral therapies, therefore understanding its molecular bases is fundamental in order
to be able to circumvent it and hopefully find a way to completely eradicate the virus avoiding lifelong therapies~\cite{Dahl2010}.

We decided to follow the direction taken in a previous study about this system (see~\cite{WBTAS05}), in which an experimental setting is developed where a
fluorescent protein, \textit{GFP}, is the only one encoded by an engineered viral genome, along with \textit{TAT}. In~\cite{WBTAS05} they were able to identify
different evolutions in the \textit{GFP} level over time: cellular clones with exactly the same genome showed two different behaviours, one produced a high
quantity of \textit{GFP} (they called it ``bright'') and the other one with very little \textit{GFP} (``off''). This work also reported that a purely
stochastic simulation was able to individuate this bistability; a later work (see~\cite{GCPS06}) confirmed these results performing purely stochastic and
mixed deterministic-stochastic simulations.

\begin{figure}[t]
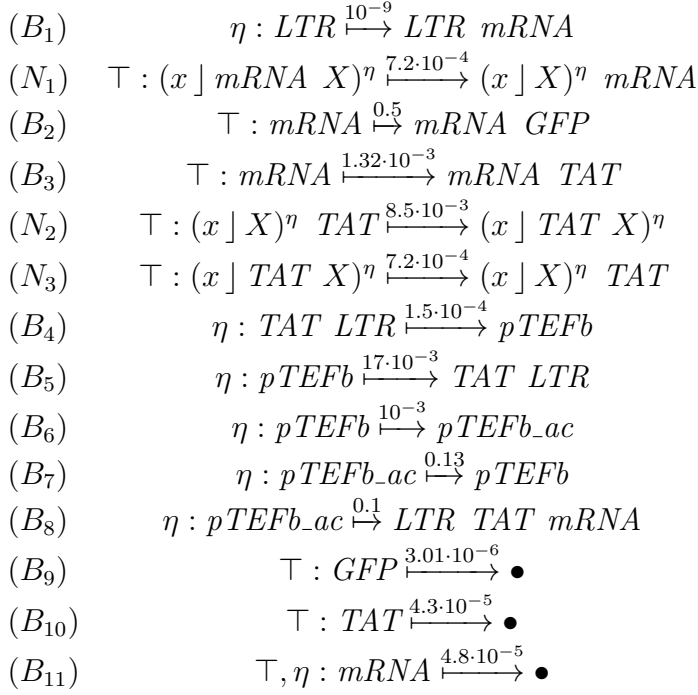

\hrule
$
\begin{array}{lc}
(B_1) & \eta : \textit{LTR} \xsrewrites{10^{-9}} \textit{LTR} \conc \textit{mRNA} 
\\
(N_1) & \TOP : (x \into \textit{mRNA} \conc X)^\eta \xsrewrites{7.2 \cdot 10^{-4}} (x \into X)^\eta \conc \textit{mRNA} 
\\
(B_2) & \TOP : \textit{mRNA} \xsrewrites{0.5} \textit{mRNA} \conc \textit{GFP} 
\\
(B_3) & \TOP : \textit{mRNA} \xsrewrites{1.32 \cdot 10^{-3}} \textit{mRNA} \conc \textit{TAT} 
\\
(N_2) & \TOP : (x \into X)^\eta \conc \textit{TAT} \xsrewrites{8.5 \cdot 10^{-3}} (x \into \textit{TAT} \conc X)^\eta 
\\
(N_3) & \TOP : (x \into \textit{TAT} \conc X)^\eta \xsrewrites{7.2 \cdot 10^{-4}} (x \into X)^\eta \conc \textit{TAT} 
\\
(B_4) & \eta : \textit{TAT} \conc \textit{LTR} \xsrewrites{1.5 \cdot 10^{-4}} \textit{pTEFb} 
\\
(B_5) & \eta : \textit{pTEFb} \xsrewrites{17 \cdot 10^{-3}}  \textit{TAT} \conc \textit{LTR}
\\
(B_6) & \eta : \textit{pTEFb} \xsrewrites{10^{-3}} \textit{pTEFb\_ac} 
\\
(B_7) & \eta : \textit{pTEFb\_ac} \xsrewrites{0.13} \textit{pTEFb} 
\\
(B_8) & \eta : \textit{pTEFb\_ac} \xsrewrites{0.1} \textit{LTR} \conc \textit{TAT} \conc \textit{mRNA} 
\\
(B_9) & \TOP : \textit{GFP} \xsrewrites{3.01 \cdot 10^{-6}} \emptyseq 
\\
(B_{10}) & \TOP : \textit{TAT} \xsrewrites{4.3 \cdot 10^{-5}} \emptyseq 
\\
(B_{11}) & \TOP, \eta: \textit{mRNA} \xsrewrites{4.8 \cdot 10^{-5}} \emptyseq 
\end{array}
$
 \caption{ \CalculusShortName\ rules for the TAT transactivation system}
\label{fig:TAT-rules}
\end{figure}

Since \short\ systems are able to represent compartments, we slightly modified the original set of rules used in these works to explicitly represent the
cytoplasm and the nucleus of an infected cell; all the kinetic rates were maintained, the one for \textit{TAT} nuclear import has been determined from
the literature (see~\cite{NLRB09}). The set of rules we adopted is given in Figure~\ref{fig:TAT-rules}, where we refer to the cytoplasm as the $\TOP$
compartment while $\eta$ is the label used for the nucleus. As regards the rules:
 $(B_1)$ 
 represents the slow basal rate of viral \textit{mRNA} transcription;
 $(N_1)$ 
 describes the \textit{mRNA} export from the nucleus to the cytoplasm;
 $(B_2)$ 
 and $(B_3)$ 
 express the translations of this \textit{mRNA} into
\textit{GFP}  and \textit{TAT} proteins, respectively;
 $(N_2)$ 
 and $(N_3)$ 
 represent the nuclear import and export of \textit{TAT};
 $(B_4)$ and $(B_5)$ model
the binding and unbinding of \textit{TAT} with (not represented here) host cellular factors and the viral genome portion \textit{LTR} that forms \textit{pTEFb} which, when
acetylated (by rule $(B_6)$) 
 determines an higher transcriptional activity, which is represented in $(B_8)$ 
 by the unbinding that releases
 \textit{LTR} and \textit{TAT} and creates an \textit{mRNA} molecule (note the higher rate with respect to
   $(B_1)$); 
   $(B_7)$  
   represents the \textit{pTEFb} deacetylation and
 $(B_9)$, 
 $(B_{10})$ 
 and $(B_{11})$ 
 model the degradation processes of the proteins and the \textit{mRNA} (note that \textit{mRNA} degrades both in the nucleus
and in the cytoplasm, the other proteins only degrade in the cytoplasm; also note how the compartment labelling mechanism allows to express this fact in a
simple and elegant way).

\begin{figure*}[h]
\centering
\includegraphics[width=.49\textwidth]{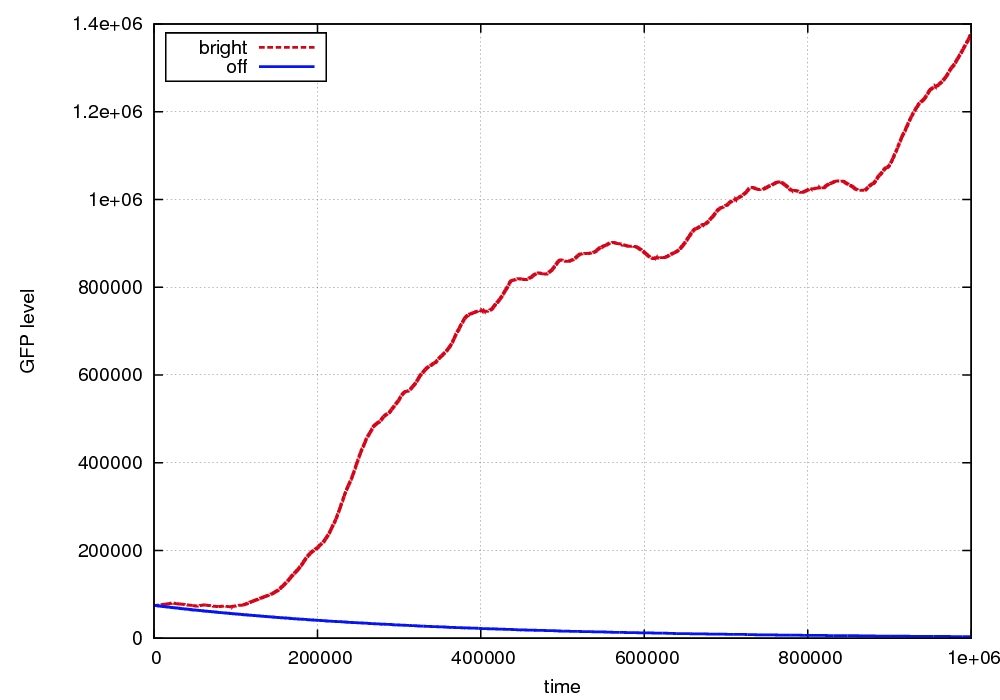}
\includegraphics[width=.49\textwidth]{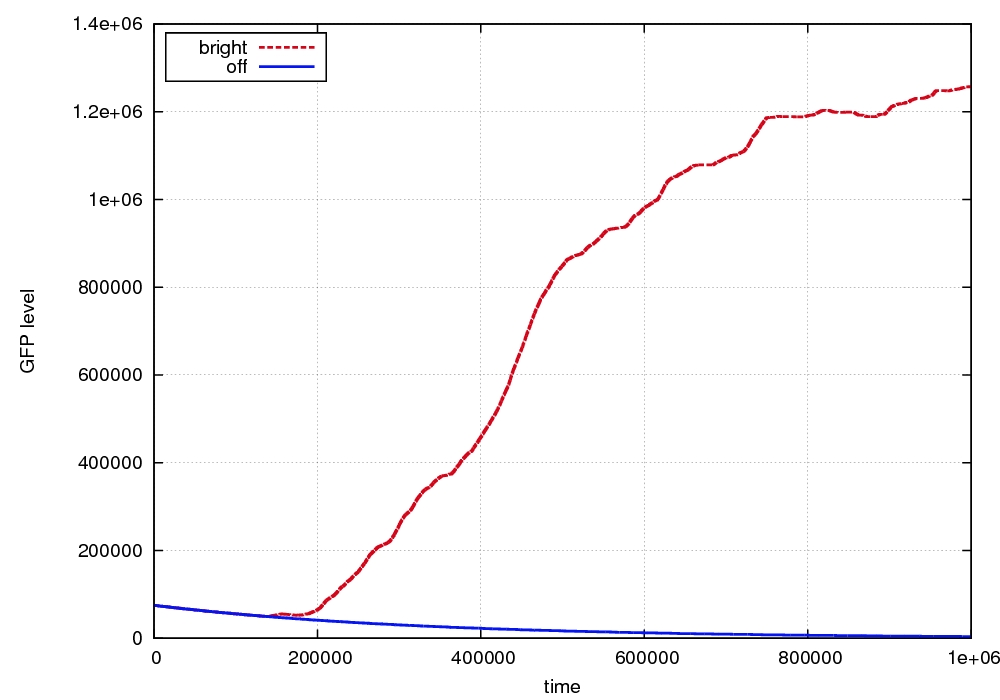}
\caption{Two different simulations pure stochastic (on the left) and hybrid (on the right) started with the same parameters: ``bright'' and ``off'' behaviour}
\label{fig:giano}
\end{figure*}

\begin{figure*}[h]
\centering
\includegraphics[width=.6\textwidth]{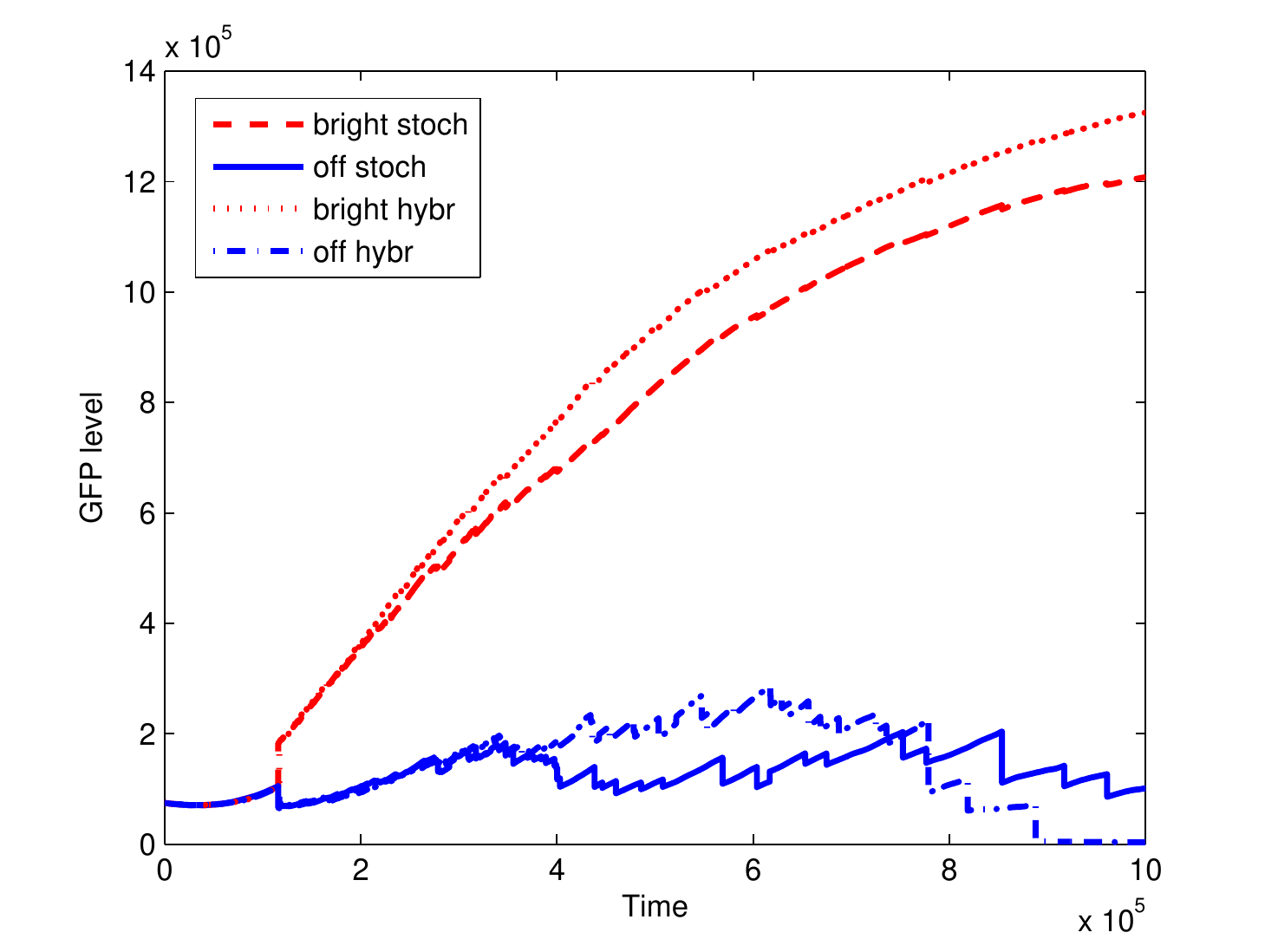}
\caption{Centroids comparison of the two clustered results obtained from $200$ simulations pure stochastic and hybrid showing the ``bright'' and ``off'' behaviour}
\label{fig:tat_clusters}
\end{figure*}

We performed $200$ purely stochastic simulations (i.e. setting $ \phi = +\infty$) and $200$ hybrid simulations (using $ \phi = 0.5$ and $\psi = 10$).
The initial term of our simulations is represented by the
\short\ term
$$
75000 \timesSilent \textit{GFP} \conc 5 \timesSilent \textit{TAT} \conc (\emptyseq \into \textit{LTR})^\eta ,
$$
 while the time interval of our simulations has been fixed to
$10^6$ seconds (the same parameters are used in~\cite{WBTAS05,GCPS06}). Both our stochastic and hybrid simulations clearly showed the two
possible evolutions of the system which correspond to the ``bright'' and the ``off'' cellular populations (in order to display the double destiny, almost
all the biochemical rewrite rules have to be simulated with the stochastic approach). Figure~\ref{fig:giano} shows two exemplificative runs of ``bright'' and ``off'' behaviour resulting from pure stochastic and hybrid simulations. Figure~\ref{fig:tat_clusters} reports the centroids of two clusters obtained using the k-means algorithm \cite{hartigan1979k} on the $200$ runs performed using the pure stochastic and hybrid simulations using a sampling step size $\Delta t = 10$. In both cases a $4\%$ of simulations showed an ``off'' behaviour and this results confirm the statistical analysis reported in~\cite{GCPS06}. The stochasticity of the centroids corresponding to the ``off'' behaviour is due to the few points belonging to the cluster.
As could be seen in Figures~\ref{fig:giano} and \ref{fig:tat_clusters}, the hybrid simulations
are comparable to the purely stochastic ones and, even with the relatively high thresholds used in this particular case, the hybrid simulations were
computationally more efficient (almost 40\% faster).\footnote{Comparisons are made using the same stochastic engine, in both cases with no particular optimisation.}

\chapter{Conclusions and Related Works}
\label{conclu}
This Chapter draws our concluding remarks, starting with a brief description of the tool developed for running CWC simulations.

\section{CWC Simulator}

The CWC simulator is available, as an open source project, from~\cite{HCWC_SIM}. It features:
\begin{itemize}
\item Gillespie-like simulations over CWC models,
\item Hybrid stochastic/deterministic semantics,
\item Dynamic cutoff-based semantical reconfiguration,
\item Multiple rating functions (mass action, Michaelis-Menten, Hill),
\item Multiple instances of simulation,
\item Online statistics (mean and variance under Grubb's test, quantiles, clustering)
\end{itemize}

The CWC simulator has also bean enriched in order to feature optimization for parallel multi-core computations based on the FastFlow framework (see~\cite{ACDDTT_PDP11,ACDDSSTT_HIBB11,ACCDDSSTT_BMRI}). A substantial increase in performance has been shown for both the shared-memory implementation and the distributed version.

\subsection{The Graphical User Interface}

The back-end tool
can steer the CWC simulation-analysis pipeline  either via a command line or a graphical user
interface. This latter makes it possible to design the biological
model, to run simulations and analysis and to view partial results
during the run. Also, the front-end  makes it possible to control the
simulation workflow from a remote machine. Two screenshots of the
graphical front-end are reported in Fig. \ref{fig:screenshot}.

\begin{figure}
\center
  \subfigure{
 \includegraphics[width=0.45\textwidth]{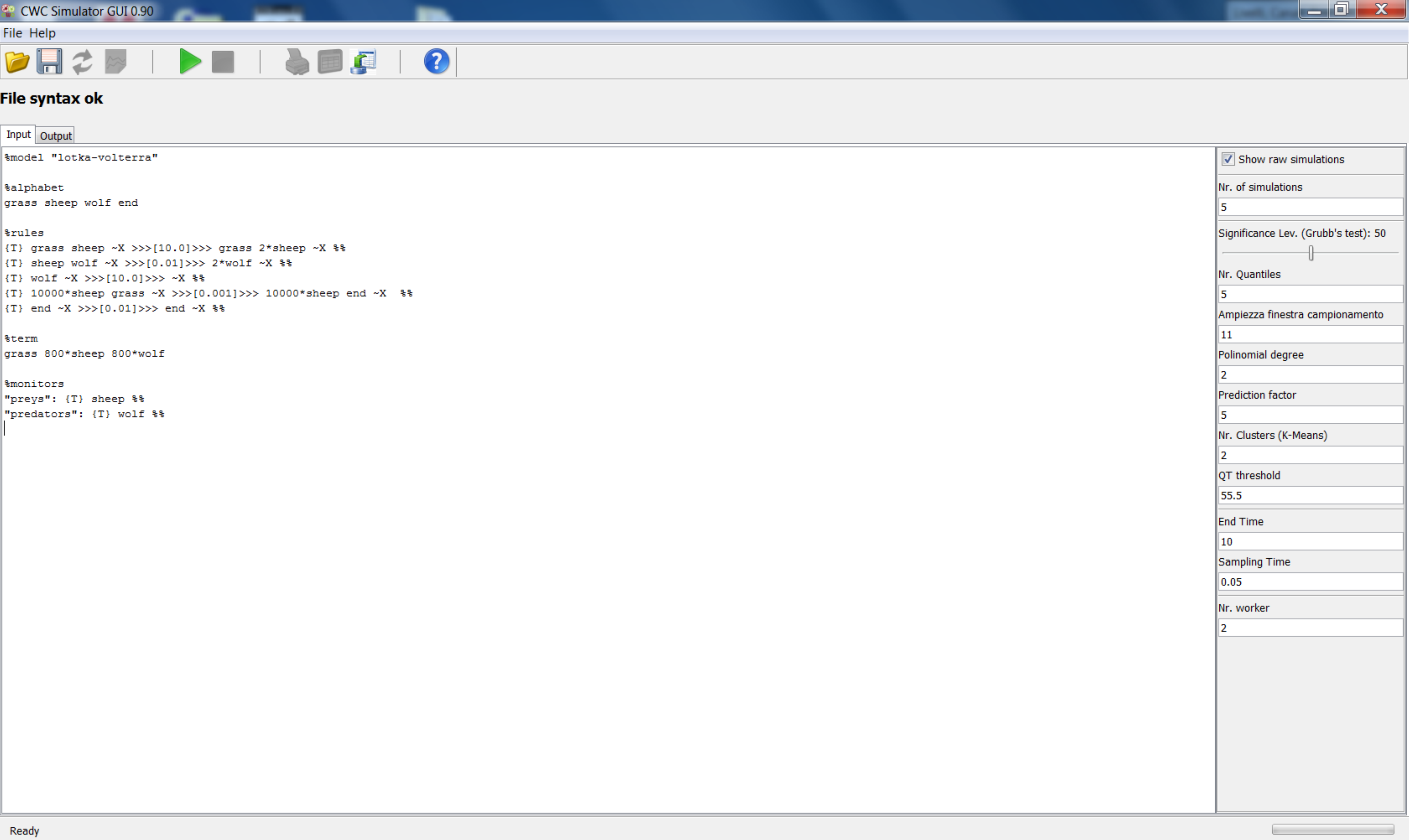}
}
\subfigure{
 \includegraphics[width=0.45\textwidth]{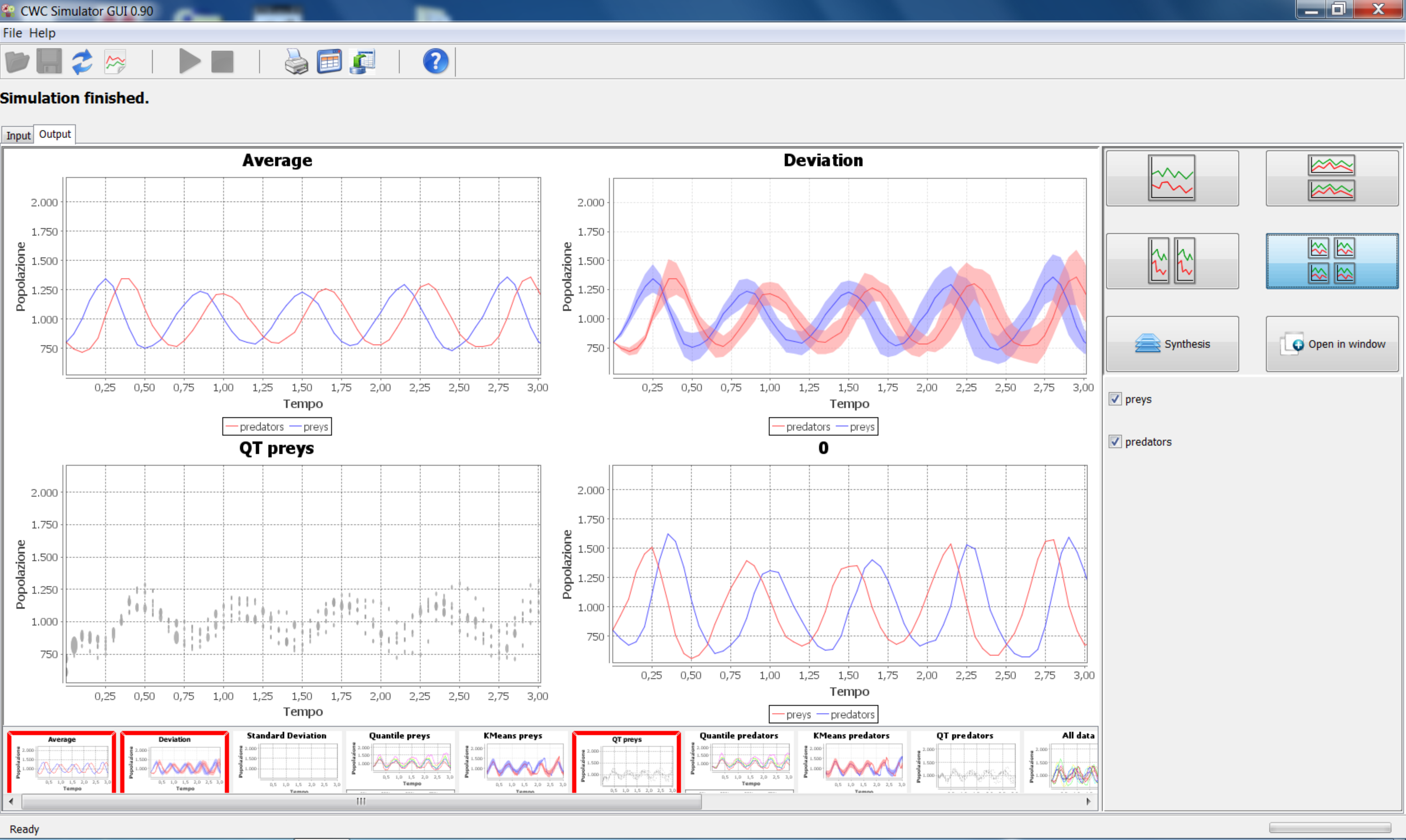}
}
  \caption{Screenshots of the simulation tool interface\label{fig:screenshot}}
\end{figure}

\section{Systems Biology}

This work reports on the use of CWC to simulate and understand
biological behaviour which is still unclear to biologists.
It also constitues an attempt to predict biological
behaviour and give some directions to biologists for future
experiments.

\subsection{Transporters in the AM Symbiosis}
CWC has been particularly suitable to model the AM symbiosis (where
substances flow through different cells) thanks to its feature of
modelling compartments, and their membranes, in a simple and natural
way. Our simulations have confirmed some of the latest
experimental results about the LjAMT2;2 transporter~\cite{GUE09}
and also support some of the hypotheses about the energy source
for the transport and the meaning of LjAMT2;2 overexpression
in arbusculated cells. These are the first steps towards a complete
simulation of the symbiosis and open some interesting paths that
could be followed to better understand the nutrient exchange.

As demonstrated by heterologous complementation experiments in
yeast, AM-specific plant transporters~\cite{Har02,GUE09}
show different uptake efficiencies under varying pH conditions.
Looking on this and the results from the simulations with
different pH conditions in the periarbuscular space new
experiments for a determination of the uptake kinetics
($K_m$-values) under a range of pH seems mandatory. Another
conclusion from the model is that, for accurate simulation, exact
in vivo concentration measurements have to be carried out even
though at the moment this is a difficult task due to technical
limitations.

As shown by various studies, many transporters on the plant side
show a strong transcriptionally regulation and the majority of
them are thought to be localized at the plant-fungus
interface~\cite{GUE09a}. Consequently a quantification of these
proteins in the membrane will be a prerequisite for an accurate
future model.

It is known that some of these transporters (e.g. PT4 phosphate
transporters) obtain energy from proton gradients established by
proton pumps (Figure~\ref{fig:symbschem}) whereas others as the
LjAMT2;2 and aquaporines~\cite{Gon06} use unknown energy supplies
or are simply facilitators of diffusion events along gradients.
Thus they conserve the membranes electrochemical potential for the
before mentioned proton dependent transport processes. To make the
story even more complex some of these transporters, which are known
to be regulated in the AM symbiosis (e.g. aquaporines), show
overlaps in the selectivity of their
substrates~\cite{Hol05,JMZ04,Tye02,NT00}. Consequently the integration
of the involved transporters will be a necessary
and challenging task in a future model. It is quite probable that also transporters
for other macronutrients (potassium and sulfate) which might be
localized in the periarbuscular membrane influence the
electrochemical gradients for the known transport processes.

With the ongoing sequencing work~\cite{MGHL07} on the arbuscular
model fungus {\it Glomus intraradices} transporters on the fungal
side of the symbiotic compartment are likely to be identified and
characterized soon. Data from such future experiments could be
integrated in the model and help to answer the question whether
transporter mediated diffusion or vesicle based excretion events
lead to the release of ammonium into the periarbuscular
space~\cite{Cha06}.

Further questions about the plant nutrient uptake and competitive
fungal reimport process~\cite{Cha06,Bal07} might be answered.
Based on the transport properties of orthologous transporters from
different fungal and/or plant species, theories could be developed
which explain different mycorrhiza responsiveness of host plants;
meaning why certain plant-AM fungus combinations have rather a
disadvantageous than a beneficial effect for the plant.

In future research on AM and membrane transport processes in
general values from measurements of concentrations or kinetics will be included in the model and will show how the whole
system is influenced by these values. Vice versa simulations could
be the base for new hypotheses and experiments.
Consequently simulated variations could help to
assess which type of experiments might be the most promising ones
to do.

\section{Computational Ecology}
The long-term goal of Computational Ecology is the development of methods to predict the response of ecosystems to changes in their physical, chemical and biological components. Computational models, and their ability to understand and predict the biological world, could be used to express the mechanisms governing the structure and function of natural populations, communities, and ecosystems. Until recent times, there was insufficient computational power to run stochastic, individually-based, spatially explicit models. Today, however, some of these techniques could be investigated~\cite{PP12}.

The main objective of Computational Ecology is twofold. First, we aim at discovering new models and theories within
Computer Science inspired by the natural world, and at producing techniques and tools to
deal with much more complex problems than those addressable with
current technology. Second, we plan to provide ecologists with an environment for attacking
problems at a system level, not addressable without using Information Technology. This
environment will provide ecologists with modelling, analysis and simulation tools capable of
handling complex behaviour and of representing emerging properties.

Calculi developed to describe process interactions in a compartmentalised setting are well suited for the description and analysis of the evolution of ecological systems. The topology of the ecosystem can be directly encoded within the nested structure of the compartments. These calculi can be used to represent structured natural processes in a greater detail, when compared to purely numerical analysis. As an example, food webs can give rise to combinatorial interactions resulting in the formation of complex systems with emergent properties (as signalling pathways do in cellular biology), and, in some cases, giving rise to chaotic behaviour.

Also, \emph{ecological niches} describe how organisms or populations respond to the distribution of resources and competitors~\cite{LB98}, and arise as the spatial sectors where organisms and populations tend to distribute, often forming geographical clusters of dominant species. Using the Spatial Calculus of Wrapped Compartments~\cite{BCCDSST11,spatial_COMPMOD11}, we might be able to extend our analysis to the case of ecological niches and show how they  could emerge from complex ecological interactions and then become a fundamental characteristic of an ecosystem.

As a final remark about ecological modelling with a framework based on stochastic rewrite rules, we underline an important compositional feature. How can we test an hypothetical scenario where a grazing species is introduced in the model of our case study? A possibility could be to represent the grazing species with a new CWC atom (e.g. $s$) and then just add the new competitive rules to the previously validated model (e.g. the rule $P: s \conc (h \conc x \into X)^c \srewrites{} s \conc ( x \into X)^c$). Changing in the same sense a model based on ordinary differential equations would, instead, result in a complete new model were all previous equations should be rewritten.

\section{Related works}

Computer simulations play
a rather unique role in Ecology, compared with other sciences, and there are several reasons for the extensive growth in
the number of simulation--based studies in ecological
mathematical modelling. In particular, the
problem is that, although a field experiment is a common
research approach in ecology, replicated experiments
under controlled conditions (which is a cornerstone of
all natural sciences providing information for theory
development and validation) are rarely possible because
of the transient nature of the environment: just consider, for instance,
the impossibility to reproduce the same weather
pattern for a repeated experiment in an ecological study. We also mention it here that
large--scale ecological experiments are costly and, in the situation
when consequences are poorly understood, can have
adverse effects on some species, put in danger the biodiversity of the ecosystem,
and may even pose a threat to human well--being.
Capturing the complexity of real systems through tractable
experiments is therefore logistically not feasible.
Mathematical modelling and computer simulations
create a convenient ``virtual environment'' and hence
can provide a valuable supplement, or sometimes even
an alternative, to the field experiment.

\subsection{Ecological Modelling}
It has long been recognised that numerical modelling and computer simulations can be used
as a powerful research tool to understand, and sometimes to predict, the tendencies and
peculiarities in the dynamics of populations and ecosystems. It has been, however, much
less appreciated that the context of modelling and simulations in ecology is essentially different
from those that normally exist in other natural sciences~\cite{PP12}.

Ecology became a quantitative and
theory--based science since the seminal studies by Lotka \cite{Lot25}, Volterra \cite{Vol26} and Gause \cite{Gau34}, who were the first to use mathematical
tools for ecological problems. General principles of ecosystem
organisation were later refined and systematised by
Odum \& Odum~\cite{OO53}, while the mathematical theory was
further developed by Skellam~\cite{Ske51} and Turing~\cite{Tur52}, who
emphasised the importance of the spatial aspect.
The bright mathematical ideas of those seminal works
sparked a huge fire. The last quarter of the twentieth century
saw an outbreak of interest in mathematical ecology
and ecological modelling \cite{WO76,OMWM89,DMW91,MWD93,SLF95,DMW98}. Especially over the last
decade, more and more complicated models have been
developed with a generic target to take into account
the ecological interactions in much detail and hence to
provide an accurate description of ecosystems dynamics.
Owing to their increased complexity, many of the models
had to be solved numerically, a development that was
inspired and made possible by the simultaneous advances in computer
science and technology.

The modelling approaches can be very different in terms of the mathematics used and depending on the goals of the study, and there are several ways to classify them. For instance, there is an apparent difference between statistical models \cite{Cza98} and ``mechanistic'' models \cite{Mur93,OL01}, although simulation--based studies may sometimes include both. Taken from another angle, two qualitatively different modelling streams are rule--based approaches (such as individual-based models and cellular automata) \cite{GR05} and equation--based ones.

Another way to sort out the models used in ecology is to consider the level of complexity involved. Depending on the purposes of the ecological study, there have been two different streams in model building \cite{May74}. In case the purpose is to predict the system's state (with a certain reasonable accuracy), the model is expected to include as many details as possible. This approach is often called \emph{predictive modelling}. The mathematical models arising in this way can be very complicated and analytically intractable in an exact way (in these cases the model can still be partially analysed via a limited number of runs of computer simulations) \cite{PCSKR96,NCO11,Pas05,GW92}. Alternatively, the purpose of the study can be to understand the current features of the ecosystem, e.g. to identify the factors responsible for a population decline or a population outbreak, but not necessarily to predict their development quantitatively. We will call this approach a \emph{conceptual modelling}. In this case, the corresponding models can be pretty simple, even if their exact solutions are still not always possible; therefore, they often have to be solved by simulations as well~\cite{OMWM89,DMW91,SLF95}.
These two streams of theoretical--ecological research can be clearly seen in the literature, even though it is not always straightforward to distinguish between them as sometimes simple models may show a certain predictive power and, on the contrary, complicated ones are used for making a qualitative insight into some subtle issues.

\subsection{Formal Computational Frameworks}
As P-Systems~\cite{Pau00,Pau02} and the Calculus of Looping Sequences (CLS, for short)~\cite{BMMT07}, the Calculus of Wrapped Compartments is a framework modelling topological compartmentalisation inspired by biological membranes, and with a semantics given in terms of rewrite rules.

CWC has been developed as a simplification of CLS, focusing on stochastic multiset rewriting. The main difference between CWC and CLS consists in the exclusion of the sequence operator, that constructs ordered strings out of the atomic elements of the calculus. While the two calculi keep the same expressiveness, some differences arise on the way systems are described. On the one hand, the Calculus of Looping Sequences allows to define ordered sequences in a more succinct way (for examples when describing sequences of genes in DNA or sequences of amino acids in proteins).\footnote{An ordered sequence can be expressed in CWC as a series of nested compartments, ordered from the outermost compartment to the innermost one.} On the other hand, CWC reflects in a more realistic way the fluid mosaic model of the lipid bilayer (for example in the case of cellular membrane description, where proteins are free to float), and, the addition of compartment labels allows to characterise the properties peculiar to given classes of compartments. Ultimately, focusing on multisets and avoiding to deal explicitly with ordered sequences (and, thus, variables for sequences) strongly simplifies the pattern matching procedure in the development of a simulation tool.

The Calculus of Looping Sequences has been extended with type systems in~\cite{ADT09,DGT09,DGT09b,BDMMT10,BDGT12}. As an application to ecology, stochastic CLS (see~\cite{BMMTT08}) is used in~\cite{BCBMMR10} to model population dynamics.

P-Systems have been proposed as a computational model inspired by biological structures. They are defined as a nesting of membranes in which multisets of objects can react according to pre defined rewrite rules. Maximal-parallelism is the key feature of P-Systems: at each evolution step all rewrite rules, in all membranes, are applied as many times as possible. Such a feature makes P-Systems a very powerful computational model and a versatile instrument to evaluate expressiveness of languages. However, it is not practical to describe stochastic systems with a maximally-parallel evolution: exact stochastic simulations based on race conditions model systems evolutions as a sequence of successive steps, each of which with a particular duration modelled by an exponential probability distribution.

There is a large body of literature about applications of P-Systems to ecological modelling. In~\cite{CCMPPPS11,CCMPPS09,CCPSM09}, P-Systems are enriched with a probabilistic semantics to model different ecological systems in the Catalan Pyrenees. Rules could still be applied in a parallel fashion since reduction durations are not explicitly taken into account. In~\cite{BCPM07,BCPM08,BCPM10}, P-Systems are enriched with a stochastic semantics and used to model metapopulation dynamics. The addition of \emph{mute rules} allows to keep a form of parallelism reducing the maximal consumption of objects.

While all these calculi allow to manage systems topology through nesting and compartmentalisation, explicit spatial models are able to depict more precise localities and \emph{ecological niches}, describing how organisms or populations respond to the distribution of resources and competitors~\cite{LB98}. The spatial extensions of CWC~\cite{BCCDSST11}, CLS~\cite{BMMP11} and P-Systems~\cite{BMMPT11} could be used to express this kind of analysis allowing to deal with spatial coordinates.

\subsection{Hybrid Models}

Stochastic formulation of chemical kinetics is mainly based on Gillespie's algorithm~\cite{G77}, which explicitly accounts for the individual reactive
collisions among the molecules. However, it is problematic to use exact simulation methods to study systems containing a large number of molecules affected by fast reactions due to the computational cost of accounting for individual molecular collisions.

The problem of efficient simulation of systems involving reactions varying across multiple scales of time and molecular concentrations employing a mixed
stochastic--deterministic method to approximate system dynamics is not new, and has been already addressed by several recent studies. Gillespie, in
\cite{gillespie2001approximate}, presented the ``$\tau$--leap'' method, an approximate technique for accelerating stochastic simulation, in which the
occurrence of some fast reactions can be eliminated by taking time steps that are larger than a single reaction. An improved procedure for selecting the
value of $\tau$ has also been presented in \cite{cao2006efficient}.

Haseltine and Rawlings~\cite{haseltine2002approximate} partition the system into the
subsets of \textit{slow} and \textit{fast} reactions, and approximate the fast reactions either deterministically or as Langevin equations. In the method
of Rao and Arkin~\cite{rao2003stochastic}, some of the reactions are explicitly simulated with Gillespie's algorithm whereas others are described by
random variables distributed according to the probability density functions at quasi-stationary state. The last two methods require direct intervention of
the modeller to partition the system into reaction sets covering different time and concentration regimes.
Similarly to our hybrid approach Salis and Kaznessis~\cite{SK05} proposed two parameters to define how many reactions occur within a time step and how fine grained the species must be to appear continuous-valued. They approximate fast reactions using a chemical Langevin equation. Our method, instead, simplifies the stochastic integration ignoring
the fluctuations in fast reaction dynamics using ODEs.

Bortolussi and Policriti~\cite{BP10} provide a Stochastic Concurrent Constraint Programming (sCCP) algebra with a semantics based on hybrid automata combining discrete and continuous steps. A similar technique is developed in~\cite{BP09} for the Stochastic $\pi$-Calculus. In~\cite{SHB10}, a hybrid analysis technique, combining stochastic
simulations with ODEs is presented in the context of PEPA precess algebra.

In~\cite{GHB08}, the HYPE process algebra, developed to model hybrid systems in which the continuous behaviour of a subsystem
does not need to be understood in advance of the modelling process, is used to model the repressilator genetic regulatory network.

In~\cite{FOA11} a hybrid technique, computing ODE with the Runge-Kutta numerical
approximation method, is adapted
in the Real-Time Maude rewriting logic.

The state of the art approaches, methods and tools in hybrid modelling  cut down the computational cost of large stochastic exact simulations. However, they introduced new forms of complexities. In particular: (i) when different scales are taken into consideration, a deep consistency study on the system under analysis should be carried out, (ii) new parameters have to be defined to control the degree of the approximation, (iii) there is no commonly understood policy to partition the set of rules.

\bibliographystyle{plain}
\bibliography{fmb}
\end{document}